\definecolor{red}{rgb}{0.7,0.15,0.15}
\definecolor{green}{rgb}{0,0.5,0}
\definecolor{blue}{rgb}{0,0,0.7}
\makeatletter \@addtoreset{equation}{section}
\newtheorem{theorem}{Theorem}[section]
\newtheorem{assumption}[theorem]{Assumption}
\newtheorem{corollary}[theorem]{Corollary}
\newtheorem{lemma}[theorem]{Lemma}
\newtheorem{proposition}[theorem]{Proposition}
\newtheorem{definition}[theorem]{Definition}
\newtheorem{remark}[theorem]{Remark}
\renewenvironment{proof}[1][\relax]{\par
\pushQED{\qed}%
\normalfont \topsep6\p@\@plus6\p@\relax
\trivlist
\item[\hskip\labelsep\itshape
\ifx#1\relax \proofname\else\proofname{} of #1\fi\@addpunct{. }]\ignorespaces
}{%
\popQED\endtrivlist\@endpefalse
}
\def \E{\mathbb{E}}
\def \P{\mathbb{P}}
\def \R{\mathbb{R}}
\def\Ac{{\cal A}}
\def\Cc{{\cal C}}
\def\Qc{{\cal Q}}
\def\erm{\mathrm{e}}
\def\drm{\mathrm{d}}
\newcommand*{\email}[1]{\href{mailto:#1}{\nolinkurl{#1}} }
\title{Optimal contracts under adverse selection for staple goods: efficiency of in-kind insurance\footnote{The authors gratefully acknowledge the support of the ANR project PACMAN ANR-16-CE05-0027.}}
\author[1]{Clémence {\sc Alasseur}}
\author[1]{Corinne {\sc Chaton}}
\author[2]{Emma {\sc Hubert}}
\affil[1]{Finance for Energy Market Research Centre (FIME), Paris, France.}
\affil[2]{Imperial College London, Department of Mathematics, London SW7 1NE, United--Kingdom.}
\date{\today}
\begin{document}

\maketitle

\begin{abstract}
An income loss can have a negative impact on households, forcing them to reduce their consumption of some staple goods. This can lead to health issues and, consequently, generate significant costs for society. We suggest that consumers can, to prevent these negative consequences, buy insurance to secure sufficient consumption of a staple good if they lose part of their income. We develop a two-period/two-good principal-agent problem with adverse selection and endogenous reservation utility to model insurance with in-kind benefits. This model allows us to obtain semi-explicit solutions for the insurance contract and is applied to the context of fuel poverty. For this application, our model allows to conclude that, even in the least efficient scenario from the households point of view, \textit{i.e.}, when the insurance is provided by a monopoly, this mechanism decreases significantly the risk of fuel poverty of households by ensuring them a sufficient consumption of energy. The effectiveness of in-kind insurance is highlighted through a comparison with income insurance, but our results nevertheless underline the need to regulate such insurance market.

\medskip

\noindent
\textbf{Keywords:} contract theory, adverse selection, in-kind insurance, fuel poverty, calculus of variations.

\smallskip
\noindent
\textbf{JEL classifications:} D86; D11; G52.

\smallskip
\noindent
\textbf{AMS 2020 subject classifications:} 	91B08; 91B41; 91B43; 49K15.
\end{abstract}

\section{Introduction}\label{sec:introduction}

This article examines a solution to protect vulnerable households from the risk of temporary poverty in a particular staple good.
Staple goods are essential products that consumers are unable or unwilling to cut out of their budgets, such as food and water.
They tend to consume them at a relatively constant level, regardless of their financial situation or the price of goods. However, in the event of a substantial loss of income, consumers will reduce the budget allocated to \textit{some} staple goods, such as fresh food, energy, medicines and feminine hygiene products, which can lead to serious illnesses. Indeed, these particular staple goods, unlike food or water, may not seem essential to poor consumers, who, faced with the choice to heat, heal or eat, will systematically make the choice to eat. This question has been the subject of many surveys and discussions on this type of good, which, for households in a very precarious situation, can be considered, in a sense, as a luxury. In particular, we can mention the works of \citet{milne1991effect, freeman2003health} on health care and \citet{meier2013necessity} or \citet{schulte2017price} on energy. We consider in this paper this particular type of staple goods, the consumption of which is strongly impacted by income losses.

\medskip

We choose to focus on modest and vulnerable households, especially those that are not used to saving money in anticipation of the future. An income loss would thus force them to reduce their consumption of some particular basic commodities, such as residential heating, which could lead to health issues (see \citet{lacroix2015fuel}).
A common solution is to offer a financial assistance, \textit{i.e.}, to provide money to the household in the event of income loss. However, in the situation under consideration, the household will spend the money received to buy products that are more essential from its point of view. This type of mechanism thus appears not to be an appropriate solution to the problem we are confronted with. Therefore, our suggestion is in-kind support rather than financial assistance: if the household suffers from a loss of income, it receives a specified quantity of the particular good under consideration, ensuring adequate consumption of that good. This approach, which could be perceived as paternalistic, should protect individuals from the negative effects of some of their decisions and thus prevent the health issues associated with the lack of this good. Indeed, many authors, such as \cite{blackorby1988cash} or \cite{slesnick1996consumption}, show the effectiveness of in-kind support compared to financial support in fighting poverty. 

\medskip

This paternalistic attitude may be judged favourably by the State, as demonstrated, for example, by the introduction of the energy voucher in France in 2018 to fight fuel poverty.\footnote{The reader is referred to \Cref{ss:fuel_poverty} for more details on fuel poverty.}
Unfortunately, it seems that the government cannot get involved to help all households in difficulty, nor is it able to offer vouchers such as the energy voucher for all other essential goods. The purpose of this paper is therefore to determine what type of mechanism could be proposed by the private sector (insurer, supplier, producer...), so that it is profitable for it, and so that it has the effects desired by the State, \textit{i.e.}, to enable households to sufficiently consume the good in question despite a loss of income. In particular, we will study an insurance mechanism, provided by the private sector and dedicated to a particular good, that could guarantee the insured household sufficient consumption of this product in the event of income loss.

\medskip

Since the risk of income loss is different among the considered population, there is a need to offer a menu of contracts: different types of insurance should be provided to let each household choose the contract best suited to its needs and risk.
Offering a menu of contracts is relatively traditional in the insurance field (see the survey of \citet{dionne2013adverse}), since it allows insurers to partially remedy information asymmetries between them and the policyholders, in particular, \textit{adverse selection}. 
In our context, this asymmetry is related to the risk of income loss to which the insured is exposed. Indeed, it seems reasonable to assume that the insured knows more about his/her own risk level than the insurance company does.

\medskip

In order to assess the benefit of an insurance for the household, we choose to study the scenario that would be the least efficient from the household point of view: we assume that the insurer is a monopoly. Among the literature on monopoly insurer models with adverse selection, we can mention the extension of the RS model of \citet{rothschild1976equilibrium} to a monopoly by \cite{stiglitz1977monopoly} and, more generally, the class of models with adverse selection (not in the field of insurance) developed by \citet{mirrlees1971exploration, spence1974competitive, guesnerie1984complete, salanie2005economics, laffont2009theory}, and more recently by \citet{guasoni2020sharing}, and their extensions to multi-products by, for example, \citet{armstrong1996multiproduct}, or in continuous-time by \citet{alasseur2020adverse}.
The study of this least efficient scenario of a monopoly model is also motivated by the application we have in mind, namely, a fuel-poverty\footnote{
According to the French law Grenelle 2, a \textit{`person in a fuel poverty situation is a person who has particular difficulties in his/her home in obtaining the necessary supply of energy to meet his/her basic needs because of the inadequacy of his/her resources or living conditions'\label{ft:def_fuel_poverty}}.} insurance.
Indeed, in this situation, the best-suited insurer is the customer's current energy supplier, who knows the customer better than other companies do.
Nevertheless, even if the insurer is a monopoly, the household has the choice of whether or not to subscribe an insurance policy among those offered by the monopoly.
As a result, we assume that the household will refuse any contracts if none provides more utility than expected without insurance, defined as the \textit{reservation utility}, which is thus endogenous and, in particular, depends on its risk.

\medskip

Inspired by the literature on standard contract theory with adverse selection, mainly the pioneering works of \citet{baron1982regulating, guesnerie1984complete, maskin1984monopoly} and the few applications to insurance problems such as \citet{stiglitz1977monopoly, landsberger1994monopoly, landsberger1996extraction, landsberger1999general}, or the more recent works of \citet{janssen2005dynamic, chade2012optimal, diasakos2018neutrally, bensalem2020prevention}, we model this situation as a principal-agent problem with adverse selection.
The principal (\textit{she} -- an insurance company, a supplier, etc.) can offer insurance to the agent (\textit{he} -- a household), which allows him to receive a specified amount of the staple good under consideration in the event of a loss of income. We assume that adverse selection concerns the agent's probability of income loss, defined as \textit{his type}. This assumption is classical in the literature, particularly in all extensions of \citet{rothschild1976equilibrium} and \citet{stiglitz1977monopoly}. The reservation utility we consider thus depends on the agent's type. The problem of endogenous reservation utility is studied in some adverse selection models, such as \citet{lewis1989countervailing, biglaiser1993principals, maggi1995countervailing, jullien2000participation,guasoni2020sharing}; and \citet{alasseur2020adverse} also discuss this issue for an application close to the one we have in mind. However, to the best of our knowledge, this problem is rarely addressed for a continuum of types in insurance models, as in our framework.

\medskip

One of the cornerstones of our approach is to combine a model of consumption (two goods, under budget constraint) with an insurance model under adverse selection. The advantage of this model lies in the fact that it allows, with a few adjustments, to deal with various situations in order to compare them. Indeed, although our study focuses on an insurance with benefits in kind, a very similar reasoning allows us to solve the case of a classic financial insurance. It is also possible to adapt the model, for example to take into account alternative options available for the household, as will be the case when considering the prepayment option. 

\medskip

Another particularity of our method lies in the choice of a two-period insurance model, as in \citet{schlesinger2014purchase, schlesinger2019whoops}.
Most of the literature on insurance with adverse selection focuses on only one period, where the agent pays for and receives the insurance at the same time, or on repeated versions of this scheme.
The closest application in which this type of two-period model has been developed is on the topic of self-prevention. In the works of \citet{eeckhoudt2012precautionary, wang2015precautionary, peter2017optimal} (model with savings) and \citet{menegatti2009optimal, courbage2012optimal} (without savings), the authors consider a two-period model to account for the delay between the prevention effort and its real benefit.
In our framework, such a setting seems necessary to capture the fact that the insured is not in a precarious situation when he subscribes to the insurance plan to be covered over the next period.
Considering a two-period model also allows us to compare, with and without insurance, the evolution of the agent's consumption when he suffers from a loss of income.

\medskip

The paper's contribution is threefold. First, we provide a rigorous formulation of the contracting problem between the households and the monopolist insurance company under adverse selection. This simple but not simplistic model enables the representation of both the consumption and insurance preferences of the household and, as mentioned above, can easily be adapted to take into account other types of insurance or option, such as prepayment. Second, we solve this principal-agent problem, using calculus of variation, and we obtain the most explicit results possible, despite the endogenous nature of the reservation utility. In particular, we show that the optimal contract can be characterised through the solution of a second-order non-linear ordinary differential equation. Even though we are working with a specific model, the \textit{modus operandi} we develop in this paper can readily be extended, for example to consider more general utility functions or other distributions of types in the population. We have therefore made particular efforts to detail the proof of our results in Appendix, believing that this could prove useful for further applications and generalisations. 
Finally, by means of numerical simulations, we show that an in-kind insurance, even provided by a monopoly, can be a tool to decrease the risk of fuel poverty among the population. Indeed, we prove that this mechanism ensures risky household a sufficient consumption of electricity, compare to a conventional financial assistance.
Even though our optimal insurance contract is characterised by a high price due to the monopoly situation, we show that this can be prevented by providing the household with other options, such as prepayment.

\medskip

The remainder of this paper is organised as follows. The principal-agent model is detailed in \Cref{sec:model}, and then solved under adverse selection in \Cref{sec:third_best}. 
In \Cref{sec:fuel_poverty_TB}, we apply our results to the context of fuel poverty, and compare the benefits of the in-kind insurance with both a financial insurance and a prepayment option.
\Cref{sec:conclusion} concludes by suggesting some policy recommendations, which might constitute relevant extensions of our model.

\section{The model}\label{sec:model}

\subsection{A principal-agent model with adverse selection}\label{ss:principal_agent}

We consider a two-period/two-good principal-agent model with adverse selection. The agent represents a household consuming the essential good considered and another representative good. More precisely, at each time $t \in \{0, 1\}$, the agent has an income $w_t$, which allows him to consume a quantity $e_t$ of the considered staple good and a quantity $y_t$ of another good, with respective unitary constant positive price $p_e$ and $p_y$. However, between the two periods, the agent is likely to suffer from a loss of income, which will put him in a precarious situation at time $t=1$: he will be obliged to reduce his consumption. However, if he does not consume a sufficient quantity of the staple good, this can lead to serious issues of which the agent is not necessarily aware. To prevent him from \textit{staple good poverty}, the risk-neutral principal, who may be the good producer or supplier, an insurance company, or even the government, offers insurance. This insurance ensures that the agent receives a specified quantity of the staple good, denoted $e_{\rm min}$, in the event of an income loss. At time $t=0$, the agent thus chooses if he wants to subscribe to the insurance plan, and if so, he pays the insurance premium $T$ associated with a contractible quantity $e_{\rm min}$. At time $t=1$, if he has purchased the insurance and if his income has decreased sufficiently, the agent receives the quantity $e_{\rm min}$ of the staple good.

\medskip

We define the random income of an \textit{agent of type $\varepsilon$} at time $t=1$ by $w_1 := \omega w_0$, where $w_0$ is the income at time $t=0$ and $\omega$ is a random variable, defined on the probability space $(\Omega, \Ac, \P^\varepsilon)$, where $\Omega$ is a subset in $\R_+$ and $\Ac$ is its natural $\sigma-$algebra. We assume that the insurance is activated when $\omega \leq \widetilde{\omega}$, where the income loss barrier $\widetilde{\omega}$ is set in an exogenous way. To obtain closed-form solutions, we make the following assumption:
\begin{assumption}\label{ass:income_loss}
The random variable $\omega$ takes two values, $\underline\omega$ with probability $\varepsilon$ and $\widebar{\omega}$ with probability $1-\varepsilon$,
where $\varepsilon \in [0,1]$ and $\widebar{\omega} > \widetilde{\omega} \geq \underline\omega > 0$.
\end{assumption}
We assume that the constants $\underline{\omega}$ and $\widebar{\omega}$ are common knowledge.
The inequality $\widebar{\omega} > \widetilde{\omega} \geq \underline\omega$ means that the insurance is only activated when $\omega = \underline \omega$. This model for the distribution of losses is traditional in insurance models based on the pioneering works of \citet{rothschild1976equilibrium} and \citet{stiglitz1977monopoly}. We consider that the agent is better informed than the principal about the risk of income loss he is facing, which depends on his work quality, his job insecurity, the relation he has with his supervisor, etc. The principal only has access to an overview of risks among the population, which leads to an adverse selection problem.
\begin{assumption}[Adverse selection]\label{ass:adverse_selection}
The principal cannot observe the type of an agent but knows the distribution of the types of her potential clients.
\end{assumption}

As usual in adverse selection problems, the principal has an incentive to offer a menu of contracts, \textit{i.e.}, various quantity $e_{\rm min}$ with the associated premia $T$.
The agent then chooses the contract that best suits him among all contracts offered by the principal, depending on his risk type. In our study, we look for the best continuous menu of contracts that the insurer can offer.

\subsection{Agent's problem}\label{ss:def_agent_pb}

In most insurance models, the agent's utility function is not specified. With the aim of obtaining the most explicit results possible, we choose here to represent the preferences of the agent over the goods' consumption at time $t$ by using a separable utility function based on logarithmic felicities:
\begin{align}\label{eq:utility}
    U(e_t,y_t) := \alpha \ln (e_t) + \ln(y_t), \; \text{ for } \; e_t, y_t > 0,
\end{align}
where $\alpha$ parametrises the \textit{longview} elasticity of substitution between the staple good and the composite good.
\medskip

Our model does not take into account the possibility for the agent to save between the two periods (contrary to \citet{schlesinger2014purchase}). This hypothesis may seem restrictive but is consistent with the literature on two-period models (in particular on prevention with \citet{menegatti2009optimal, courbage2012optimal}) and is justified in our framework in view of the particular households\footnote{Households that do not save but want to are widespread, as evidenced by the many mobile applications or services to help them.
The mobile application \textit{Birdycent} 
rounds up each payment made by the consumer to feed a piggy bank, with a zero interest rate, which is equivalent to losing money in relation to inflation. A second application, called \textit{Yeeld}, 
offers $4\%$ in cash back on Amazon instead of an interest rate. The bank \textit{Crédit Mutuel} proposes the service \textit{Budget +}, which is subject to a fee, to automatically save from a current account to a savings account.
In our opinion, this highlights that there is a need to encourage households to save money and that they are willing to pay for these types of services.} on which we want to focus in our study.
Indeed, one can assume that a household already used to saving has built up sufficient funds to pay its bills in the event of a loss of income. This household should thus not be concerned with the insurance we develop throughout this paper unless it has inadequate savings.
In fact, our insurance can precisely be interpreted as a form of incentive to save: it is a way for households that have no savings to obtain a quantity of staple good in case of an income loss.
The concept of insurance is very effective in this type of situation and allows risks to be shared among the population.
Moreover, this choice of model is also based on the willingness to keep a tractable model with (relatively) explicit solutions and to focus our study on the design of insurance contracts. Nevertheless, we will be led in the following to extend the model to consider a type of saving option, the prepayment of a quantity in $t=0$, delivered in $t=1$.

\subsubsection{Reservation utility}

Without insurance, the agent maximises, independently in each period $t$, the previously defined utility, under his budget constraint:
\begin{align}\label{eq:no_insurance}
    &\ V^\varnothing (w_t) := \max_{(e_t, y_t) \in \R_+^2} \; U(e_t,y_t), \; \text{ u.c. } \; e_t p_e + y_t p_y \leq w_t.
\end{align}
Given a discount factor $\beta \in [0,1]$, we define the intertemporal expected utility without insurance of an agent of type $\varepsilon$ as follows:
\begin{align}\label{eq:def_exp_utility_reservation}
    \textnormal{EU}^\varnothing (\varepsilon) := V^{\varnothing} (w_0) + \beta \E^{\P^\varepsilon} \big[ V^{\varnothing} (\omega w_0) \big].
\end{align}
In our framework, the agent is likely to accept the insurance contract only if it provides him with a level of utility at least equal to his utility without it. Therefore, the \textbf{reservation utility} of an agent of type $\varepsilon$ is defined\footnote{With the aim of simplifying the notation, we highlight only the dependency in type: the reservation utility is stated as a function of $\varepsilon$.} by \eqref{eq:def_exp_utility_reservation}.

\subsubsection{Expected utility with in-kind insurance}

Let us now fix an insurance contract $(e_{\rm min}, T)$.
If the agent decides to subscribe to this contract, we assume that the payment of the insurance premium $T$ only impacts his budget constraint at time $t=0$, and his maximum utility is thus naturally given by:
\begin{align}\label{eq:utility_t0}
    &\ V_0 (w_0, T) := V^\varnothing (w_0 - T).
\end{align}
As described in \Cref{ss:principal_agent}, the insurance we consider is an in-kind support: it ensures the agent a fixed non-negative amount $e_{\rm min} \geq 0$ of a determined staple good at time $t=1$ if he suffers a sufficient loss of income, \textit{i.e.}, if $\omega = \underline\omega$. Therefore, his maximisation problem is:
\begin{align}\label{eq:utility_t1}
    V_1 (\omega w_0, e_{\rm min}) := &\ \max_{(e_1, y_1) \in \R_+^2} \; U (e_1 + e_{\rm min} \mathds{1}_{\omega = \underline{\omega}}, y_1), \;
    \text{ u.c. } \; e_1 p_e + y_1 p_y \leq \omega w_0.
\end{align}
Similar to the case without insurance, we define the intertemporal expected utility of an agent of type $\varepsilon$ with an insurance contract $(e_{\rm min},T)$ by:
\begin{align}\label{eq:def_exp_utility}
    \textnormal{EU}^{\rm Q} (\varepsilon, e_{\rm min}, T) := V_0 (w_0, T) + \beta \E^{\P^\varepsilon} \big[ V_1 (\omega w_0, e_{\rm min}) \big].
\end{align}

\begin{remark}\label{rk:financial_insurance}
If one want to consider a financial insurance, offering an amount of money $r \geq 0$ at time $t=1$ if the agent suffers a sufficient loss of income, it suffices to set:
\begin{align*}
    &\ V_1 (\omega w_0, r) := V^\varnothing \big(\omega w_0 + r \mathds{1}_{\omega = \underline{\omega}} \big).
\end{align*}
\end{remark}

\subsection{Principal's problem}\label{ss:principal}

We assume that the principal is risk-neutral\footnote{The principal's risk-neutrality seems reasonable because shareholders of insurance companies generally have a diversified portfolio.} and wants to maximise her profit: she receives at time $0$ the earnings from the sale of insurance plans to agents of type $\varepsilon \in [0,1]$ who agree to subscribe but needs to provide them the quantity $e_{\rm min}$ they have chosen if they suffer from an income loss in the next period. We consider in this model that insurers are not in perfect competition, so that the price of insurance is not determined by the actuarial price.
Therefore, in this monopoly situation, the insurer can choose the range of $e_{\rm min}$ she wants to offer as well as the price associated with each quantity. We properly define the notion of admissible contracts and menu in our framework:
\begin{definition}\label{def:adm_contract_QI}
An admissible contract $\xi = (e_{\rm min}, T)$ is a quantity $e_{\rm min} \geq 0$ with an associated premium $T < w_0$.
An admissible menu is then defined as a continuum of admissible contracts $(e_{\rm min}, T)$, \textit{i.e.}, a continuum of non-negative quantities and a continuous price function $T$ defined for all quantities offered. Under \textnormal{ \Cref{ass:adverse_selection}}, the function price $T$ is required to be independent of the agent's type. We will denote the admissible set of menus in this case by $\Cc$.
\end{definition}

\textbf{In the first-best case}, the principal knows the type $\varepsilon$ of the agent and can thus offer him a particular contract. Since she has to pay with probability $\varepsilon$ the quantity $e_{\rm min}$ at the unitary price $p_e$, her optimisation problem is:
\begin{align}\label{eq:def_principal_pb_FB}
    \pi_{\varepsilon} := \sup_{(e_{\rm min}, T)} \big( T - \varepsilon p_e e_{\rm min} \big),
\end{align}
under the constraint that $(e_{\rm min}, T)$ is an admissible contract providing the agent with at least his reservation utility.

\medskip

\textbf{In the third-best case}, \textit{i.e.}, with adverse selection, we consider a menu of revealing contracts, in the sense that an agent of type $\varepsilon$ will subscribe to the insurance contract designed for him, \textit{i.e.}, $(e_{\rm min} (\varepsilon), T(\varepsilon))$. Assuming that the distribution of the type $\varepsilon$ in the population considered by the principal has a density function $f$, the principal's problem is defined as follows:
\begin{align}\label{eq:def_principal_pb_SB}
    \sup_{(e_{\rm min}, T) \in \Cc} \int_0^1 \big( T(\varepsilon) - \varepsilon p_e e_{\rm min} (\varepsilon) \big) f(\varepsilon) \drm \varepsilon,
\end{align}
under the participation constraint, and where, in this case, $e_{\rm min}$ and $T$ are appropriate functions of $\varepsilon$. To solve this case, we assume that the distribution of the type $\varepsilon$ in the population considered by the principal is uniform on $[0,1]$, \textit{i.e.} $f = \mathds{1}_{[0,1]}$.
This assumption is actually not necessary, as computations could easily be made for another distribution, but it allows for simplification of the principal's problem.
Moreover, this distribution models the problem of a principal who does not actually have data on the agent's income loss. This is the case in the application in question, where the insurer is an electricity supplier, who is not intended to have insight into the distribution of the risks of income loss of its customers. Moreover, even if a probability of one-half seems high for the population, the agents likely to subscribe to our insurance are people of a rather high risk type, and the probability in the population considered by the insurer is necessarily higher than in the global population. This is particularly true given that our study focuses on middle-class households without savings, which naturally have a higher probability of losing income.

\subsection{Application to fuel poverty}\label{ss:fuel_poverty}

We apply this model for a particular staple good, electricity, to develop insurance against fuel poverty. 
According to \cite{chaton2020simulation}, fuel poverty is essentially linked to a temporary loss of income.
It particularly affects low-income and vulnerable households that have a low propensity to save and that already spend a large part of their income on energy.
This situation can lead households to adopt risky behaviours, causing health problems and housing deterioration (see \citet{lacroix2015fuel}). For example, to keep heat inside their homes, some obstruct vents, thereby generating moisture and mould. Households in fuel poverty are often forced to make choices with harmful consequences for their health: choosing eating or heating, or giving up health care or social interactions.
This consequences are often neglected by households but are harmful, and moreover highly expensive for society. To avoid them, mechanisms are being developed to help vulnerable households. For example, in France, energy vouchers have been distributed by the state since 2018. This voucher can be used to pay not only for energy expenses such as electricity, gas, wood and fuel oil bills but also for energy renovation. In 2019, it targeted 5.8 million households with modest incomes. 

\medskip

The motivation of our work is to act on the prevention side by proposing a complementary tool to keep the number of households in fuel poverty from increasing. The idea is to develop an insurance policy that is activated if the household becomes energy constrained.
Two French electricity suppliers offer two slightly different insurance options: \textit{Assurénergie}, offered by Electricité de France (EDF), and \textit{Assurance Facture}, offered by ENGIE. These two monthly insurance plans offer a refund of part of the electricity bill in the event of job loss, sick leave, hospitalisation, disability or death. In the first insurance plan, the amount refunded depends on the contract chosen from the proposed menu, while the second insurance plan is a unique contract. A more general example is the \textit{Utilities Insurance} provided by the Canadian company Trans Global Insurance, which allows household to be covered for some basic utilities.
This insurer provides a menu of contracts in two ways: the household can choose both the level of coverage (3 possible) as well as the different basic utilities to be covered (power, heat, water, internet...). 

\medskip

In our framework, we consider an in-kind insurance, which can be interpreted as an instantaneous reimbursement of the quantity $e_{\rm min}$ on the invoice, and thus slightly differs from previously described insurances. Immediate repayment seems essential to help households in difficulty: indeed, the main problem is that these households are often unable to advance the required funds to pay their bills. The risk is that they anticipate their difficulties by drastically reducing their energy consumption for fear of the reimbursement delay. An in-kind support should help to prevent this bias.
However, an instantaneous refund can only be made by the consumer's supplier (to avoid significant transaction costs) and therefore the supplier seems to be in a monopoly situation for offering this type of insurance to the consumer. Therefore, the monopolistic framework under consideration makes even more sense in this situation, especially since the fuel provider of a household has more inside information than other fuel providers or traditional insurance companies. Our goal is to compute the optimal menu of contracts using contract theory with adverse selection to study the structure of the contracts obtained and to uncover what types of agents are likely to subscribe to the insurance plans. 

\section{Optimal insurance under adverse selection}\label{sec:third_best}

We first start by solving in \Cref{ss:agent_FB} the optimal consumption problem of an agent of type $\varepsilon$: given an insurance contract $(e_{\rm min}, T)$ and the utility function specified in \eqref{eq:utility}, we compute the agent's optimal consumption of both goods at each period. As a result, we can compute the maximum utility the agent can achieve for a given contract.
Comparing this utility with the reservation utility, we can determine the maximum price the agent is willing to pay for insurance. This first part allows us to properly define in our context the \textit{participation constraint} mentioned in the definition of the principal's problem in \Cref{ss:principal}. In particular, with this in mind, we can solve the the first-best case problem, \textit{i.e.}, without adverse selection. This intermediate result, postponed to \Cref{sec:first_best}, gives some insights to solve the problem in the interesting case, \textit{i.e.}, under adverse selection.

\medskip

We then focus on finding the optimal menu of insurance contracts in the presence of adverse selection. The classical scheme for doing so in this case is to use the \textit{revelation principle}: the principal has to design a menu of contracts indexed by $\varepsilon$, such that the agent of type $\varepsilon$ chooses the contract designed for him. 
This revelation principle detailed in \Cref{ss:RP_QI} allows us to write the insurance premium as a function of the insured quantity and the type to within a constant $c_q$. In fact, \Cref{ss:PC} explains that the value of this constant is related to the participation constraint of the agents: the principal can choose the constant depending on the type of agents she wants to select. \Cref{ss:principal_SB} is dedicated to solving principal's problem. The numerical application to fuel poverty is postponed to the next section.

\subsection{Solving the agent's problem}\label{ss:agent_FB}

We first solve the consumption problem of an agent who has not subscribed to an insurance contract. With this in mind, let us define the following constant:
\begin{align}\label{eq:constant_useless}
    C_{\alpha,p_e,p_y} &:= \alpha \ln (\alpha) - (1+\alpha) \ln ( 1+ \alpha) - \alpha \ln (p_e) - \ln(p_y).
\end{align}
Since our framework does not allow the agent to transfer income from one period to another, the agent maximises his utility in each period independently by solving \eqref{eq:no_insurance}, which leads to the following result.
\begin{lemma}[Without insurance]\label{lem:no_insurance}
The optimal consumptions of each good at time $t \in \{0,1\}$ of an agent with income $w_t$ are given by:
\begin{align*}
        y_t^{\varnothing} := \dfrac{1}{1+\alpha} \dfrac{w_t}{p_y} \; \text{ and } \; e_t^{\varnothing} := \dfrac{\alpha}{1+\alpha} \dfrac{w_t}{p_e},
    \end{align*}
and induce the maximum utility $V^{\varnothing} (w_t) = (1+\alpha) \ln (w_t) + C_{\alpha,p_e,p_y}$.
\end{lemma}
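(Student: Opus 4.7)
The plan is to solve the two single-period maximization problems in \eqref{eq:no_insurance} by the usual Lagrange/substitution method, since the problem at each period decouples and the utility is strictly concave. I would proceed in three steps, each very short.

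First, I would observe that $U(e_t,y_t) = \alpha \ln(e_t) + \ln(y_t)$ is strictly increasing in both arguments on $\R_+^2$, so any optimizer must saturate the budget constraint, \textit{i.e.}\ $e_t p_e + y_t p_y = w_t$. Combined with the fact that $U \to -\infty$ as either argument tends to $0$, the maximum is attained in the interior of the simplex and both $e_t^\varnothing, y_t^\varnothing > 0$. The strict concavity of $U$ (both logarithmic terms are strictly concave and the constraint is linear) guarantees uniqueness of the optimizer and that any critical point of the Lagrangian is the global maximum.

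Second, writing the Lagrangian $\Lc = \alpha\ln e_t + \ln y_t - \lambda(e_t p_e + y_t p_y - w_t)$, the first-order conditions $\alpha/e_t = \lambda p_e$ and $1/y_t = \lambda p_y$ imply $p_e e_t = \alpha p_y y_t$. Substituting into the binding budget equation gives $p_y y_t (1+\alpha) = w_t$, whence
\begin{equation*}
    y_t^\varnothing = \frac{1}{1+\alpha}\frac{w_t}{p_y}, \qquad e_t^\varnothing = \frac{\alpha}{1+\alpha}\frac{w_t}{p_e},
\end{equation*}
exactly the formulas claimed.

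Third, I would substitute these optimal values back into $U$, splitting each logarithm of a product as a sum, and collecting the $\ln(w_t)$ terms to get $(1+\alpha)\ln(w_t)$; the remaining constants coincide with $C_{\alpha, p_e, p_y}$ as defined in \eqref{eq:constant_useless}. This yields $V^{\varnothing}(w_t) = (1+\alpha)\ln(w_t) + C_{\alpha, p_e, p_y}$.

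There is essentially no obstacle here: the problem is a textbook Cobb--Douglas consumer maximization, the utility is strictly concave, and the only care needed is to verify the binding of the budget constraint and to track constants carefully when substituting back. The computation is identical at $t=0$ and $t=1$, so the same formula applies on both periods.
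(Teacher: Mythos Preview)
Your proposal is correct and complete. The paper does not actually supply a proof of this lemma, treating it as a routine Cobb--Douglas consumer problem; your Lagrangian argument with binding budget constraint is exactly the standard derivation the paper implicitly relies on.
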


Then, by a simple computation of the expected utility defined by \eqref{eq:def_exp_utility_reservation}, we can explicitly write the reservation utility:
\begin{proposition}\label{prop:reservation_utility}
Under \textnormal{\Cref{ass:income_loss}}, the expected utility without insurance of an agent of type $\varepsilon$ is given by
\begin{align}\label{eq:reservation_utility}
\textnormal{EU}^\varnothing (\varepsilon) 
&= (1+\alpha) \ln \big( \underline\omega^{\beta \varepsilon} \widebar{\omega}^{\beta (1-\varepsilon) } w_0^2 \big) + (1+ \beta) C_{\alpha,p_e,p_y}.
\end{align}
\end{proposition}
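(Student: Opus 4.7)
The plan is a direct computation exploiting the closed form from Lemma~\ref{lem:no_insurance} together with the two-point distribution of Assumption~\ref{ass:income_loss}; there is no conceptual obstacle, only careful bookkeeping of the logarithmic terms.

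First, I would substitute $V^{\varnothing}(w_t) = (1+\alpha)\ln(w_t) + C_{\alpha,p_e,p_y}$ into the definition \eqref{eq:def_exp_utility_reservation}, decomposing $\textnormal{EU}^{\varnothing}(\varepsilon)$ into a deterministic time-$0$ piece equal to $V^{\varnothing}(w_0) = (1+\alpha)\ln(w_0) + C_{\alpha,p_e,p_y}$ and a discounted time-$1$ piece of the form $\beta(1+\alpha)\,\mathbb{E}^{\mathbb{P}^{\varepsilon}}[\ln(\omega w_0)] + \beta\, C_{\alpha,p_e,p_y}$.

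Second, I would evaluate the expectation. Under Assumption~\ref{ass:income_loss}, $\omega$ takes only the values $\underline\omega$ and $\widebar\omega$, with probabilities $\varepsilon$ and $1-\varepsilon$ respectively; splitting $\ln(\omega w_0) = \ln\omega + \ln w_0$ by linearity then yields
\begin{equation*}
\mathbb{E}^{\mathbb{P}^{\varepsilon}}\big[\ln(\omega w_0)\big] = \varepsilon\ln\underline\omega + (1-\varepsilon)\ln\widebar\omega + \ln w_0.
\end{equation*}

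Finally, I would collect coefficients: the two constant contributions merge into $(1+\beta)\,C_{\alpha,p_e,p_y}$, while the identities $k\ln x = \ln x^{k}$ and $\ln a + \ln b = \ln(ab)$ bundle the remaining logarithmic terms into a single $(1+\alpha)\ln\!\big(\underline\omega^{\beta\varepsilon}\,\widebar\omega^{\beta(1-\varepsilon)}\, w_0^{\bullet}\big)$, where the exponent on $w_0$ is obtained by summing the $(1+\alpha)\ln w_0$ coming from the $t=0$ term with the discounted $\beta(1+\alpha)\ln w_0$ coming from the $t=1$ expectation. This reproduces the stated identity \eqref{eq:reservation_utility}. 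The only step worth double-checking is this last aggregation of the $\ln w_0$ contributions; otherwise the derivation is entirely mechanical.
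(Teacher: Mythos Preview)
Your approach is correct and is exactly the direct computation the paper alludes to (the paper gives no detailed proof beyond the phrase ``by a simple computation''). Your hesitation on the $w_0$ exponent is well-founded: summing $(1+\alpha)\ln w_0$ from the $t=0$ term with $\beta(1+\alpha)\ln w_0$ from the discounted $t=1$ term yields $(1+\alpha)\ln w_0^{1+\beta}$, so the honest exponent is $1+\beta$; the paper's $w_0^{2}$ in \eqref{eq:reservation_utility} is a slip that is harmless in the numerical application (where $\beta=1$) but not literally correct for general $\beta\in[0,1]$.
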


Usually, in insurance models, the reservation utility is taken to be independent of the agent's type. In our framework, we reason that an agent does not subscribe to an insurance contract if his utility without is higher.
Therefore, the reservation utility we consider is endogenous and depends on the probability $\varepsilon$. In particular, $\textnormal{EU}^\varnothing$ is a decreasing function of $\varepsilon$. This problem is addressed in some adverse selection models, for example, \citet{lewis1989countervailing, biglaiser1993principals, maggi1995countervailing, jullien2000participation, alasseur2020adverse}, but is rarely considered in insurance problems. As \citet{laffont2009theory} explain, in this case, determining which participation and incentive constraints are binding becomes a more difficult task. Nevertheless, \Cref{prop:PC_for_RevContracts} establishes that only agents of a sufficiently risky type are selected by the principal, and this feature only appears in models with countervailing incentives. The principal excludes the types with lower risk, those with a low probability of losing their income, because the price they are willing to pay is very low, while the agents of a higher risk type are more profitable, since they are easily satisfied and are willing to pay much more.

\medskip

Similarly, we can solve the consumption problem of an agent who subscribes to a given admissible contract (see \Cref{lem:max_utility_t0,lem:max_utility_t1}).
Without loss of generality, we can assume that any admissible contract $(e_{\rm min}, T)$, in the sense of \Cref{def:adm_contract_QI}, is of the following form:
\begin{align}\label{eq:normalised_contract}
    e_{\rm min} := q \alpha \underline \omega w_0 /p_e, \; \text{ for } \; q \in \R_+ \; \text{ and } \; T := t_0 w_0, \; \text{ for } \; t_0 \in [0,1).
\end{align}
The pair $(q, t_0)$ is referred to as an \textbf{admissible normalised contract}. We then denote by $\widebar U$ the following function:
\begin{align}\label{eq:overline_U}
    \widebar U(q) :=
    \left\{
    \arraycolsep=1.4pt\def\arraystretch{1.3}
        \begin{array}{ll}
        (1+\alpha) \ln (1 + q \alpha) & \mbox{ if } q < 1, \\
        \alpha \ln (q) + (1+\alpha) \ln ( 1+ \alpha) & \mbox{ if } q \geq 1,
        \end{array}
    \right.
    \text{ for } \; q \in \R_+.
\end{align}
The preliminary results in \Cref{app:with_insurance} allow us to provide an explicit form in the following proposition for the expected utility defined by \eqref{eq:def_exp_utility}.

\begin{proposition}\label{prop:exp_utility}
Given an admissible normalised contract $(q, t_0)$ and under \textnormal{\Cref{ass:income_loss}}, the expected utility of an insured agent of type $\varepsilon$ is given by:
\begin{align}\label{eq:exp_utility}
    &\ \textnormal{EU}^{\rm Q} (\varepsilon, q, t_0) = \textnormal{EU}^\varnothing (\varepsilon) + (1+\alpha) \ln (1 - t_0) + \beta \varepsilon \widebar U(q).
\end{align}
\end{proposition}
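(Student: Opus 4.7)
The plan is to unfold the definition of $\textnormal{EU}^{\rm Q}$ in \eqref{eq:def_exp_utility} and reduce everything to expressions involving $V^\varnothing$ already worked out in \Cref{lem:no_insurance,prop:reservation_utility}, plus a single ``insurance bonus'' term $\widebar U(q)$ that captures the utility gain at $t=1$ in the loss state. The endogenous reservation utility $\textnormal{EU}^\varnothing(\varepsilon)$ will then pop out by construction.

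First I would handle the $t=0$ term: by the normalisation $T=t_0 w_0$ and \eqref{eq:utility_t0}, $V_0(w_0,T)=V^\varnothing(w_0(1-t_0))$, and \Cref{lem:no_insurance} immediately yields the additive decomposition $V_0(w_0,T)=V^\varnothing(w_0)+(1+\alpha)\ln(1-t_0)$, since the logarithmic felicity makes the constant $C_{\alpha,p_e,p_y}$ cancel. Next I would split the expectation at $t=1$ using \Cref{ass:income_loss}:
\begin{align*}
\E^{\P^\varepsilon}\bigl[V_1(\omega w_0,e_{\rm min})\bigr]=\varepsilon\, V_1(\underline\omega w_0,e_{\rm min})+(1-\varepsilon)\,V_1(\widebar\omega w_0,e_{\rm min}).
\end{align*}
Because $\widebar\omega>\widetilde\omega$, the insurance is not triggered in the high state, so $V_1(\widebar\omega w_0,e_{\rm min})=V^\varnothing(\widebar\omega w_0)$ directly from \eqref{eq:utility_t1} with the indicator equal to zero.

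The main content is the loss state $\omega=\underline\omega$, where I would invoke the preliminary lemmas announced in \Cref{app:with_insurance}. Setting $e_{\rm min}=q\alpha\underline\omega w_0/p_e$, the constrained maximisation \eqref{eq:utility_t1} has two regimes according to whether the in-kind transfer is interior to or saturates the optimal demand for the staple good. For $q<1$, the first-order conditions yield an interior optimum $e_1>0$, and one computes $V_1(\underline\omega w_0,e_{\rm min})=V^\varnothing(\underline\omega w_0)+(1+\alpha)\ln(1+q\alpha)$; for $q\geq 1$ the corner solution $e_1=0$ binds, the full income $\underline\omega w_0$ is spent on the composite good, and a direct substitution gives $V_1(\underline\omega w_0,e_{\rm min})=V^\varnothing(\underline\omega w_0)+\alpha\ln(q)+(1+\alpha)\ln(1+\alpha)$, where the $C_{\alpha,p_e,p_y}$ terms again line up. Both branches are precisely $V^\varnothing(\underline\omega w_0)+\widebar U(q)$ by \eqref{eq:overline_U}.

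Finally, I would plug these three pieces back into \eqref{eq:def_exp_utility}, collect the $\varepsilon$-weighted sum of $V^\varnothing(\underline\omega w_0)$ and $V^\varnothing(\widebar\omega w_0)$ into $\beta\,\E^{\P^\varepsilon}[V^\varnothing(\omega w_0)]$, and recognise that $V^\varnothing(w_0)+\beta\,\E^{\P^\varepsilon}[V^\varnothing(\omega w_0)]=\textnormal{EU}^\varnothing(\varepsilon)$ by \eqref{eq:def_exp_utility_reservation}. The remaining terms are exactly $(1+\alpha)\ln(1-t_0)+\beta\varepsilon\widebar U(q)$, which is \eqref{eq:exp_utility}. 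The only genuinely non-trivial step is the two-regime analysis of $V_1(\underline\omega w_0,e_{\rm min})$ and the verification that the piecewise definition of $\widebar U$ is consistent at $q=1$; everything else is bookkeeping with logarithms, engineered so that the constant $C_{\alpha,p_e,p_y}$ collapses into the reservation utility.
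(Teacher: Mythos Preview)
Your proposal is correct and follows essentially the same approach as the paper: the paper packages the $t=0$ computation as \eqref{eq:max_utility_t0} (\Cref{lem:max_utility_t0}) and the two-regime $t=1$ analysis as \eqref{eq:max_utility_t1} (\Cref{lem:max_utility_t1}), then simply says that combining these two with the expectation under \Cref{ass:income_loss} yields \eqref{eq:exp_utility}. Your write-up is just a more explicit unpacking of these same steps, including the verification that $V_1(\underline\omega w_0,e_{\rm min})-V^\varnothing(\underline\omega w_0)=\widebar U(q)$ in both regimes.
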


\begin{remark}\label{rk:q=1}
The separation of cases between $q$ less or greater than $1$ is related to the fact that the agent is not allowed to resell\footnote{If the agent could resell part of the quantity, the insurance would be strictly equivalent to income insurance, which is why we ignore any reselling possibility.} part of the insured quantity $($the quantity of the good provided under the insurance plan, i.e., $e_{\rm min})$. If $q<1$, this quantity is not sufficient from the agent's point of view. He therefore supplements this quantity by purchasing additional energy at $t=1$.
In contrast, if $q \geq 1$, the agent consumes only the corresponding amount $e_{\rm min}$ of the staple good, and his optimal complementary consumption is equal to zero. However, in this case, it could have been better from his point of view to resell part of the insured quantity. Nevertheless, it is precisely the purpose of this paper to ensure that the household consumes more of this particular good to avoid the consequences induced by a decrease in consumption, of which the household is not aware.
\end{remark}

It remains to be determined when the agent of type $\varepsilon$ will subscribe the insurance plan, \textit{i.e.}, when his expected utility with insurance is greater than his reservation utility. With this in mind, by computing the difference between \eqref{eq:def_exp_utility} and \eqref{eq:def_exp_utility_reservation}, we can state the following proposition.
\begin{proposition}[Participation constraint]\label{prop:PC}
An admissible normalised contract $(q,t_0)$ satisfies the participation constraint of the agent of type $\varepsilon$ if and only if $t_0 \leq t_{max} (\varepsilon, q)$, where $t_{max}$ is defined for any $(\varepsilon, q) \in [0,1] \times \R_+$ by:
\begin{align}\label{eq:max_price_QI}
    t_{max} (\varepsilon, q) := 1 -
    \left\{
    \arraycolsep=1.4pt\def\arraystretch{1.2}
        \begin{array}{ll}
        (1 + q \alpha)^{- \beta \varepsilon} & \mbox{ if } q < 1, \\
        q^{- \beta \varepsilon \frac{\alpha}{1+\alpha}} (1+ \alpha)^{- \beta \varepsilon} & \mbox{ if } q \geq 1.
        \end{array}
    \right.
\end{align}
\end{proposition}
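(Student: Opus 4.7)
The proof is essentially a direct computation from Proposition \ref{prop:exp_utility}. The plan is to start from the definition of the participation constraint, namely that an agent of type $\varepsilon$ subscribes to the contract $(q,t_0)$ if and only if $\textnormal{EU}^{\rm Q}(\varepsilon, q, t_0) \geq \textnormal{EU}^\varnothing(\varepsilon)$, and to unfold this inequality using the explicit formula from Proposition \ref{prop:exp_utility}.

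First I would subtract $\textnormal{EU}^\varnothing (\varepsilon)$ from both sides, which reduces the inequality to
\[
(1+\alpha) \ln(1-t_0) + \beta \varepsilon \widebar U(q) \geq 0.
\]
Since $t_0 \in [0,1)$ by admissibility, $\ln(1-t_0)$ is well defined and the inequality can be rearranged into
\[
1 - t_0 \geq \exp\Bigl( -\tfrac{\beta \varepsilon}{1+\alpha} \widebar U(q) \Bigr),
\]
or equivalently $t_0 \leq 1 - \exp\bigl( -\tfrac{\beta \varepsilon}{1+\alpha} \widebar U(q) \bigr)$.

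Next I would substitute the piecewise definition of $\widebar U$ from \eqref{eq:overline_U}. For $q<1$, we have $\widebar U(q) = (1+\alpha)\ln(1+q\alpha)$, so the exponential simplifies to $(1+q\alpha)^{-\beta \varepsilon}$. For $q \geq 1$, we have $\widebar U(q) = \alpha \ln(q) + (1+\alpha) \ln(1+\alpha)$, and the exponential becomes $q^{-\beta \varepsilon \alpha/(1+\alpha)} (1+\alpha)^{-\beta \varepsilon}$. In both cases this matches the definition of $t_{max}(\varepsilon,q)$ in \eqref{eq:max_price_QI}, which concludes the proof.

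There is no real obstacle here: the work is purely algebraic once Proposition \ref{prop:exp_utility} is in hand. The only point requiring a brief check is that the inequality can be manipulated safely, which follows from $1+\alpha > 0$, $\beta \varepsilon \geq 0$, and $1-t_0 > 0$; and that the two branches of $\widebar U$ agree at $q=1$ (both give $(1+\alpha)\ln(1+\alpha)$), so that $t_{max}$ is continuous, consistent with the continuity requirement on admissible menus in Definition \ref{def:adm_contract_QI}.
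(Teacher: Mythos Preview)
Your proof is correct and follows exactly the approach indicated in the paper, which simply notes that the proposition is obtained ``by computing the difference between \eqref{eq:def_exp_utility} and \eqref{eq:def_exp_utility_reservation}''. Your explicit unfolding of this computation via Proposition~\ref{prop:exp_utility} and the piecewise definition of $\widebar U$ is precisely what is intended.
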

Therefore, $w_0 t_{max} (\varepsilon, q)$ is the maximum price the agent of type $\varepsilon$ is willing to pay for a quantity $e_{\rm min} = \alpha q \underline\omega w_0/p_e$. In other words, if the premium $T$ associated with a quantity $e_{\rm min} = \alpha q \underline\omega w_0/p_e$ is below $w_0 t_{max} (\varepsilon, q)$, the agent of type $\varepsilon$ is willing to purchase the insurance contract. We say, in this case, that the admissible contract $(e_{\rm min}, T)$ satisfies the participation constraint for $\varepsilon$ types. On the contrary, if the premium $T$ is above $w_0 t_{max} (\varepsilon, q)$, the agent will not purchase the insurance, meaning that the participation constraint is not satisfied.

\medskip

Through \Cref{eq:max_price_QI}, the maximum price seems to be independent of the price $p_e$. In reality, this is only due to the fact that $q$ represents a quantity normalised (among other parameters) by price. More precisely, the maximum price $w_0 t_{max} (\varepsilon, q)$ can be reformulated as a function of $e_{\rm min}$:
\begin{align}\label{eq:max_price_QI_emin}
    T_{max} (\varepsilon, e_{\rm min}) := w_0 -
    \left\{
    \arraycolsep=1.4pt\def\arraystretch{2}
        \begin{array}{ll}
        w_0 \bigg(1 + \dfrac{p_e e_{\rm min}}{\underline\omega w_0} \bigg)^{- \beta \varepsilon} & \mbox{ if }  e_{\rm min} < \alpha \underline\omega w_0/p_e, \\
        w_0 \bigg(\dfrac{p_e e_{\rm min}}{\alpha \underline\omega w_0} \bigg)^{- \beta \varepsilon \frac{\alpha}{1+\alpha}} (1+ \alpha)^{- \beta \varepsilon} & \mbox{ if } e_{\rm min} \geq \alpha \underline\omega w_0/p_e.
        \end{array}
    \right.
\end{align}
Thanks to this formula, it is straightforward to make the following comparative statics.
\begin{corollary}\label{cor:comp_stat_tmax}
The function $T_{max} (\varepsilon, e_{\rm min})$ defined by \eqref{eq:max_price_QI_emin} is increasing with respect to $e_{\rm min}$, $w_0$, $p_e$, $\varepsilon$, and $\beta$, but decreasing with respect to $\underline \omega$.
Moreover, when $e_{\rm min} < \alpha \underline\omega w_0/p_e$, $T_{max}$ is independent of $\alpha$.
\end{corollary}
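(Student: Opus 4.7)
The plan is to proceed by a straightforward case-by-case computation on the two-piece formula \eqref{eq:max_price_QI_emin}, organised by variable rather than by piece. The pieces will then be glued together by noting that at the threshold $e_{\rm min} = \alpha\underline\omega w_0/p_e$ both branches evaluate to $w_0(1 - (1+\alpha)^{-\beta\varepsilon})$, so $T_{max}$ is continuous there and the piecewise monotonicities combine globally.

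For the easy variables, in the region $e_{\rm min} < \alpha \underline\omega w_0/p_e$, I would set $A := 1 + p_e e_{\rm min}/(\underline\omega w_0) \geq 1$ so that $T_{max}(\varepsilon, e_{\rm min}) = w_0(1 - A^{-\beta\varepsilon})$. Monotonicity in $e_{\rm min}$, $p_e$ and $\underline\omega$ then reduces to the evident monotonicity of $A$ in each of these variables, composed with the decreasing map $x \mapsto x^{-\beta\varepsilon}$ on $[1,\infty)$; monotonicity in $\varepsilon$ and $\beta$ follows from $A \geq 1$, so that $A^{-\beta\varepsilon} = \exp(-\beta\varepsilon \ln A)$ is non-increasing in each of $\varepsilon$ and $\beta$. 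Independence from $\alpha$ in this region is immediate from the formula. For the second region, I would set $B := p_e e_{\rm min}/(\alpha\underline\omega w_0) \geq 1$ and factor
\begin{align*}
T_{max} = w_0\bigl(1 - B^{-\beta\varepsilon\alpha/(1+\alpha)}(1+\alpha)^{-\beta\varepsilon}\bigr),
\end{align*}
then run the same type of monotonicity argument on each of the two exponentiated factors, whose bases both lie in $[1,\infty)$.

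The only genuinely non-trivial step is monotonicity in $w_0$, because $w_0$ appears both as a multiplicative prefactor and inside the parenthesis, where it pushes in the opposite direction. In the first region, writing $x := p_e e_{\rm min}/(w_0\underline\omega)$ and differentiating $T_{max}(w_0) = w_0(1 - (1+x)^{-\beta\varepsilon})$ with respect to $w_0$, one readily obtains $\partial T_{max}/\partial w_0 = g(x)$, with
\begin{align*}
g(x) := 1 - (1+x)^{-\beta\varepsilon} - \beta\varepsilon\, x(1+x)^{-\beta\varepsilon-1}.
\end{align*}
Since $g(0) = 0$ and a direct computation gives $g'(x) = \beta\varepsilon(\beta\varepsilon+1)x(1+x)^{-\beta\varepsilon-2} \geq 0$ for $x \geq 0$, we conclude $g \geq 0$, which yields the desired monotonicity. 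The second region is handled analogously after an analogous substitution. This $w_0$-step is the main obstacle, since the sign of the derivative is not obvious a priori from the raw formula, and the auxiliary function $g$ is precisely what resolves the ambiguity.
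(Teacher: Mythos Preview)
Your proposal is correct and essentially coincides with the paper's own proof: the paper also computes the partial derivatives piece by piece, and for the only delicate variable $w_0$ it introduces exactly your auxiliary function (written there as $f(x)=1-\erm^{-\beta\varepsilon\ln(1+x)}(1+\beta\varepsilon x/(1+x))$, which is your $g$), checks $f(0)=0$ and $f'(x)=\beta\varepsilon(1+\beta\varepsilon)x(1+x)^{-\beta\varepsilon-2}\geq 0$, and then states that the second region is handled similarly. Your framing via the substitutions $A$ and $B$ is slightly cleaner for the easy variables, but the argument is the same.
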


The results presented in the previous corollary (see \Cref{app:tech_proof_CS} for the proof) are, for the most part, very intuitive. First, the agent is naturally willing to pay more for a higher insured quantity. 
In addition, the higher the agent's initial income $w_0$, the more he is willing to pay a high price. Conversely, the higher his income $w_1$ in $t=1$ ($\underline \omega$ close to $1$), the less he is willing to pay a high price. Indeed, if the agent is not likely to lose significant income between $t=0$ and $t=1$, this kind of insurance is of limited value to him. 
Moreover, if the staple good is expensive ($p_e$ high), the agent is willing to pay more for the insurance. Similarly, the more the agent values the future ($\beta$ large), the more he is ready to spend a large part of his income in $t=0$ by buying the insurance in order to benefit from it in $t=1$.

\medskip

All the aforementioned points seem to be in line with reality and therefore validate the choice of our model.
Nevertheless, the main feature to observe is that the maximum price increases with the type $\varepsilon$ of the agents, \textit{i.e.}, the probability of income loss. Therefore, the riskier the agent is, the more willing he is to pay a high price for the same insured quantity. Finally, the fact that for a small insured quantity ($e_{\rm min} < \alpha \underline\omega w_0/p_e$), the maximum price is independent of $\alpha$, is related to \Cref{rk:q=1}. Indeed, in this case, the insurance acts as an income insurance, since the agent would have consumed the amount $e_{\rm min}$ anyway, even without insurance. The maximum price is therefore independent of his preference between the essential good and the composite good. On the other hand, when the insured quantity is large, the agent would prefer to resell the surplus. This would allow him an additional income that he could then optimally distribute between the essential good and the composite good, according to his preference represented by the parameter $\alpha$. Following the reasoning developed in the proof of \Cref{cor:comp_stat_tmax}, we can state that the maximum price is increasing with respect to $\alpha$ for $\alpha$ sufficiently small and then decreasing. This result is also expected as $\alpha$ small means that the agent has a strong preference for the essential good, and is thus very inclined to buy insurance. On the contrary, when $\alpha$ gets close to $1$, the agent becomes indifferent between the two goods, and is therefore less inclined to buy the insurance.

\subsection{Revelation principle}\label{ss:RP_QI}

Traditionally, in adverse selection models (see \citet{salanie2005economics} for the general theory on adverse selection), the contract offered by the principal has to satisfy the \textit{incentive compatibility $(IC)$ constraint}: the contract has to be such that an agent of type $\varepsilon$ would subscribe the contract corresponding to him, and thus reveal his type $\varepsilon$, which is previously unknown to the principal. Indeed, the well-known \textit{revelation principle} implies that we can restrict the study to incentive compatible mechanisms. More precisely, the revelation principle stated for example in \citet{salanie2005economics} can be adapted to our framework as follows:
\textit{If the optimal quantity $e_{\rm min}$ chosen by an agent of type $\varepsilon$ can be implemented through some mechanism, then it can also be implemented through a direct and truthful mechanism where the agent reveals his risk $\varepsilon$.}

\medskip

First, we can show that the Spence-Mirrlees condition, also called the \textit{constant sign assumption} in \citet{guesnerie1984complete}, as defined in \citet{laffont2009theory}, is automatically satisfied in our framework (see \Cref{lem:sm_condition}). This property makes the incentive problem well behaved in the sense that only local incentive constraints need to be considered. This condition was introduced by \citet{spence1973job} in his theory of signaling on the labour market and, similarly, by \citet{mirrlees1971exploration} in his theory of optimal income taxation, as the \textit{single-crossing assumption}: the condition implies that the indifference curves of two different types of agents can only cross once. This condition also has economic content, which implies in our framework that agents with a higher probability of income loss are willing to pay more for a given increase in $e_{\rm min}$ than agents of a less risky type. This condition ensures that it is possible to separate the agents of a high risk type from those of a low risk type by offering them better coverage in exchange for a higher premium.

\medskip

To find the risk-revealing contracts, we define, for an admissible menu of contracts $(e_{\rm min}, T)$, an associated pair $(q, t_0)$ of functions of $\varepsilon \in [0,1]$. 

\begin{definition}\label{def:regularity}
A mechanism $(q,t_0)$ is said to be admissible if
\begin{enumerate}[label=$(\roman*)$]
\item $q$ and $t_0$ are continuous functions on $[0,1]$, taking values in an interval contained in $\R_+$ and $[0, 1)$;
\item $q$ and $t_0$ have continuous first and second derivatives on $(0,1)$, except at $\varepsilon_1 := \min \{ \epsilon \in [0,1], \; s.t. \; q(\epsilon) = 1 \}$\footnote{With the convention that $\varepsilon_1 = 0$ if $q (\epsilon) \geq 1$ for all $\epsilon \in [0,1]$.}.
\end{enumerate}
More precisely, an admissible menu of contracts $(e_{\rm min}, T)$ is associated with an admissible mechanism $(q, t_0)$ if, for all quantities $e_{\rm min}$ available at price $T$, there is an $\varepsilon \in [0,1]$ such that $e_{\rm min} = \alpha q(\varepsilon) \underline\omega w_0/p_e$ and $T = w_0 t_0(\varepsilon)$.
\end{definition}

The first point of the previous definition is entirely based on assumptions made about an admissible menu of contracts in \Cref{def:adm_contract_QI}. The second point on the regularity of $q$ and $t_0$ is more a technical assumption made to simplify the reasoning: it allows us to use the first- and second-order conditions to define an incentive compatible contract. Unfortunately, the case separation between $q < 1$ and $q \geq 1$ subsequently implies a loss of $\Cc^1$ continuity of the quantity and price at this point.

\medskip

We thus limit our study to mechanisms that are smooth enough in the sense of \Cref{def:regularity} $(ii)$. According to the reasoning of \cite{guesnerie1984complete}, our results could be easily extended to piecewise continuously differentiable mechanisms of class $\Cc^1$, and some could even be generalised to all mechanisms. Nevertheless, significant additional difficulties can be avoided with this smoothness assumption. Moreover, one may note that the optimal mechanism in the first-best case is smooth in the sense of \Cref{def:regularity} $(ii)$ and that it therefore makes sense to restrict our study in this way.

\medskip

The IC constraint says that the utility of an agent of type $\varepsilon \in [0,1]$ has to be maximal for the choice of the contract $(q(\varepsilon), t_0(\varepsilon))$, \textit{i.e.},
\begin{align*}
    \textnormal{EU}^{\rm Q} \big(\varepsilon, q(\varepsilon), t_0(\varepsilon) \big)
     = \max_{\varepsilon' \in [0,1]} \textnormal{EU}^{\rm Q} \big( \varepsilon, q(\varepsilon'), t_0(\varepsilon') \big).
\end{align*}
In other words, if a menu of contracts satisfies the IC constraint, then the agent has an interest in revealing his type by choosing the contract made for him. We denote by $\Cc^{\rm Q}$ the set of admissible mechanisms satisfying this constraint. With the aim of lightening the equations, we denote throughout this section:
\begin{align}\label{eq:Q_0}
Q_0 ( \varepsilon )
:= 
\left\{
\arraycolsep=1.6pt\def\arraystretch{1.8}
    \begin{array}{ll}
    \displaystyle\int_{0}^\varepsilon \ln \big( 1 + \alpha q(\epsilon) \big) \drm \epsilon
    & \mbox{ if } \varepsilon \in [0, \varepsilon_1) \\
    \displaystyle\int_{0}^{\varepsilon_1} \ln \big( 1 + \alpha q(\epsilon) \big) \drm \epsilon
    + \dfrac{\alpha}{1+\alpha} \displaystyle \int_{\varepsilon_1}^\varepsilon \ln \big( q(\epsilon) \big) \drm \epsilon
    & \mbox{ if } \varepsilon \in [\varepsilon_1, 1].
    \end{array}
\right.
\end{align}
\begin{theorem}\label{thm:IC_SB}
An admissible mechanism $(q,t_0)$ satisfies the IC constraint for all $\varepsilon \in [0,1]$ if and only if the function $q$ is non-decreasing on $[0,1]$ and there exists $c_q \geq 0$ such that the price $t_0$ satisfies for all $\varepsilon \in [0,1]$,
\begin{align}\label{eq:TQ_emin_eps}
t_0(\varepsilon) = 1 - c_q  \erm^{\beta Q_0(\varepsilon)} \times 
\left\{
\arraycolsep=1.6pt\def\arraystretch{1.5}
    \begin{array}{ll}
    \big( 1 + \alpha q (\varepsilon) \big)^{- \beta \varepsilon}
    & \mbox{ if } \varepsilon \in [0, \varepsilon_1), \\
    \big( 1 + \alpha \big)^{- \beta \varepsilon_1} \big( q (\varepsilon) \big)^{- \beta \varepsilon \alpha/(1+\alpha)}
    & \mbox{ if } \varepsilon \in [\varepsilon_1, 1].
    \end{array}
\right.
\end{align}
\end{theorem}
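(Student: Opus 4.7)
The plan is to characterise incentive compatibility via the standard first- and second-order conditions for the type-$\varepsilon$ agent's program $\varepsilon' \mapsto \textnormal{EU}^{\rm Q}\bigl(\varepsilon, q(\varepsilon'), t_0(\varepsilon')\bigr)$ and then integrate the resulting envelope identity to recover the price formula~\eqref{eq:TQ_emin_eps}. Using \eqref{eq:exp_utility}, this program reduces to maximising
\begin{equation*}
G(\varepsilon,\varepsilon') := (1+\alpha)\ln\bigl(1 - t_0(\varepsilon')\bigr) + \beta\varepsilon\,\widebar U\bigl(q(\varepsilon')\bigr)
\end{equation*}
over $\varepsilon'$, since the type-dependent residual $\textnormal{EU}^\varnothing(\varepsilon)$ is additive. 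By \Cref{def:regularity}, $q$ and $t_0$ are $\Cc^2$ on $[0,\varepsilon_1)$ and on $(\varepsilon_1,1]$; moreover \eqref{eq:overline_U} shows that $\widebar U \in \Cc^1(\R_+)$ with $\widebar U' > 0$, so the pointwise FOC and SOC apply unambiguously on each sub-interval.

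For necessity, writing $\partial_{\varepsilon'} G(\varepsilon,\varepsilon) = 0$ gives the FOC
\begin{equation*}
(1+\alpha)\,\dfrac{t_0'(\varepsilon)}{1 - t_0(\varepsilon)} \;=\; \beta\varepsilon\,\widebar U'\bigl(q(\varepsilon)\bigr)\,q'(\varepsilon),
\end{equation*}
and the envelope theorem applied to $\varepsilon \mapsto G(\varepsilon,\varepsilon)$ yields $\tfrac{\drm}{\drm\varepsilon} G(\varepsilon,\varepsilon) = \beta\,\widebar U(q(\varepsilon))$. Integrating from $0$ to $\varepsilon$ and setting $c_q := 1 - t_0(0)$ gives
\begin{equation*}
(1+\alpha)\ln\bigl(1 - t_0(\varepsilon)\bigr) \;=\; (1+\alpha)\ln c_q \;+\; \beta\!\int_0^\varepsilon\!\widebar U\bigl(q(\epsilon)\bigr)\,\drm\epsilon \;-\; \beta\varepsilon\,\widebar U\bigl(q(\varepsilon)\bigr).
\end{equation*}
Splitting the integral along the two branches of $\widebar U$ and inserting the definition of $Q_0$ in \eqref{eq:Q_0}, direct algebra shows that this identity is equivalent to \eqref{eq:TQ_emin_eps}; in particular, the constant $(1+\alpha)^{-\beta\varepsilon_1}$ in the second branch originates from the additive term $(1+\alpha)\ln(1+\alpha)$ in $\widebar U$ on $\{q \geq 1\}$, which is not absorbed into the definition of $Q_0$.

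For the monotonicity of $q$, I would differentiate the FOC along the diagonal $\varepsilon' = \varepsilon$, obtaining $\partial^2_{\varepsilon\varepsilon'}G(\varepsilon,\varepsilon) + \partial^2_{\varepsilon'\varepsilon'}G(\varepsilon,\varepsilon) = 0$; combined with the SOC $\partial^2_{\varepsilon'\varepsilon'}G(\varepsilon,\varepsilon)\leq 0$ this forces $\partial^2_{\varepsilon\varepsilon'}G(\varepsilon,\varepsilon) = \beta\,\widebar U'(q(\varepsilon))\,q'(\varepsilon) \geq 0$, and since $\widebar U' > 0$, $q$ is non-decreasing on $[0,\varepsilon_1)$ and on $(\varepsilon_1,1]$, hence on $[0,1]$ by the continuity required in \Cref{def:regularity}.

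For sufficiency, I would invoke the Spence--Mirrlees / single-crossing property from \Cref{lem:sm_condition}: when $q$ is non-decreasing and $t_0$ satisfies \eqref{eq:TQ_emin_eps} (equivalently, the local FOC holds everywhere), the identity $\partial_{\varepsilon'}G(s,s)=0$ integrates into $\partial_{\varepsilon'}G(\varepsilon,s) = \int_s^\varepsilon \partial^2_{\varepsilon\varepsilon'}G(u,s)\,\drm u$, and the sign $\partial^2_{\varepsilon\varepsilon'}G = \beta\,\widebar U'(q)q' \geq 0$ ensures that $\partial_{\varepsilon'}G(\varepsilon,s)$ has the right sign to give $G(\varepsilon,\varepsilon') \leq G(\varepsilon,\varepsilon)$ for every $\varepsilon'$ after one more integration in $s$, which is the standard argument of \citet{guesnerie1984complete}. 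The main technical obstacle I anticipate is at $\varepsilon_1$, where $q$ may fail to be $\Cc^1$: there I would work separately on $[0,\varepsilon_1]$ and $[\varepsilon_1,1]$, use continuity of $\widebar U$ and $q$ to check that the two branches of \eqref{eq:TQ_emin_eps} agree at $\varepsilon_1$ (both reduce to $c_q\,\erm^{\beta Q_0(\varepsilon_1)}(1+\alpha)^{-\beta\varepsilon_1}$), and verify that no incentive-compatibility inequality is violated across this boundary, so that the gluing at $\varepsilon_1$ is harmless.
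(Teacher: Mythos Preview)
Your proposal is correct and follows essentially the same route as the paper's proof: deriving the FOC and SOC for the report $\varepsilon'\mapsto G(\varepsilon,\varepsilon')$, integrating the envelope identity to recover \eqref{eq:TQ_emin_eps}, extracting $q'\geq 0$ from the differentiated FOC combined with the SOC, and establishing sufficiency via the sign of the cross-partial $\partial^2_{\varepsilon\varepsilon'}G = \beta\,\widebar U'(q)q'$, which is exactly the computation the paper carries out explicitly to obtain $\partial_{\varepsilon'}G(\varepsilon,\varepsilon')\propto(\varepsilon-\varepsilon')q'(\varepsilon')$. The only cosmetic difference is that the paper solves the FOC as an ODE for $t_0$ separately on each branch and then matches the two integration constants by continuity at $\varepsilon_1$ (and treats the case where $q\equiv 1$ on a non-degenerate interval with a direct argument), whereas your envelope integration with the globally $\Cc^1$ function $\widebar U$ handles the gluing automatically.
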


The previous proposition provides a characterisation of an admissible mechanism $(q, t_0)$ satisfying the IC constraint for all types of agents; its proof is postponed to \Cref{app:tech_proof_revealing}.
Nevertheless, the concrete menu of contracts proposed by the principal must be composed of quantities $e_{\rm min}$ and a price $T$ associated with each quantity, regardless of the type of agent, as specified in \Cref{def:adm_contract_QI}. The form of the practical menu of contracts associated with an admissible mechanism is a consequence of the previous theorem and is given by \Cref{cor:contract_emin}. We can summarise this result by saying that considering a sufficiently smooth admissible menu of revealing contracts $(e_{\rm min}, T)$ is equivalent to considering an admissible mechanism $(q,t_0)$, where $q$ is non-decreasing and the price $t_0$ is given by \eqref{eq:TQ_emin_eps}. It is now necessary to establish conditions implying that such a mechanism satisfies the agent's participation constraint.

\subsection{Adding the participation constraint}\label{ss:PC}

Recall that an agent of type $\varepsilon \in [0,1]$ accepts the contract if his utility from it is larger than his reservation utility, defined in our framework as his utility without insurance. To establish a precise result, we define the function $\underline c$ by:
\begin{align}\label{eq:underline_c}
\underline c (\varepsilon)
:= 
\left\{
\arraycolsep=1.5pt\def\arraystretch{1.2}
    \begin{array}{ll}
    \erm^{-\beta Q_0(\varepsilon)}
    & \mbox{ if } \varepsilon < \varepsilon_1, \\
    (1+ \alpha)^{- \beta (\varepsilon - \varepsilon_1)} 
    \erm^{-\beta Q_0(\varepsilon)}
    & \mbox{ if } \varepsilon \geq \varepsilon_1,
    \end{array}
\right. \text{ for all } \; \varepsilon \in [0,1].
\end{align}

The following proposition states that by controlling the constant $c_q$ in the insurance premium $t_0$ given by \eqref{eq:TQ_emin_eps}, the principal can choose whether to select agents of lower risk types. As a result, only the agents of a sufficiently high risk type are selected by the principal. Indeed, agents with a high probability of losing their income are easily satisfied and willing to pay much more than those of a less risky type. This result is entirely implied by the fact that the reservation utility of an agent depends on his type and only occurs in principal-agent problems with countervailing incentives. Additional information including the proof of the proposition can be found in \Cref{app:tech_proof_participation}. In particular, \Cref{rk:constant_reservation} shows that if a constant reservation utility had been chosen, the selected agents would have been those of the less risky type.
\begin{proposition}\label{prop:PC_for_RevContracts}
If the mechanism $(q,t_0)$ is admissible and incentive compatible, an agent of type $\varepsilon \in [0, 1]$ subscribes to the insurance plan if and only if $c_q \geq \underline c (\varepsilon)$. Moreover, by defining $\underline \varepsilon := \min \{ \varepsilon \in [0,1], \; \text{s.t.} \; c_q = \underline c(\varepsilon)\}$, the participation constraint is satisfied only for agents of type $\varepsilon \in [\underline \varepsilon, 1]$.
\end{proposition}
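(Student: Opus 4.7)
The plan is to obtain the characterisation by simply combining the two previous results: the participation constraint from Proposition \ref{prop:PC} and the explicit IC premium from Theorem \ref{thm:IC_SB}. Fix an admissible incentive compatible mechanism $(q,t_0)$. The agent of type $\varepsilon$ subscribes if and only if $t_0(\varepsilon) \leq t_{max}(\varepsilon, q(\varepsilon))$, which is equivalent to $1 - t_0(\varepsilon) \geq 1 - t_{max}(\varepsilon, q(\varepsilon))$. I would substitute the expressions \eqref{eq:TQ_emin_eps} and \eqref{eq:max_price_QI} for the two sides and treat the cases $\varepsilon < \varepsilon_1$ and $\varepsilon \geq \varepsilon_1$ separately.

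In the first regime, the factors $(1+\alpha q(\varepsilon))^{-\beta\varepsilon}$ cancel, leaving the clean inequality $c_q \, \erm^{\beta Q_0(\varepsilon)} \geq 1$, i.e., $c_q \geq \erm^{-\beta Q_0(\varepsilon)} = \underline c(\varepsilon)$. In the second regime, the factors $q(\varepsilon)^{-\beta\varepsilon\alpha/(1+\alpha)}$ cancel, and comparing the powers of $(1+\alpha)$ gives $c_q\, \erm^{\beta Q_0(\varepsilon)} \geq (1+\alpha)^{-\beta(\varepsilon - \varepsilon_1)}$, i.e., again $c_q \geq \underline c(\varepsilon)$. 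This establishes the first claim, and also confirms that $\underline c$ is continuous at $\varepsilon_1$, since both branches agree there.

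For the second claim I would show that $\underline c$ is non-increasing on $[0,1]$, so that the set $\{\varepsilon : c_q \geq \underline c(\varepsilon)\}$ is of the form $[\underline \varepsilon, 1]$, and since $\underline c(0) = 1 \geq c_q$ while $\underline c$ is continuous, $\underline \varepsilon$ is well defined as the minimum in the statement. The derivative of $\ln \underline c$ on $[0,\varepsilon_1)$ equals $-\beta \ln(1+\alpha q(\varepsilon)) \leq 0$ because $q \geq 0$. On $[\varepsilon_1, 1]$ it equals $-\beta \ln(1+\alpha) - \beta \tfrac{\alpha}{1+\alpha} \ln q(\varepsilon)$, which is non-positive because, by the definition of $\varepsilon_1$ and the monotonicity of $q$ given by Theorem \ref{thm:IC_SB}, $q(\varepsilon) \geq 1$ throughout this interval.

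The main (minor) obstacle is this last monotonicity check on $[\varepsilon_1, 1]$: it crucially relies on $q\geq 1$ there, so a careful appeal to the IC monotonicity of $q$ together with the definition of $\varepsilon_1$ is needed. Once this is granted, the conclusion follows: $c_q \geq \underline c(\varepsilon)$ holds on an interval of the form $[\underline\varepsilon,1]$, and the participation constraint fails strictly for $\varepsilon < \underline\varepsilon$. The economic interpretation sketched before the statement — that only sufficiently risky agents participate — is a direct reading of this: the threshold $\underline\varepsilon$ is controlled by the principal via the scalar $c_q$, which acts as a rent parameter.
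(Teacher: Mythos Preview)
Your proposal is correct and follows essentially the same route as the paper: substitute the IC price \eqref{eq:TQ_emin_eps} into the participation inequality $t_0(\varepsilon)\le t_{max}(\varepsilon,q(\varepsilon))$, cancel the common factors in each regime to obtain $c_q\ge \underline c(\varepsilon)$, and then differentiate $\underline c$ (you take $\ln\underline c$, the paper differentiates $\underline c$ directly) to show it is non-increasing, using $q\ge 0$ on $[0,\varepsilon_1)$ and $q\ge 1$ on $[\varepsilon_1,1]$. Your explicit appeal to the IC monotonicity of $q$ to justify $q\ge 1$ past $\varepsilon_1$, and your remark that $\underline c(0)=1\ge c_q$ (which follows from $t_0(0)\ge 0$), are small clarifications the paper leaves implicit.
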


Now that the subset of risk-revealing contracts satisfying the agents' participation constraint is well defined, we can study the principal's problem.

\subsection{The optimal menu of contracts}\label{ss:principal_SB}

In the third-best case, the principal's goal is to find an optimal admissible menu of contracts $(e_{\rm min}, T)$ to maximise her profit, as defined by \eqref{eq:def_principal_pb_SB}, without knowing the agent's type. In fact, instead of maximising the principal's utility over all possible contracts, we restrict the study to the menus of contracts associated with an admissible mechanism $(q,t_0)$ in the sense of \Cref{def:regularity}. Thus, in line with the revelation principle, it is sufficient to only consider admissible mechanisms that are risk revealing. Recalling that an agent subscribes to a contract if and only if it satisfies his participation constraint, the principal's problem becomes:
\begin{align}\label{eq:principal_pb_simplify}
    \sup_{(q,t_0) \in \Cc^{\rm Q}} \int_{\varepsilon \in \Xi(q, t_0)} \pi(\varepsilon) \drm \varepsilon, \; \text{ for } \; \pi (\varepsilon) := w_0 t_0(\varepsilon) - \varepsilon \alpha q( \varepsilon) \underline\omega w_0,
\end{align}
where $\Xi(q,t_0)$ denotes the set of $\varepsilon \in [0,1]$ such that the participation constraint $t_0(\varepsilon) \leq t_{max} (\varepsilon, q (\varepsilon))$ is satisfied, where $t_{max}$ is defined by \eqref{eq:max_price_QI}.

\medskip

From \Cref{thm:IC_SB}, we know that an admissible mechanism $(q,t_0)$ satisfies the IC constraint if and only if $q$ is increasing and the price $t_0$ is given by \eqref{eq:TQ_emin_eps}. Moreover, we know from \Cref{prop:PC_for_RevContracts} that the participation constraint is satisfied only for agents of type $\varepsilon \in [\underline \varepsilon, 1]$ if and only if $c_q = \underline c(\underline \varepsilon)$, where $\underline c$ is defined by \eqref{eq:underline_c}. For $\underline \varepsilon \in [0,1]$, we thus denote by $\Cc^{\rm Q}(\underline \varepsilon)$ the set of admissible and revealing mechanisms such that the participation constraint is satisfied for all $\varepsilon \in [\underline \varepsilon, 1]$ only. In line with the previous reasoning, the principal's problem is equivalent to:
\begin{align*}
    \sup_{\underline \varepsilon \in [0,1]} \Pi (\underline \varepsilon), \; \text{ where } \; \Pi (\underline \varepsilon) := &\ \sup_{(q,t_0)\in \Cc^{\rm Q}(\underline \varepsilon)} \int_{\underline \varepsilon}^1 \pi(\varepsilon) \drm \varepsilon.
\end{align*}

To solve the principal's problem, we first fix $\underline \varepsilon \in [0,1]$. We denote by $\Qc(\underline \varepsilon)$ the space of functions $Q$ defined on $[\underline \varepsilon, 1]$, continuous and piecewise continuously differentiable of class $\Cc^3$, satisfying:
\begin{enumerate}[label=$(\roman*)$]
\item $Q$ is continuous on $[\underline \varepsilon, 1]$ and such that $Q(\underline \varepsilon) = 0$;
\item $Q'$ is positive and continuous except at $\varepsilon_1$, where $Q'(\varepsilon_1^{-}) = \ln (1+\alpha)$ and $Q'(\varepsilon_1^{+}) = 0$;
\item $Q''$ is positive and continuous except at $\varepsilon_1$.
\end{enumerate}
We consider the following second-order non-linear ordinary differential equation (ODE):
\begin{align}\label{eq:ODE}
    \dfrac{\beta}{\underline\omega} \big( \beta  \varepsilon^2 Q''(\varepsilon) - 2 \big)
    \erm^{\beta (Q(\varepsilon) - \varepsilon Q'(\varepsilon))}
    + G(\varepsilon, Q) = 0,
\end{align}
with initial conditions $Q(\underline \varepsilon) = 0$ and $Q'(\underline \varepsilon) = \underline q$, for $\underline q \in \R_+$, and where the function $G$ is defined for any $(\varepsilon, Q) \in [\underline \varepsilon, 1] \times \Qc(\underline \varepsilon)$ by:
\begin{align*}
G(\varepsilon, Q) :=
\left\{
\arraycolsep=1.4pt\def\arraystretch{1.6}
\begin{array}{ll} 
    \big( 1 + \varepsilon Q''( \varepsilon ) \big) \erm^{Q'( \varepsilon )}, 
    & \mbox{ for } \varepsilon \in [\underline \varepsilon,  \varepsilon_1 \vee \underline \varepsilon), \\
    \big( 1 + \alpha \big)^{\beta (\varepsilon_1 \vee \underline \varepsilon)+1}
    \bigg( 1 + \varepsilon \dfrac{1+\alpha}{\alpha} Q''(\varepsilon) \bigg)
    \erm^{\frac{1+\alpha}{\alpha} Q' ( \varepsilon )}, 
    & \mbox{ for } \varepsilon \in [\varepsilon_1 \vee \underline \varepsilon, 1].
\end{array}
\right.
\end{align*}
This ODE is at the heart of the resolution of the principal's problem, since it characterises the optimal admissible mechanism, for $\underline \varepsilon$ and $\underline q$ fixed.

\begin{theorem}\label{thm:sol_principal_ODE}
Given $\underline \varepsilon \in [0,1]$ and $\underline q \in \R_+$, if there exists $Q \in \Qc(\underline \varepsilon)$ solution to the \textnormal{ODE \eqref{eq:ODE}}, then the optimal admissible mechanism $(q^\star,t_0^\star)$ for the principal is given by:
\begin{align*}
\arraycolsep=1.5pt\def\arraystretch{1.8}
    \begin{array}{ll}
    \bigg(\dfrac{1}{\alpha}\big( \erm^{Q' ( \varepsilon )} -1 \big),  1 - \erm^{\beta (Q(\varepsilon) - \varepsilon Q'(\varepsilon))} \bigg)
    & \mbox{ for } \varepsilon \in [\underline \varepsilon,  \varepsilon_1 \vee \underline \varepsilon) \\
    \bigg( \erm^{\frac{1+\alpha}{\alpha} Q' ( \varepsilon )}, 1 - \big( 1 + \alpha \big)^{- \beta (\varepsilon_1 \vee \underline \varepsilon)} 
    \erm^{\beta (Q(\varepsilon)- \varepsilon  Q' ( \varepsilon ))} \bigg)
    & \mbox{ for } \varepsilon \in [\varepsilon_1 \vee \underline \varepsilon, 1].
    \end{array}
\end{align*}
\end{theorem}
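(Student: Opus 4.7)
The plan is to treat the inner optimisation $\Pi(\underline\varepsilon)$ as a classical calculus of variations problem. Fix $\underline\varepsilon \in [0,1]$. By Theorem~\ref{thm:IC_SB} combined with Proposition~\ref{prop:PC_for_RevContracts}, the incentive-compatible mechanism selecting exactly the types in $[\underline\varepsilon,1]$ is parametrised, via \eqref{eq:TQ_emin_eps} with $c_q = \underline{c}(\underline\varepsilon)$, by the single non-decreasing function $q$. Substituting this expression of $t_0$ into $\pi(\varepsilon) = w_0 t_0(\varepsilon) - \varepsilon \alpha q(\varepsilon) \underline\omega w_0$ reduces $\Pi(\underline\varepsilon)$ to a single-unknown variational problem.

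The next step is a change of variable designed to eliminate the non-local dependence of $t_0$ on $q$ through the cumulative integral $Q_0$. I would introduce $Q \in \Qc(\underline\varepsilon)$ by $Q(\underline\varepsilon)=0$ and
\begin{align*}
Q'(\varepsilon) := \begin{cases} \ln(1 + \alpha q(\varepsilon)) & \text{if } \varepsilon \in [\underline\varepsilon, \varepsilon_1 \vee \underline\varepsilon), \\ \frac{\alpha}{1+\alpha} \ln q(\varepsilon) & \text{if } \varepsilon \in [\varepsilon_1 \vee \underline\varepsilon, 1]. \end{cases}
\end{align*}
Inverting these relations gives exactly the formulas for $q^\star$ stated in the theorem, and with this choice $Q_0$ coincides with $Q$ on $[\underline\varepsilon, 1]$. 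Plugging back into \eqref{eq:TQ_emin_eps} and simplifying algebraically then yields the announced piecewise expression for $t_0^\star$, so that the profit rewrites as $\int_{\underline\varepsilon}^1 L(\varepsilon, Q, Q')\,\drm\varepsilon$, with Lagrangian $L = 1 - \erm^{\beta(Q - \varepsilon Q')} - \varepsilon \underline\omega (\erm^{Q'} - 1)$ on $[\underline\varepsilon, \varepsilon_1 \vee \underline\varepsilon)$, and the analogous expression with exponent $(1+\alpha)/\alpha$ and prefactor $(1+\alpha)^{-\beta(\varepsilon_1 \vee \underline\varepsilon)}$ on $[\varepsilon_1 \vee \underline\varepsilon, 1]$.

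Once this rewriting is done, I would apply the Euler-Lagrange equation $\frac{\drm}{\drm\varepsilon} \partial_{Q'} L = \partial_Q L$ on each subinterval, carefully differentiate the composite exponentials, and divide through by $w_0 \underline\omega$ to isolate the characteristic factor $\beta/\underline\omega$. The output is exactly the ODE~\eqref{eq:ODE} with the piecewise-defined $G$ of the statement; the initial conditions $Q(\underline\varepsilon) = 0$ and $Q'(\underline\varepsilon) = \underline q$ record the participation threshold and the free initial slope.

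The main technical obstacle is the non-smooth transition at $\varepsilon_1$, where $q$ crosses $1$, forcing $Q$ to remain continuous while $Q'$ jumps from $\ln(1+\alpha)$ down to $0$, exactly as prescribed in Definition~\ref{def:regularity}; the two Euler-Lagrange branches must be patched consistently across this jump, which is why $G$ has a piecewise structure. A secondary subtlety is that Euler-Lagrange only provides a first-order stationarity condition: to promote this to a genuine maximum one must separately verify the monotonicity of the recovered $q$ (so that the global IC property of Theorem~\ref{thm:IC_SB} holds rather than just its local version), the transversality condition at the free endpoint $\varepsilon = 1$, and enough concavity of $L$ in $Q'$.
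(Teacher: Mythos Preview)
Your proposal is correct and follows essentially the same route as the paper: parametrise the IC+PC mechanism by $q$ alone via Theorem~\ref{thm:IC_SB} and Proposition~\ref{prop:PC_for_RevContracts} with $c_q=\underline c(\underline\varepsilon)$, change variable to $Q$, rewrite the profit as an integral of a Lagrangian in $(Q,Q')$ on each of the two subintervals, and read off ODE~\eqref{eq:ODE} as the Euler--Lagrange equation. The one substantive addition in the paper's argument is that, instead of invoking Legendre-type concavity of $L$ in $Q'$, it verifies the second-order sufficient condition directly: for each cost functional $F_1,F_2$ it shows that $\eta\mapsto F_i(Q+\eta R)$ has strictly positive second derivative for every admissible perturbation $R$, so the Euler--Lagrange stationary point is a genuine global minimiser of $F_i$ (hence maximiser of the profit). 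The residual issues you flag---monotonicity of the recovered $q$, and behaviour at the free endpoint---are not resolved inside the proof but deferred to Remark~\ref{rk:ODE_necessary}, consistent with the theorem being stated as a sufficient condition.
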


\begin{remark}\label{rk:ODE_necessary}
\textnormal{\Cref{thm:sol_principal_ODE}} only gives a sufficient condition for the principal's optimisation problem. In fact, it would be possible to obtain a necessary condition. Nevertheless, in the numerical example we are interested in $($detailed in the following subsection$)$, as the solution of \textnormal{ODE \eqref{eq:ODE}} naturally satisfies the constraint of being in $\Qc(\underline \varepsilon)$, we decide to simplify the result by presenting it in this way. For more details, the reader is referred to \textnormal{\Cref{rk:ODE_constraint}}.
\end{remark}

For the sake of clarity, the proof of the theorem is reported in \Cref{app:tech_proof_principal}. The main point to observe is that, contrary to the first-best case, the optimal insured quantity, $e_{\rm min}^\star (\varepsilon) = \alpha q^\star(\varepsilon) \underline \omega w_0/p_e$, is now increasing with respect to $\varepsilon$. Since the premium is also increasing with respect to $\varepsilon$, no agent should have an interest in hiding his risk anymore. 


\medskip

However, the ODE \eqref{eq:ODE} cannot be solved other than numerically. Therefore, to solve the principal's problem, one must first fix $\underline \varepsilon \in [0,1]$ and an arbitrary initial value $\underline q \in \R_+$ for $Q'(\underline \varepsilon)$. Then, the solution of the previous ODE can be computed. With this solution, one can compute the principal's profit in this case, using \Cref{cor:pb_principal_SB}. This profit can then be maximised by choosing an optimal initial condition $\underline q$ and an optimal $\underline \varepsilon \in [0,1]$. For the numerical results, readers are referred to the next subsection, which discusses the application of this model to a particular framework: fuel poverty.

\section{Efficiency of in-kind insurance to fight fuel poverty}\label{sec:fuel_poverty_TB}

With the motivation described in \Cref{ss:fuel_poverty}, we apply the previous results in the context of an insurance against fuel poverty in France. 

\subsection{Numerical approach}\label{ss:fuel_poverty_param}

Many French households that belong to the poorest 30\% are in fuel poverty. Indeed,
in 2018, in mainland France, 78\% of households in the first income decile were in fuel poverty. This percentage falls to 54\% (26\%) for the 2nd (3rd) income decile.
These households are already struggling to have the necessary energy to avoid fuel poverty and thus do not have the financial means to subscribe to fuel poverty insurance plans.

\medskip

To perform our numerical simulations, we therefore consider a middle-class household with an annual disposable income, after taxes and social benefits, of $w_0=35,000$ \euro{}.
We assume that this household lives in an all-electric house (electric heating and hot water), with an annual electricity consumption\footnote{Note that this consumption is approximately the average electricity consumption of a French household, which is equal to $14,527$ kWh, according to \cite{belaid2016understanding}.} of $e_0^{\varnothing}=14,403$ kWh. Since 2018, in France, the average price $p_e$ per kilowatt hour was $0.18$ \euro{}, and the share of household income spent on household energy expenditure was $7.41$\%. We deduce from the expression of $e_0^{\varnothing}$ the value $\alpha = 8\%$.
Moreover, we set $\underline\omega = 0.4$: the household thus has a probability $\varepsilon$ of having an annual disposable income equal to $0.4 \times 35,000$ \euro{} $ = 14,000$ \euro{} at time $t = 1$. To simplify, we assume $\beta =1$.

\medskip

From the recursive scheme explained in the previous subsection, we obtain the optimal $\underline \varepsilon^\star \approx 0.63$ with $Q'(\underline \varepsilon^\star) \approx 0$, and thus, agents of type $\varepsilon < 0.63$ are not insured. Solving ODE \eqref{eq:ODE} for these parameters, we obtain the optimal function $Q \in \Qc(\underline \varepsilon^\star, 0)$, and in particular $\varepsilon_1^\star \simeq 0.66$. Thanks to \Cref{thm:sol_principal_ODE}, we can compute the optimal revealing mechanism $(q^\star,t_0^\star)$, and thus the optimal insured quantity for an agent of type $\varepsilon \in [\underline \varepsilon^\star, 1]$, which is $e_{\rm min}^\star (\varepsilon) = \alpha q^\star(\varepsilon) \underline \omega w_0/p_e$, and its corresponding price $T^\star(\varepsilon) = w_0 t_0^\star(\varepsilon)$. 

\subsection{Optimal quantity insurance}

Optimal quantities and prices are represented by blue curves in \Cref{fig:quantity_SB}. The left graph shows that the insured quantity in the third-best case (blue curve) is smaller than that in the first-best case (green curve). Only agents of type $\varepsilon = 1$ are insured for the same quantity as they would be in the first-best case, \textit{i.e.}, $e^\star_{\rm min}(1) \approx 12,665$ kWh/year. In the right graph, the green curve represents the maximum price an agent of type $\varepsilon$ is willing to pay for the quantity $e^\star_{\rm min} (\varepsilon)$, given by $T_{max} (\varepsilon) := w_0 t_{max} (\varepsilon, q^\star(\varepsilon))$. We thus observe that an agent of type $\underline \varepsilon^\star$ pays his maximum price, and his information rent is thus reduced to zero. In contrast, an agent of type $\varepsilon > \underline \varepsilon^\star$ obtains an information rent. In particular, the information rent is increasing with $\varepsilon$.

\medskip

However, one should already note a limit of the model: the comparison between blue and orange curves in the right graph shows that the agents are willing to pay more than the future price of the quantity, \textit{i.e.}, $p_e e_{\rm min} (\varepsilon)$.
This fact highlights a significant inconsistency of these agents, who are willing to pay a high price for insurance, when it would be more efficient for them to save money instead. One way to address this inconsistency would be to offer another option, in parallel with the insurance. Indeed, if insurance is the only option for households, the principal abuses her monopoly position. In our opinion, this result highlights the importance of regulating this type of market. If the state's interest is in combating fuel poverty, insurance seems to be a good option, but at the same time, alternative solutions must also be developed. This will therefore motivate the study of a model with a prepayment option in \Cref{sec:prepayment}.

\begin{figure}[H]
    \centering
    \includegraphics[scale=0.5]{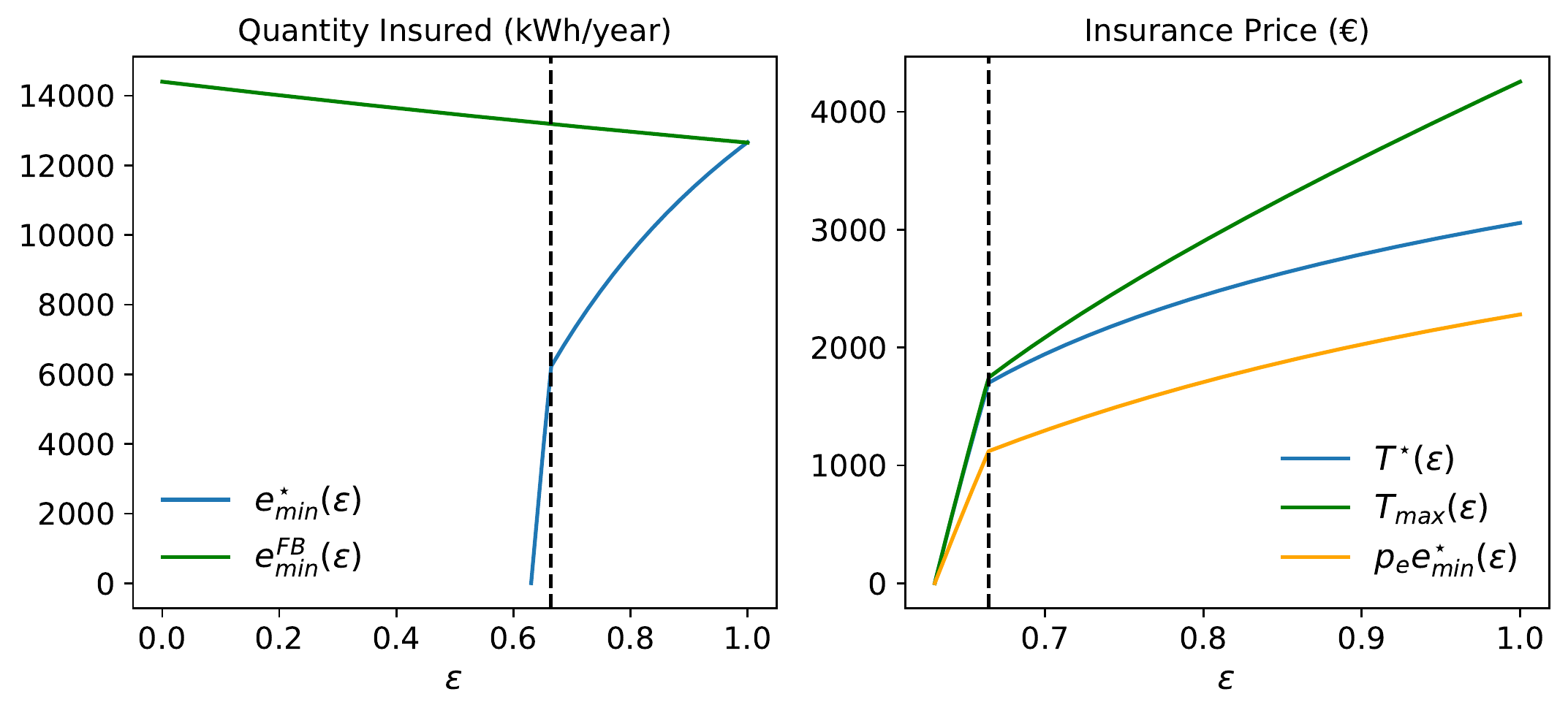}
\caption{Optimal insurance contract under adverse selection.}
{\footnotesize Left: the optimal insured quantity (blue) is compared to the quantity of the first-best case (green). Right: the optimal premium (blue) is compared to the maximum price (green) and to the future price (orange). Dotted black axes: $\varepsilon = \varepsilon_1^\star$.}
    \label{fig:quantity_SB}
\end{figure}

\subsection{Improvement of energy consumption}

Given the menu of contracts $(e_{\rm min}^\star, T^\star)$ associated with the optimal admissible revealing mechanism $(q^\star,t^\star_0)$, we can compute the optimal consumption of the agents who subscribe to an insurance contract (see \Cref{lem:no_insurance,lem:max_utility_t0,lem:max_utility_t1} for the formulas). In order to highlight the positive effects of quantity insurance, we compare with the quantities an agent will consume in the case of a financial insurance. In this case, the agent's problem is modified, in the sense that at time $t=1$, he no longer receives a quantity $e_{\rm min}$, but sees his income $\underline \omega w_0$ increases by a monetary amount $r$ (see \Cref{rk:financial_insurance}). This problem is highly similar to the one dealt with in this paper, which is why the theoretical results are omitted.

\medskip

The left graph of \Cref{fig:conso} represents the optimal consumption of energy at time $t=0$, $e_0^{\rm Q}$ with in-kind insurance (blue curve), $e_0^{\rm R}$ with classical revenue insurance (green curve) and $e_0^\varnothing$ without insurance (orange curve). Obviously, with insurance, an agent consumes less than he would without, since paying for insurance decreases the effective income he splits between the two goods.
Nevertheless, the in-kind insurance allows for a lower decrease in energy consumption compared to more traditional insurance.
Note that at time $t=1$, if the agent does not suffer a loss of income, the insurance is not activated, and he will thus consume the same quantity of energy in both cases, with and without insurance. 

\begin{figure}[h]
    \centering
    \includegraphics[scale=0.5]{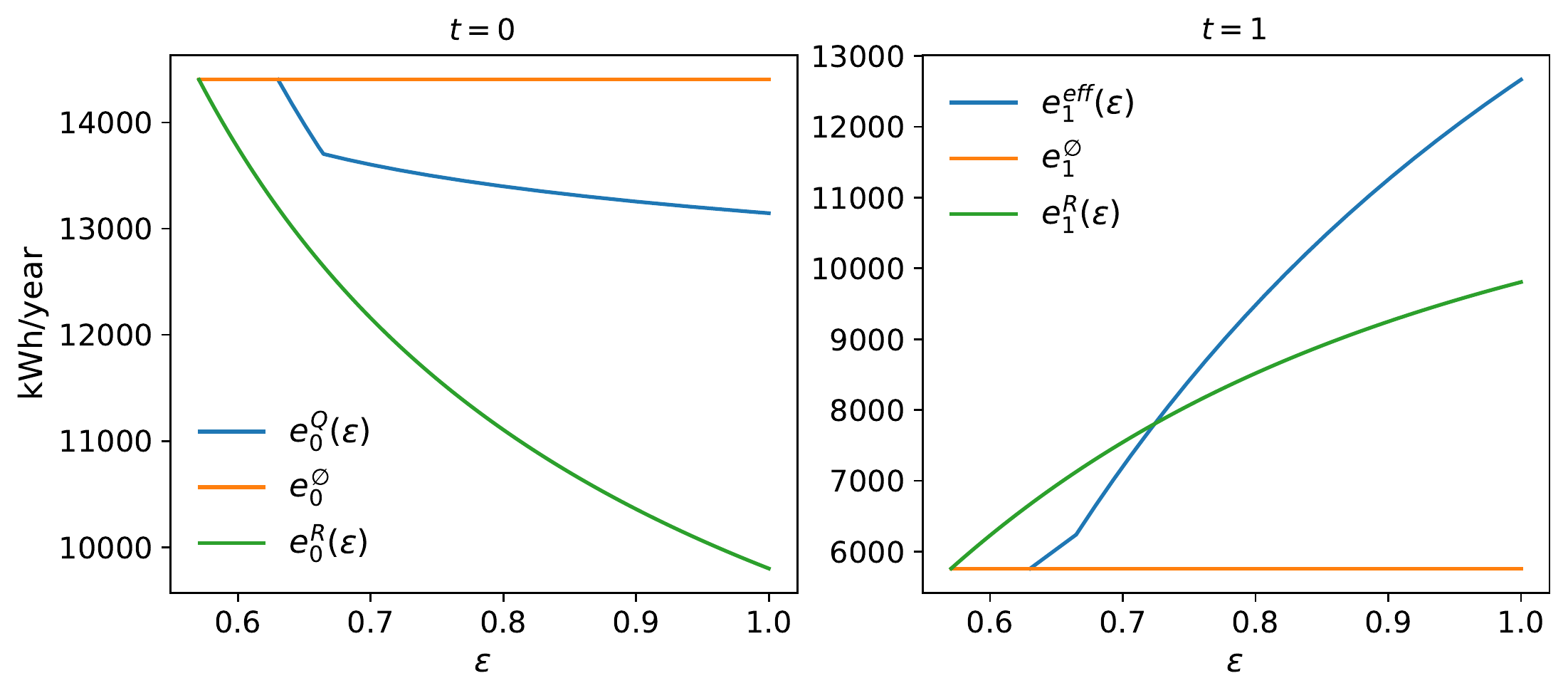}
\caption{Energy consumption in the third-best case.}
{\footnotesize Consumption with insurance (blue for in-kind, green for financial) and without (orange), at time $t=0$, on the left, and at time $t=1$ in the case of an income loss, on the right.}
    \label{fig:conso}
\end{figure}

In contrast, as we can see in the right graph, if the household suffers an income loss, he consumes more energy with insurance (blue and green curves) than without insurance (orange curve). More precisely, without insurance, his optimal consumption is $e_1^\varnothing \approx 5,800$ KWh, which is around the level of consumption of a household of four people without electric heating. Therefore, one can conclude that such a household would renounce home heating, for example.
Otherwise, if the agent of type $\varepsilon$ is insured, he receives either the quantity $e_{\rm min}^\star(\varepsilon)$ or a monetary compensation, depending on the insurance type.
In the case of the in-kind insurance, we can compute by \Cref{lem:max_utility_t1} his optimal consumption $e_1^{\rm Q} (\varepsilon)$ of energy. His effective consumption $e_1^{eff} (\varepsilon)$ (blue curve) is given by the sum of $e_1^{\rm Q}  (\varepsilon)$ and $e_{\rm min}^{\star}(\varepsilon)$. It is interesting to note that the higher the agent's risk type, the higher the insured quantity is, and it tends towards the quantity $e_0^\varnothing$ consumed with the initial income $w_0$. The results for a financial insurance are similar, but the increase in consumption for the riskiest households is smaller. 

\medskip

This latter fact can be explained as follows. By definition of $\varepsilon_1$, an agent of type $\varepsilon \in [\underline \varepsilon^{\star}, \varepsilon_1)$ receives a quantity $e_{\rm min}^\star (\varepsilon) < \widebar e_{\rm min}$. In this case, the insurance acts as an earmarked fund, or a liquid asset: the agent behaves exactly as if his income had been increased by $p_e e^\star_{\rm min}(\varepsilon)$. More precisely, a part $\alpha/(1+\alpha)$ of this supplementary income is dedicated to electricity consumption, and the other part, $1/(1+\alpha)$, is dedicated to the other good, in the same way that his income $w_t$ at time $t$ is distributed between the two goods.
In fact, since the insurance plan offers in-kind support, the agent has to decrease his consumption $e_1^{\rm Q}$ and increase his consumption $y_1^{\rm Q}$ to perfectly split this fictive supplementary income between the two goods.
In contrast, when the insured quantity is higher, precisely if $e^{\star}_{\rm min} (\varepsilon) > \widebar e_{\rm min}$, the agent's consumption $e_1^{\rm Q}$ is reduced to $0$, and he cannot decrease it anymore. In this case, his effective consumption is given by $e_{\rm min}^{\star}(\varepsilon)$, and the agent cannot properly split this fictive supplementary income between the two goods. For these types of agents, \textit{i.e.}, $\varepsilon \geq \varepsilon_1^\star$, the insurance no longer acts as a liquid asset, thus ensuring higher electricity consumption.

\medskip

These results on the energy consumption are very interesting in the situation we considered. Indeed, a household that falls into fuel poverty due to a loss of income tends to consume less electricity, which can lead to health problems and housing damage. This is exactly the kind of problem we are seeking to avoid by proposing an insurance. Nevertheless, traditional income insurance allows the agent to receive money in the event of loss of income. However, an agent in fuel poverty has other needs to satisfy that he considers more important. He would therefore use the insurance money largely for these expenses, ignoring the significance for his health of heating his home sufficiently, for example. The in-kind insurance we model prevents this bias; it manages to oblige the agent to consume enough electricity to live decently in his home. 

\subsection{Comparison with financial insurance in terms of utility}

Given the optimal contracts for both the quantity and financial insurances, we are able to compute numerically the expected utility of an household of any type $\varepsilon \in [0,1]$. Of course, less risky households have no interest in purchasing such policies, and therefore have a utility equal to their reservation utility. For riskier households, as we can see on the left graph of \Cref{fig:profit_SB}, both insurances allow an increase in utility, compared to the utility without insurance, but unfortunately, the utility obtained with quantity insurance is less than that obtained with income insurance. This is entirely due to the fact that, with in-kind insurance, the insured is forced to consume more energy than he would like for an equivalent income, and the benefit of this over-consumption is not taken into account in household utility. Indeed, the model assumes that households are not aware that under-consumption of energy can lead to long-term health problems, whose associated costs are paid by both the households and the society. 

\begin{figure}[H]
\centering
\includegraphics[scale=0.5]{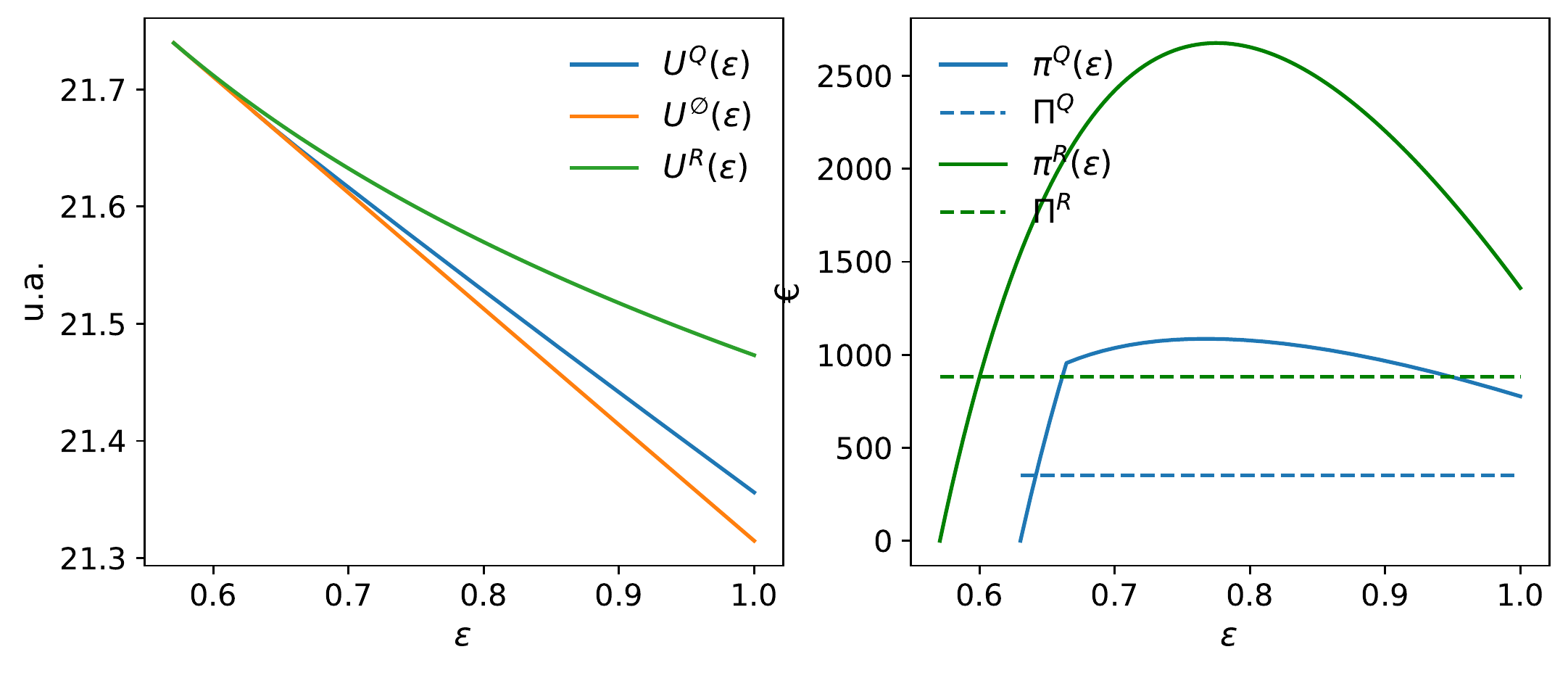}
\caption{Agent's utility and principal's profit under adverse selection.}
{\footnotesize Left: comparison of agent's utility with in-kind insurance (blue), financial insurance (green) and without insurance (orange). Right: principal's profit by type (plain curves) and expected profit (dotted lines) in both cases of an in-kind (blue) or financial (green) insurance.}
\label{fig:profit_SB}
\end{figure}

To account for the benefit of this increase in the energy consumption, one could take the point of view of a central planner, who is aware of the effects of fuel poverty. However, this would imply knowing how to measure the long-term costs of fuel poverty incurred by society as a whole, which is not a simple task. Indeed, first of all, it is difficult to directly link health problems to their intrinsic cause, and therefore to estimate the future cost of a precarious household.
Moreover, these costs are difficult to quantify for each type of agent, as they do not only depend on the types considered here but on many other parameters (for \textit{e.g.}, age, initial state of health, labour market status). Nevertheless, the direct and indirect costs incurred by fuel poverty are far from negligible since, as mentioned by \cite{chaton2020avoiding}, they represent more than one billion euros per year (see \cite{roys2010real} or \cite{ahrendt2016inadequate}).

\medskip

Finally, given the optimal admissible revealing mechanism $(q^\star,t^\star_0)$ for the quantity insurance, we can compute the profit $\pi^{\rm Q}$ of the principal for every type of agent, represented in the right graph of \Cref{fig:profit_SB} (blue curve). In particular, her expected profit $\Pi^{\rm Q}$ from the considered population, given by the integral of $\pi^{\rm Q}(\varepsilon)$ between $\underline \varepsilon^\star$ and $1$, is equal to $351$ \euro{}.
To show the benefit of considering a menu of contracts, we can compare our results to the profit induced by a unique contract. In this case, the optimal quantity and price can easily be computed: $e_{\rm min} \approx 10,538$ kWh/year and $T \approx 2,549$ \euro{}. This contract induces an average profit of $342$ \euro{} for the principal, and only agents of type $\varepsilon \geq 0.65$ subscribe to an insurance plan. Therefore, there is a positive gain for the principal in implementing a menu of contracts instead of a unique contract. More precisely, it represents an average gain of 2.7\% on each contract offered. Moreover, with a menu of contracts, more agents are insured, since $\underline \varepsilon^\star < 0.65$. 

\medskip

Similarly, in the case of a financial insurance, we can compute (numerically) the principal's profit $\pi^{\rm R}$ for every type of agent (see \Cref{fig:profit_SB}, green curve). In particular, her expected profit $\Pi^{\rm R}$ is equal to $883$ \euro{}. Therefore, proposing an in-kind insurance instead of a more classical revenue insurance implies a loss of utility from the point of view of both the household and the insurer. This result raises a new question: how could a central planner (\textit{e.g.} the state) encourage insurers to offer such in-kind insurance, which enables households to significantly improve their energy consumption? Indeed, a central planner, aware of the harmful effects of under-consumption of energy and wishing to reduce the risk of fuel poverty among the population, would be well advised to generalise the use of this type of insurance. In this case, we will be faced with a three-level principal-agent problem, where the state delegates to an insurance company the implementation of an in-kind insurance and remunerates it for this. Unfortunately, the problem would be more complicated and beyond the scope of this paper, especially since it would be necessary to estimate the long-term costs of fuel poverty, which, as mentioned above, are difficult to quantify. 
Therefore, in this paper, we focus only on the relationship between an insurance company and the households, and neglect the external aspects that lead the company to offer insurance with benefits in kind.
The aforementioned extension to take into account the role of a central planner, or a regulator, is set aside for the time being, awaiting future research.

\subsection{Adding the prepayment option}\label{sec:prepayment}

The results we developed throughout this paper highlight a certain form of irrationality of the agents, mainly due to their unwillingness to save money from one period to the next, and to the monopoly position of the insurer. Indeed, since the agent has only the choice between subscribing or not subscribing to an insurance plan and does not think about saving from one period to the next, he is willing to pay a very high price for the insurance. This section proposes a solution to address this problem of irrationality.

\medskip

We consider that a regulator offers (or forces the insurer to offer) another form of contract: the prepayment option. In this situation, the agent can $(i)$ subscribe at time $t=0$ to an insurance contract; $(ii)$ prepay a quantity $e^{\rm P}$, \textit{i.e.}, pay at time $t=0$ the price $p_e e^{\rm P}$ to receive $e^{\rm P}$ in $t=1$; or
$(iii)$ do nothing.
This prepayment option is a way to encourage a specific form of savings and should therefore limit the price of insurance. Although some theoretical results could be obtained, we choose in this section to only present numerical results. Indeed, the theoretical formulations are in the same spirit as those developed throughout the paper, so it seems more relevant and meaningful in our opinion to discuss only the results obtained numerically in within the framework of fuel poverty.

\subsubsection{A new reservation utility}\label{app:prepaiment}

If the agent decides to prepay at time $t=0$ a quantity $e^{\rm P}$, his utility function is defined by \eqref{eq:utility_t0}, where $T = p_e e^{\rm P}$ is the price of the chosen quantity. As previously, we assume without loss of generality that $e^{\rm P} = \alpha q \underline \omega w_0/p_e$ for some $ q \in \R_+$. Through easy optimisation techniques, we obtain the following result:
\begin{lemma}\label{lem:max_utility_PP_t0}
If the agent subscribes to the prepayment option for a quantity $q < 1/\alpha \underline \omega$, his maximum utility  at time $t=0$, given by $V_0^{\rm P} (q) = (1+\alpha) \ln (w_0) + (1+\alpha) \ln (1 - \alpha q \underline \omega) + C_{\alpha,p_e,p_y}$, can be achieved through the following consumption:
\begin{align*}
    y_0^{\rm P} := \dfrac{w_0}{1+\alpha} \dfrac{1 - \alpha q \underline \omega}{p_y} \; \text{ and } \; e_0^{\rm P} := \dfrac{\alpha w_0}{1+\alpha} \dfrac{1 - \alpha q \underline \omega}{p_e}.
\end{align*}
\end{lemma}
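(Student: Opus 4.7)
The plan is to reduce the prepayment situation at time $t=0$ to the already-solved no-insurance consumption problem by identifying the correct residual budget. By definition, paying for a prepaid quantity $e^{\rm P} = \alpha q \underline\omega w_0/p_e$ at time $t=0$ requires the agent to disburse $T = p_e e^{\rm P} = \alpha q \underline\omega w_0$ at that same period, while the prepaid quantity is only delivered at $t=1$. Consequently, at $t=0$ the agent has nothing in kind to add to his consumption and simply sees his available income reduced from $w_0$ to $w_0 - T = w_0 (1 - \alpha q \underline\omega)$, which must be split between the two goods under the budget constraint $e_0 p_e + y_0 p_y \leq w_0 (1 - \alpha q \underline\omega)$.

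I would first observe that the assumption $q < 1/(\alpha \underline\omega)$ is equivalent to $1 - \alpha q \underline\omega > 0$, so that the residual budget is strictly positive and the logarithmic utility remains well-defined; then I would note that the agent's maximisation problem at $t=0$ is exactly \eqref{eq:no_insurance} with income replaced by $w_0 (1 - \alpha q \underline\omega)$. In other words, $V_0^{\rm P}(q) = V^\varnothing\bigl(w_0(1 - \alpha q \underline\omega)\bigr)$, and the optimal consumption is the one given by that same value function.

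I would then invoke \Cref{lem:no_insurance} directly: replacing $w_t$ there by $w_0 (1 - \alpha q \underline\omega)$ immediately yields
\[
    e_0^{\rm P} = \frac{\alpha}{1+\alpha}\frac{w_0(1-\alpha q \underline\omega)}{p_e}, \qquad y_0^{\rm P} = \frac{1}{1+\alpha}\frac{w_0(1-\alpha q \underline\omega)}{p_y},
\]
and the maximum utility
\[
    V_0^{\rm P}(q) = (1+\alpha)\ln\bigl(w_0(1-\alpha q \underline\omega)\bigr) + C_{\alpha,p_e,p_y}.
\]
Splitting the logarithm as $\ln\bigl(w_0(1-\alpha q \underline\omega)\bigr) = \ln(w_0) + \ln(1-\alpha q \underline\omega)$ then gives the announced expression.

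There is no genuine obstacle here: the only point to be careful about is checking that the prepayment is correctly modelled as a pure cash outflow at $t=0$ (with no in-kind addition to $e_0$ at that period), since the delivery of $e^{\rm P}$ occurs at $t=1$ and therefore enters only the period-$1$ optimisation. Once this is pinned down, the result is an immediate corollary of \Cref{lem:no_insurance}, and the domain restriction $q < 1/(\alpha \underline\omega)$ follows naturally from requiring a positive residual budget.
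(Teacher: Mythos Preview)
Your proof is correct and follows exactly the approach the paper indicates: the text preceding the lemma explicitly states that the prepayment utility at $t=0$ is defined by \eqref{eq:utility_t0} with $T = p_e e^{\rm P}$, and then appeals to ``easy optimisation techniques''. Your argument simply makes those techniques explicit by reducing to \Cref{lem:no_insurance} with the residual income $w_0(1-\alpha q \underline\omega)$, which is precisely what the paper has in mind.
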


At time $t=1$, he receives the prepaid quantity $e^{\rm P}$ in any case, not only in the case of an income loss. His utility to maximise at time $t=1$ is thus naturally defined by:
\begin{align*}
    V^{\rm P}_1 (\omega, q) = \max_{(e_1, y_1) \in \R_+^2} \alpha \ln (e_1 + \alpha  q \underline \omega w_0 /p_e) + \ln (y_1), \; \text{ u.c. } \; e_1 p_e + y_1 p_y \leq \omega w_0.
\end{align*}
\begin{lemma}\label{lem:max_utility_PP_t1}
If the agent subscribes to the prepayment option for a quantity $q \in \R$, his maximum utility at time $t=1$, given by $V_1^{\rm P} (\omega, q) =
(1+\alpha) \ln (\omega w_0) + C_{\alpha,p_e,p_y} + \widebar U(q \underline \omega/ \omega)$, can be achieved through the following consumption:
\begin{align*} 
    e_1^{\rm P} := \dfrac{\alpha}{1+\alpha} \dfrac{ w_0}{p_e} \big( \omega  -  q \underline \omega \big)^+  \; \text{ and } \;
    y_1^{\rm P} := \dfrac{\omega w_0 - p_e e_1^{\rm P}}{p_y}.
\end{align*}
\end{lemma}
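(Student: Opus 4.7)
The plan is to reduce the problem to a standard Cobb--Douglas utility maximisation in a shifted variable, and then carefully handle the positivity constraint on $e_1$, which is what makes the positive-part appear in the optimiser.

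First, I would introduce $\tilde e_1 := e_1 + \alpha q \underline\omega w_0 / p_e$, so that the utility becomes the separable log-form $\alpha \ln(\tilde e_1) + \ln(y_1)$, while the budget and non-negativity constraints become
\begin{align*}
    p_e \tilde e_1 + p_y y_1 \leq w_0 \bigl( \omega + \alpha q \underline\omega \bigr),
    \qquad \tilde e_1 \geq \tfrac{\alpha q \underline\omega w_0}{p_e},
    \qquad y_1 \geq 0.
\end{align*}
This is exactly the setting of \Cref{lem:no_insurance} with effective income $w_0 (\omega + \alpha q \underline\omega)$, except for the extra lower bound on $\tilde e_1$. The unconstrained optimum from the standard Cobb--Douglas computation (ratio $\alpha : 1$) would therefore be $\tilde e_1^\star = \tfrac{\alpha}{1+\alpha} \tfrac{w_0(\omega + \alpha q \underline\omega)}{p_e}$ and $y_1^\star = \tfrac{1}{1+\alpha} \tfrac{w_0(\omega + \alpha q \underline\omega)}{p_y}$.

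Second, I would check when the lower-bound constraint on $\tilde e_1$ is satisfied at this candidate. A direct rearrangement shows $\tilde e_1^\star \geq \alpha q \underline\omega w_0/p_e$ if and only if $\omega \geq q \underline\omega$, equivalently $q\underline\omega/\omega \leq 1$. In that regime the unconstrained optimum is feasible, giving $e_1^{\rm P} = \tilde e_1^\star - \alpha q \underline\omega w_0/p_e = \tfrac{\alpha}{1+\alpha} \tfrac{w_0}{p_e}(\omega - q\underline\omega)$, which matches the positive-part formula, and $y_1^{\rm P}$ follows by saturating the budget. When $q\underline\omega/\omega > 1$, the lower bound on $\tilde e_1$ binds: any marginal increase of $e_1$ above $0$ would be strictly dominated by spending on $y_1$ (since $\partial_{e_1} U / \partial_{y_1} U$ at $e_1 = 0$ equals $\alpha p_y y_1 / (p_e \cdot \alpha q \underline\omega w_0/p_e) \cdot (1/\alpha) < p_y/p_e$ can be checked directly from the first-order condition), hence $e_1^{\rm P} = 0$ and $y_1^{\rm P} = \omega w_0/p_y$, consistent with the positive-part formula as well.

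Third, I plug the two cases into the utility. In the interior case, factoring $\ln$ of $w_0(\omega + \alpha q\underline\omega) = \omega w_0 (1 + \alpha q\underline\omega/\omega)$ yields
\begin{align*}
V_1^{\rm P}(\omega, q) = (1+\alpha) \ln(\omega w_0) + (1+\alpha) \ln\bigl(1 + \alpha q\underline\omega/\omega\bigr) + C_{\alpha,p_e,p_y},
\end{align*}
where $C_{\alpha,p_e,p_y}$ is collected from the $\alpha\ln\alpha - (1+\alpha)\ln(1+\alpha) - \alpha\ln p_e - \ln p_y$ terms as in \eqref{eq:constant_useless}. In the boundary case $q\underline\omega/\omega \geq 1$, the utility becomes $\alpha \ln(\alpha q \underline\omega w_0/p_e) + \ln(\omega w_0/p_y)$, and regrouping terms to extract $(1+\alpha)\ln(\omega w_0) + C_{\alpha,p_e,p_y}$ leaves exactly $\alpha \ln(q\underline\omega/\omega) + (1+\alpha)\ln(1+\alpha)$. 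Recognising both remainders via \eqref{eq:overline_U} gives $\widebar U(q\underline\omega/\omega)$ in each regime, which completes the identification.

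The only delicate step is the case distinction at $q\underline\omega/\omega = 1$; everything else is a routine log-linear maximisation. I would conclude by noting that the two expressions for $V_1^{\rm P}$ agree at the threshold, so the resulting value function is continuous in $q$, as required for the subsequent analysis.
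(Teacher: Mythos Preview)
Your proof is correct and follows exactly the approach the paper implicitly relies on: the paper does not write out a proof of this lemma, treating it (like \Cref{lem:max_utility_t1}) as a direct consequence of standard Cobb--Douglas optimisation under a shifted budget, with the case distinction arising from the non-negativity constraint on $e_1$. Your change of variable $\tilde e_1 = e_1 + \alpha q \underline\omega w_0/p_e$ and subsequent identification with $\widebar U(q\underline\omega/\omega)$ is precisely what is needed; the only cosmetic point is that your marginal-rate-of-substitution sentence in the boundary case is garbled, and it would be cleaner to simply invoke strict concavity of the objective to conclude that when the unconstrained optimum violates $\tilde e_1 \geq \alpha q \underline\omega w_0/p_e$, the constrained optimum lies on that boundary.
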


Therefore, by \Cref{lem:max_utility_PP_t0,lem:max_utility_PP_t1}, the expected utility of an agent of type $\varepsilon$ who chooses to prepay the quantity $e^{\rm P} = \alpha q \underline \omega w_0 / p_e$ for $q \in [0, 1/\alpha \underline \omega)$ is given by:
\begin{align*}
    \textnormal{EU}^{\rm P} (\varepsilon, q ) = &\ \textnormal{EU}^\varnothing (\varepsilon) + (1+\alpha) \ln \big(1 - \alpha q \underline \omega \big) + \beta \varepsilon \widebar U(q) + \beta (1-\varepsilon) \widebar U(q \underline \omega / \widebar \omega).
\end{align*}
The agent then chooses the optimal amount he wants to prepay by maximising his expected utility over an admissible $q$.
The easiest way to solve this optimisation problem is to perform a simple numerical optimisation to find the optimal quantity $q^{\rm P}(\varepsilon)$ that an agent of type $\varepsilon$ should prepay and his associated expected utility, denoted $\textnormal{EU}^{\rm P, \star} (\varepsilon)$, for every $\varepsilon \in [0,1]$. For the parameters previously defined in \Cref{ss:fuel_poverty_param}, the results are presented in \Cref{fig:prepayment}.

\begin{figure}[H]
\includegraphics[scale=0.5]{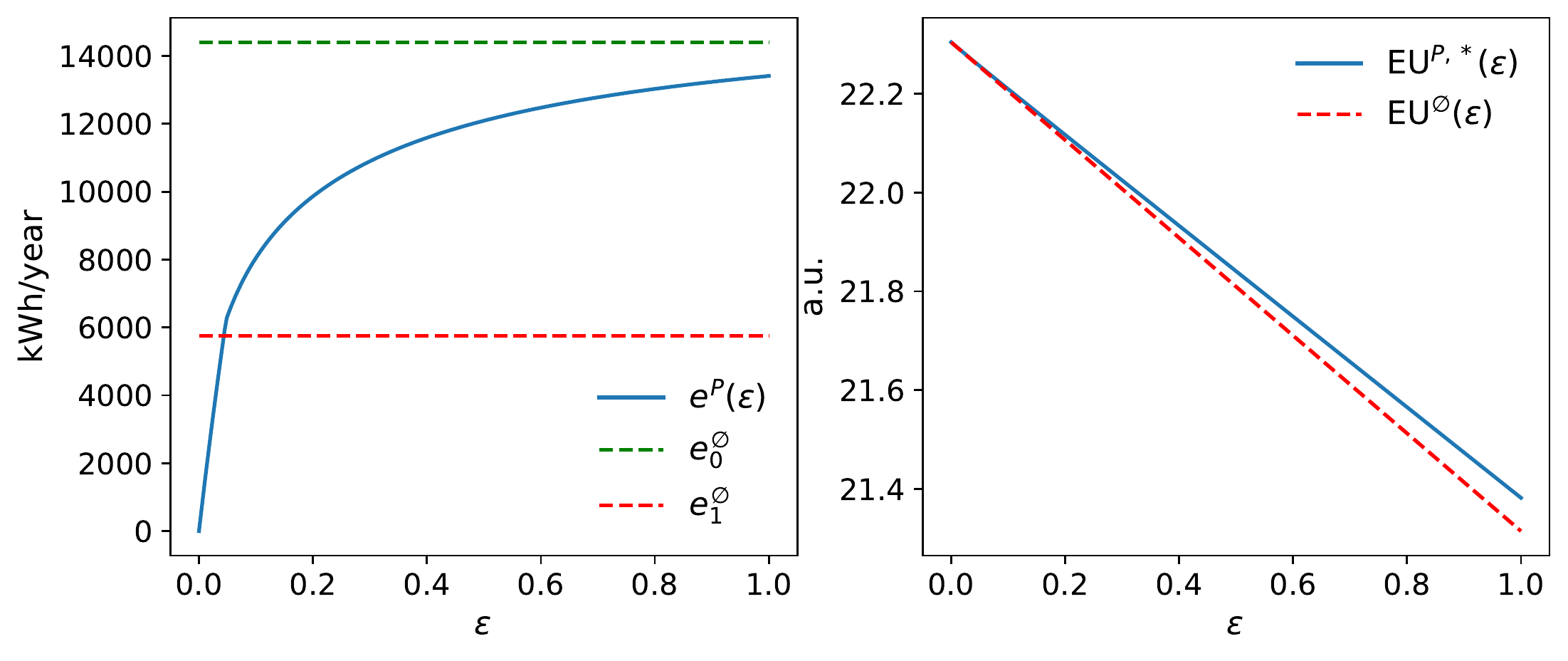}
\centering
\caption{Optimal quantity prepaid and associated expected utility.}
{\footnotesize Left: the optimal quantity prepaid (blue) is compared to the quantity consumed with the initial income (dotted green) and with an income loss (dotted red). Right: the expected utility with prepayment (blue) is compared to the previous reservation utility (dotted red).}
\label{fig:prepayment}
\end{figure}

Facing this new option, an agent of type $\varepsilon$ subscribes to an insurance contract $(e_{\rm min}, T)$, with $e_{\rm min} = \alpha q \underline \omega w_0/p_e$ and $T = w_0 t_0$, if and only if the two following conditions hold:
\begin{align}\label{eq:participation_prep}
    \textnormal{EU}^{\rm Q} (\varepsilon, q, t_0) \geq \textnormal{EU}^{\rm P, \star} (\varepsilon) \; \text{ and } \; 
    \textnormal{EU}^{\rm Q} (\varepsilon, q, t_0) \geq \textnormal{EU}^{\varnothing} (\varepsilon).
\end{align}
By definition of $\textnormal{EU}^{\rm P, \star} (\varepsilon)$, for every $\varepsilon \in [0,1]$, we have $\textnormal{EU}^{\rm P, \star} (\varepsilon) \geq \textnormal{EU}^{\rm P} (\varepsilon, 0) = \textnormal{EU}^{\varnothing} (\varepsilon)$. Therefore,
the second inequality in \eqref{eq:participation_prep} is implied by the first and is thus not necessary. In this framework, the reservation utility of an agent of type $\varepsilon$ is therefore defined by the utility he obtains thanks to the prepayment option. To simplify the notation, we denote by $\Delta \textnormal{EU}^{\rm P, \star}$ the difference between the expected utility with prepayment of an agent and his utility without:
\begin{align}\label{eq:exp_utility_prep_net}
    \Delta \textnormal{EU}^{\rm P, \star} (\varepsilon) := \textnormal{EU}^{\rm P, \star} (\varepsilon) - \textnormal{EU}^{\varnothing} (\varepsilon).
\end{align}

\subsubsection{Insights from the first-best case}\label{app:prepayment_FB}

The problem of the principal in this case is defined by \eqref{eq:def_principal_pb_FB}, under a new participation constraint of the agent, since his reservation utility is now given by $\textnormal{EU}^{\rm P, \star} (\varepsilon)$: an agent accepts the contract if it provides him with at least as much utility as the prepayment. Similar to the reasoning developed in \Cref{ss:principal_FB} for the initial problem, when the principal knows the type of the agent, she may charge him the highest price he is willing to pay for the insurance. In this case, the agents' informational rents are then reduced to zero for any type $\varepsilon \in [0,1]$.

\medskip

Using the notation \eqref{eq:exp_utility_prep_net}, the participation constraint of an agent of type $\varepsilon$ is equivalent to:
\begin{align*}
    t_0 \leq &\ t_{max}^{\rm P}(\varepsilon, q) := 
    1 - \exp \bigg( \dfrac{\Delta \textnormal{EU}^{\rm P, \star} (\varepsilon)}{1+\alpha}  \bigg) \times
    \left\{
    \arraycolsep=1.4pt\def\arraystretch{1.2}
        \begin{array}{ll}
        (1 + q \alpha)^{- \beta \varepsilon }
        & \mbox{ if } q < 1, \\
        q^{- \beta \varepsilon \frac{\alpha}{1+\alpha}} 
        ( 1+ \alpha)^{- \beta \varepsilon }
        & \mbox{ if } q \geq 1,
        \end{array}
    \right.
\end{align*}
However, since it is relatively complicated to explicitly obtain the expected utility with prepayment, we cannot give a more detailed formula for the maximum price that an agent of type $\varepsilon$ is willing to pay for insurance. Nevertheless, all the results can easily be computed numerically. 

\begin{figure}[H]
\includegraphics[scale=0.5]{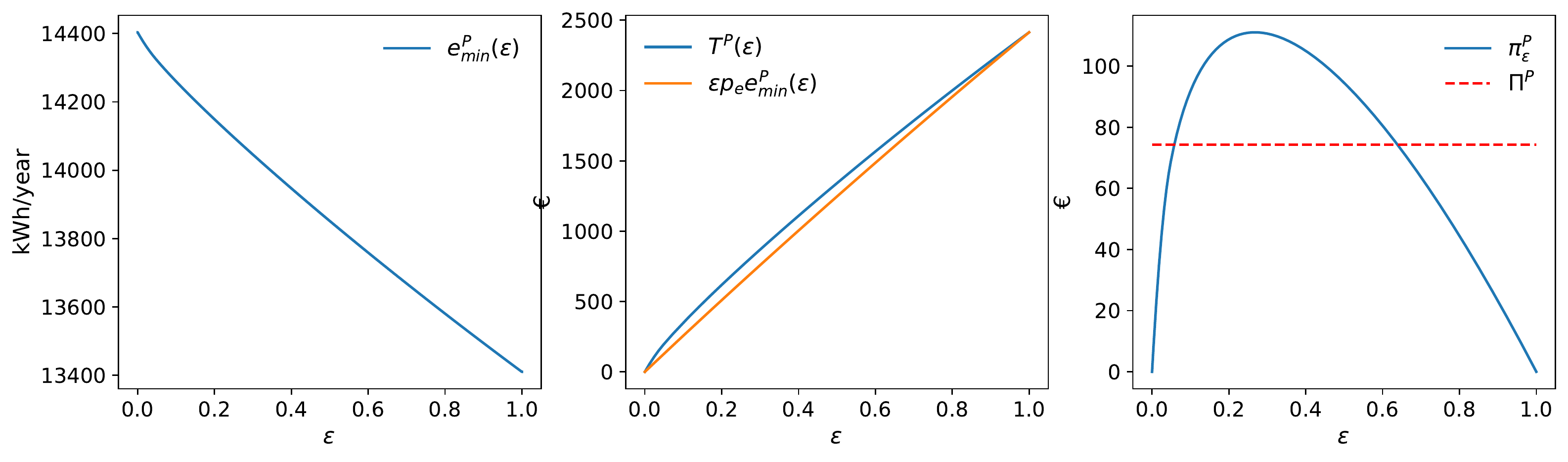}
\centering
\caption{Optimal insurance in the first-best case with prepayment.}
{\footnotesize Blue curves represent, from left to right: insured quantity, premium and principal's profit. Middle: premium is compared to actuarial price (orange curve), which also corresponds to the principal's cost. Right: red dotted line is her average profit.}
\label{fig:prepaymentFB}
\end{figure}

\Cref{fig:prepaymentFB} presents, from top to bottom, the insured quantity, the price paid by the agents, and the principal's profit, in the case of insurance against fuel poverty, \textit{i.e.}, with the parameters defined in \Cref{ss:fuel_poverty_param}. We can compare these graphs with those of \Cref{fig:FB_QI}. The most interesting point is that the price of insurance is significantly lower in this new situation. Indeed, the price paid by the agents is now barely higher than the actuarial price, whereas without prepayment, it is sometimes even higher than $p_e e_{\rm min}$, which is actually precisely the price of the prepayment.

\subsubsection{Third-best case}\label{app:prepayment_TB}

Since we are only changing the reservation utility, the results of \Cref{ss:RP_QI} remain true. In particular, the principal offers a menu of revealing contracts $(e_{\rm min}, T)$ defined by \Cref{cor:contract_emin}, such that an agent of type $\varepsilon$ chooses the quantity $e_{\rm min} = \alpha q(\varepsilon) \underline \omega w_0/p_e$ and pays the price $w_0 t_0(\varepsilon)$, given by \eqref{eq:TQ_emin_eps}, for a particular function $q$, which will be optimised by the principal.
The only thing that changes is the participation constraint, which is now given by
$\textnormal{EU}^{\rm Q} (\varepsilon, q, t_0) \geq \textnormal{EU}^{\rm P, \star} (\varepsilon)$.
Using the form of the price given by \eqref{eq:TQ_emin_eps}, this constraint is equivalent to
\begin{align*}
c_q  \geq 
\underline c^{\rm P}(\varepsilon) := \exp \bigg( \dfrac{\Delta \textnormal{EU}^{\rm P, \star} (\varepsilon)}{1+\alpha}  \bigg) \underline c (\varepsilon).
\end{align*}
The function $\underline c$, defined by \eqref{eq:underline_c}, is decreasing, implying in \Cref{ss:PC} that the participation constraint was satisfied for agents of type $\varepsilon$ above a specific level.
Unfortunately, in this case, it is not possible to precisely determine the variations of the function $\underline c^{\rm P}$, since we do not have an explicit form and, as we can see in \Cref{fig:prepayment}, the difference defined in \eqref{eq:exp_utility_prep_net} is increasing. 

\medskip

In the case with prepayment, it is therefore difficult to determine a monotonicity or even the variations of the information rent. With the help of the first-best case, we can still intuit that the information rent in the third-best case is increasing up to a specific $\varepsilon \in [0,1]$ and then decreasing. More precisely, as seen on the graph with the principal's profit, she earns money on the agents of medium risk type, since agents of type $\varepsilon = 0$ and $\varepsilon = 1$ have no interest in subscribing to insurance plans. Indeed, on the one hand, the problem of agents of the lowest risk type 
remains the same as in the case without prepayment: the optimal quantity they would like to prepay is zero, which implies that their utility with prepayment is equal to their utility without insurance. Their reservation utility is therefore unchanged from the case studied throughout this paper, and these agents are not of interest from the insurer's point of view. On the other hand, the agents of type $\varepsilon =1$ are now indifferent between prepayment and insurance, the two options providing them with a sufficient quantity of energy for the future. The agents of the highest risk type, who were highly courted by the insurer, are now hard to satisfy and instead turn to prepayment. Hence, the most `interesting' agents in this case for the Principal seem to be the intermediate ones.

\medskip

Unfortunately, to address the third-best case in a rigorous manner, further study is required. The reasoning would be similar to the one developed in \Cref{sec:third_best}, but the non-monotonicity of the information rent makes the problem more difficult. Nevertheless, following the intuition previously developed, we can make the reasonable assumption that the optimal contract should only select types $\varepsilon$ in an interval $[\underline \varepsilon, \widebar \varepsilon] \subset [0,1]$. 
We can then apply the reasoning developed in \Cref{sec:third_best} to find the ODE satisfied by the optimal contract under this assumption, which is highly similar to \eqref{eq:ODE}, and thus not detailed here. We only present the numerical results, obtained with the usual parameters. Especially, \Cref{fig:prepaymentSB} presents the optimal quantity insured (left graph) as well as the optimal premium (right). 

\begin{figure}[H]
    \centering
    \includegraphics[scale=0.5]{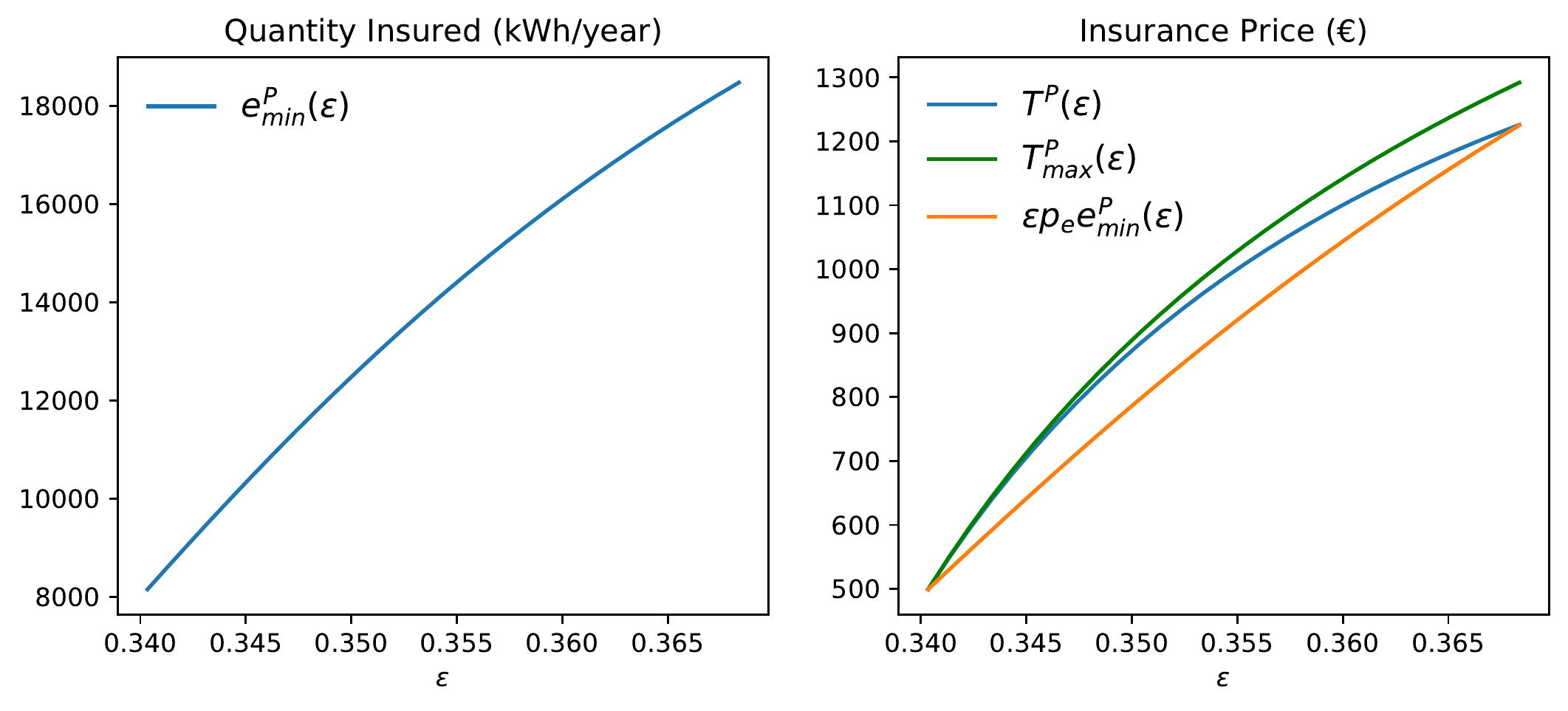}
\caption{Insurance contract in the third-best case, when the prepayment option is available.}
{\footnotesize Left: optimal insured quantity. Right: optimal premium (blue), compared to maximum price (green) and to actuarial price (orange).}
\label{fig:prepaymentSB}
\end{figure}

Comparing these results with those presented in \Cref{fig:quantity_SB}, it is clear that, when the prepayment option is available:
\begin{itemize}
    \item the insured quantity is higher, thus clearly ensuring a sufficient level of energy consumption at time $t=1$, which should contribute in reducing the risk of fuel poverty among the households;
    \item the premium of the insurance is just slightly above the actuarial price, given by $\varepsilon p_e e_{\rm min}^{\rm P}$ for an agent of type $\varepsilon$, and therefore way below the future price of the quantity.
\end{itemize}
Therefore, the addition of the prepayment option seems to $(i)$ allow agents to consume sufficient energy in case of loss of income, either by insurance or prepayment; $(ii)$ decrease the insurance premium compared to the case without prepayment; and $(iii)$ allow agents of a medium risk type to be insured.

\medskip

However, it can be seen that the contract we obtain is designed only for a small proportion of types, approximately $2\%$. From this point of view, the value of implementing such insurance is largely debatable, and the prepayment option seems to be a more relevant and sufficient tool to limit fuel poverty. Nevertheless, it should be noticed that, due to the assumption we have made, a contract with more than one interval could be more optimal than this one, and would thus select more types of agents. Secondly, it is worth noticing that the interval depends on the parameters considered. For example, if we focus on slightly wealthier households ($w_0 = 50,000$ instead of $35,000$, and $\alpha = 0.06$) with better expected income at time $t=1$ ($\overline \omega = 1.5$ instead of $1$), the proportion of selected types increases up to almost $10\%$. Finally, we only considered the case of a uniform distribution of types in the population. It is likely that, if the distribution is different, particularly if the average risk agents represent a larger part of the population, then the selected types would represent a more significant proportion.

\medskip

Nevertheless, these results still underline the need for regulation of this market. More precisely, in this case, a regulator could force or try to incentivise the insurance company to offer more different contracts, in order to cover a larger proportion of households. In this perspective, our work could help quantify the loss of income for the insurer if she has to offer contracts for a wider range of risk types.

\section{Conclusion}\label{sec:conclusion}

We develop a two-period principal-agent problem with adverse selection and endogenous reservation utility to model insurance with benefits in kind. This model allows us to obtain semi-explicit solutions. Applied to the energy sector, this in-kind support helps to prevent fuel poverty among households. 
Indeed, when a household suffers from a loss of income, if it has subscribed to an insurance contract of the sort we propose, it consumes more energy than it would without insurance. In this application, providing support in kind therefore forces the household to consume more energy and thus prevents risky behaviour that can lead to serious health problems. Therefore, even in the least efficient scenario from the households point of view, \textit{i.e.}, when the insurance is provided by a monopoly, this in-kind mechanism improves significantly the living conditions of the riskiest households by ensuring them a sufficient consumption of energy. An insurance mechanism also makes it possible to cover agents' fuel poverty risks and to pool costs between agents (of a high risk type). 
Through a slight adaptation of our model, we show that the previous conclusions does not hold in the case of a classic income insurance: the household would not increase its energy consumption sufficiently. This result is inline with those of \citet*{slesnick1996consumption}, who underlines that
\blockquote[p. 1544][]{[w]ith the exception of energy and consumer goods, in-kind transfers have roughly the same impact on the poverty rate as cash}.
With these results in mind, and in our opinion, developing such insurance could be a short-term solution to prevent households from experiencing fuel poverty, and thus avoid significant societal costs.

\medskip

Nevertheless, our study raises various problems and questions, notably on the regulation of monopoly, which could constitute relevant extensions of our model. 
One of the main issues is that, when computing the utility of the household and the profit of the principal, it appears that a financial insurance would be more beneficial to both. This problem stems from the fact that, in our model, both the agent and the principal neglect the costs associated with fuel poverty.
However, while the effects of fuel poverty on the physical and mental health of individuals are not questionable (see \cite{lacroix2015fuel}), the societal cost of fuel poverty is difficult to quantify, which is why it is not taken into account in our model. A possible extension could, therefore, be to consider the problem of a central planner (\textit{e.g.} the State), who is aware of the harmful effects and societal costs of under-consumption of energy, and wish to reduce the risk of fuel poverty among the population. 
This central planner could therefore encourage insurers to offer such in-kind insurance, which enables households to significantly improve their energy consumption. 
In this perspective, consideration could also be given to making this type of insurance mandatory, either by law or through a contract between a landlord and a tenant, for example. Indeed, on the one hand, to reduce health expenses, the State could have an interest in ensuring that agents do not experience fuel poverty. On the other hand, a landlord, or even a social landlord, could have the same interest in order to prevent the tenant from causing deterioration of the housing.

\medskip

The second issue lies on the fact, under a monopolistic framework and without alternative option, the insurance premium is high. Indeed, since the households considered do not think about saving to ensure a sufficient quantity of essential goods regardless of their future income, their choice is limited to subscribing to the insurance plan or doing nothing. Due to the shape of the chosen utility function, in particular its concavity, agents who anticipate a loss of income with a high probability are willing to pay a very high price for the insurance we offer,
sometimes even higher than the future price of the insured quantity. Therefore, insurance can be a tool to protect at-risk households, but it cannot be set up alone if securing the welfare of the agents is the desired goal. 
Some suggestions could be considered to make the insurance premium more realistic. First, if a regulator required insurance to be offered to all types of agents, \textit{i.e.}, $\underline \varepsilon=0$, the insurance premium would be lower.
In addition, the regulator could also better control the monopoly position of electricity suppliers in the case of energy insurance. By introducing competition in the market for this type of insurance, the price should fall towards the marginal cost of insurance.

\medskip

The suggestion we propose in this paper to address the previous issue is to offer to the household the opportunity to engage in prepayment. We show that the implementation of this alternative will allow a significant decrease of the insurance premium.
However, the prepayment option highlights a third problem: in this case, the insurance company drastically limits the proportion of the population for which insurance is attractive.
This result still underline the need for regulation of this market. More precisely, in this case, a regulator could, for example, incentivise the insurance company to offer more different contracts, in order to cover a larger proportion of households. In this perspective, our work could help quantify the loss of income for the insurer if she has to offer contracts for a wider range of risk types.

\medskip

In addition, one could envisage more straightforward extensions. For example, the simplicity of our model makes it easy to extend into a multi-period model, keeping in mind that only agents who are not in precarious situations are entitled to subscribe to insurance plans. Thus, an agent who has not suffered a loss of income can again pay the premium to re-insure himself for the next period. A simple repetition of the model is sufficient to address this case. Moreover, our model could be extended to random prices, which would allow the household to have a guaranteed quantity of electricity even in case of a price increase.
Finally, while we have chosen to apply it for insurance against fuel poverty, this type of model could be used for other staples with consumption that is affected by a loss of income. More generally, this type of insurance could also be purchased by manufacturing firms to ensure that they have sufficient input in the event of a temporary downturn in revenues.

{\small \bibliography{bibliographyFuelPoverty.bib}}

\newpage

\begin{appendices}

\renewcommand{\appendixpagename}{Appendix: for online publication}

\appendixpage

\crefalias{section}{appendix}
\crefalias{subsection}{appendix}

\section{Complements on agent's problem}\label{app:agent_pb}

\subsection{Optimal consumption with insurance}\label{app:with_insurance}

For better readability, this appendix regroups the results obtained on optimal consumption in each good and at each period of time with insurance. These results are not at the heart of our study, but they represent necessary steps to establish \Cref{prop:exp_utility} that define the expected utility of an agent with insurance. Moreover, the results on optimal consumption in each case are used in numerical simulation to compare the effect of insurance on consumption.

\begin{lemma}[With insurance, at time 0]\label{lem:max_utility_t0}
Given an insurance premium $T$, the agent can achieve his maximum utility $V_0 (T) = (1+\alpha) \ln (w_0 - T) + C_{\alpha,p_e,p_y}$ at time $t=0$ through the following consumption in each good:
\begin{align*}
    y_0^{\rm Q} := \dfrac{1}{1+\alpha} \dfrac{w_0 - T}{p_y} \; \text{ and } \; e_0^{\rm Q} := \dfrac{\alpha}{1+\alpha} \dfrac{w_0 - T}{p_e}.
\end{align*}
\end{lemma}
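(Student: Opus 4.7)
The plan is to reduce this directly to \Cref{lem:no_insurance} through the definition of $V_0$ in equation \eqref{eq:utility_t0}. Recall that by construction, paying the insurance premium $T$ at time $t=0$ only modifies the agent's effective budget: his optimisation problem at time $t=0$ is
\begin{align*}
V_0(w_0, T) = V^\varnothing(w_0 - T) = \max_{(e_0, y_0) \in \R_+^2} U(e_0, y_0), \quad \text{u.c.} \quad e_0 p_e + y_0 p_y \leq w_0 - T.
\end{align*}
Since $T < w_0$ by the admissibility requirement in \Cref{def:adm_contract_QI}, the effective income $w_0 - T$ is positive.

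Therefore, I would simply apply \Cref{lem:no_insurance} with the income $w_t$ replaced by $w_0 - T$. This immediately yields the claimed optimal consumption
\begin{align*}
e_0^{\rm Q} = \dfrac{\alpha}{1+\alpha} \dfrac{w_0 - T}{p_e}, \quad y_0^{\rm Q} = \dfrac{1}{1+\alpha} \dfrac{w_0 - T}{p_y},
\end{align*}
and the associated maximum utility $V_0(T) = (1+\alpha)\ln(w_0 - T) + C_{\alpha, p_e, p_y}$, where $C_{\alpha, p_e, p_y}$ is the constant defined in \eqref{eq:constant_useless}.

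There is no real obstacle here: the argument is a one-line substitution. The only point worth flagging, for self-containment, is to verify that the conditions of \Cref{lem:no_insurance} (in particular that the effective income is strictly positive so the logarithmic utility is well-defined) are preserved, which is guaranteed by the admissibility condition $T < w_0$. If one prefers a self-contained derivation rather than invoking the previous lemma, the alternative is a direct Lagrangian computation: the budget constraint binds at the optimum (since $U$ is strictly increasing in each argument), and the first-order conditions $\alpha/e_0 = \lambda p_e$ and $1/y_0 = \lambda p_y$ together with $e_0 p_e + y_0 p_y = w_0 - T$ yield the stated Cobb--Douglas-type shares $\alpha/(1+\alpha)$ and $1/(1+\alpha)$ of the effective income.
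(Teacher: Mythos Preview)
Your proposal is correct and matches the paper's implicit treatment: the paper does not give a separate proof of this lemma, since it follows immediately from \Cref{lem:no_insurance} via the definition $V_0(w_0,T)=V^\varnothing(w_0-T)$ in \eqref{eq:utility_t0}, exactly as you argue. The additional remark on admissibility ($T<w_0$) and the optional Lagrangian verification are fine but not needed.
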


Without loss of generality, we can assume that $T$ is of the form $T := t_0 w_0$ for $ t_0 \in [0,1)$.
Therefore, by slightly abusing the notations as in \Cref{lem:no_insurance}, we denote by $V_0 (t_0)$ the maximum utility the agent can achieve by optimally choosing his consumption, which can be written as
\begin{align}\label{eq:max_utility_t0}
    V_0 (t_0) &= (1+\alpha) \ln (w_0) + (1+\alpha) \ln \big(1 - t_0 \big) + C_{\alpha,p_e,p_y}.
\end{align}
The agent thus pays a price $w_0 t_0$ at time $t=0$, depending on the amount of staple good $e_{\rm min}$ that he wants to receive at time $t=1$.
By setting the quantity $\widebar e_{\rm min} := \alpha \underline\omega w_0/p_e$, we obtain the following result on optimal consumption at time $t=1$.
\begin{lemma}[With insurance, at time 1]\label{lem:max_utility_t1}
Given an insurance contract $(e_{\rm min}, T)$, the optimal agent's consumptions of each good at time $t=1$ are given by
\begin{align*} 
    e_1^{\rm Q} := \bigg( \dfrac{\alpha}{1+\alpha} \dfrac{\omega w_0}{p_e} - \dfrac{1}{1+\alpha} e_{\rm min} \mathds{1}_{\omega = \underline\omega} \bigg)^+  \; \text{ and } \;
    y_1^{\rm Q} := \dfrac{\omega w_0 - p_e e_1^{\rm Q}}{p_y},
\end{align*}
where $x^+ := \max \{x, 0\}$ for all $x \in \R$ and provide the following maximum utility to the agent:
\begin{align*}
    V_1 (\omega w_0, e_{\rm min}) &= 
        \left\{
        \begin{array}{ll}
        (1+\alpha) \ln (\omega w_0 + p_e e_{\rm min} \mathds{1}_{\omega = \underline\omega}) + C_{\alpha,p_e,p_y} & \mbox{ if } e_{\rm min} \mathds{1}_{\omega = \underline \omega} < \widebar e_{\rm min}, \\
        \ln (\omega w_0) + \alpha \ln (e_{\rm min}) - \ln (p_y) & \mbox{ if } e_{\rm min} \mathds{1}_{\omega = \underline\omega} \geq \widebar e_{\rm min}.
        \end{array}
    \right.
\end{align*}
\end{lemma}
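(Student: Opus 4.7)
The plan is to reduce the time-1 problem to a standard Cobb--Douglas consumer problem with an effective income and a lower bound on one argument, then dispatch the two regimes in the statement by checking when that lower bound binds.

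First I would split on the value of $\omega$. If $\omega = \widebar\omega$, the indicator $\mathds{1}_{\omega = \underline\omega}$ vanishes, the optimisation collapses to the no-insurance problem of \Cref{lem:no_insurance} applied at income $\omega w_0$, and both formulas in the statement (with $e_{\rm min}\mathds{1}_{\omega=\underline\omega}=0 < \widebar e_{\rm min}$) match directly. The nontrivial case is $\omega = \underline\omega$. Here I would introduce the change of variable $\widetilde e_1 := e_1 + e_{\rm min}$, so that the objective becomes $\alpha\ln(\widetilde e_1) + \ln(y_1)$ and the constraints become $p_e\widetilde e_1 + p_y y_1 \leq W$ with $W := \underline\omega w_0 + p_e e_{\rm min}$, together with the floor $\widetilde e_1 \geq e_{\rm min}$ coming from $e_1\geq 0$.

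Next I would solve the auxiliary problem by Lagrangian/KKT. The unconstrained Cobb--Douglas optimum (ignoring the floor) is the textbook
\[
\widetilde e_1^\star = \frac{\alpha}{1+\alpha}\frac{W}{p_e}, \qquad y_1^\star = \frac{1}{1+\alpha}\frac{W}{p_y},
\]
obtained by saturating the budget (log-utility is strictly increasing) and equating marginal rates. The floor $\widetilde e_1^\star \geq e_{\rm min}$ is equivalent, after elementary manipulation, to $p_e e_{\rm min} \leq \alpha\underline\omega w_0$, i.e.\ $e_{\rm min} < \widebar e_{\rm min}$ (treating the boundary as the second regime is harmless since the two expressions agree there). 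In that regime I substitute $\widetilde e_1 = e_1 + e_{\rm min}$ back to recover
\[
e_1^{\rm Q} = \frac{\alpha}{1+\alpha}\frac{\underline\omega w_0}{p_e} - \frac{1}{1+\alpha} e_{\rm min}, \qquad y_1^{\rm Q} = \frac{\underline\omega w_0 - p_e e_1^{\rm Q}}{p_y},
\]
which matches the stated formula (the positive part is then automatic). The resulting maximum utility is $\alpha\ln\widetilde e_1^\star + \ln y_1^\star = (1+\alpha)\ln W + C_{\alpha,p_e,p_y}$, exactly the first branch of $V_1$.

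In the opposite regime $e_{\rm min} \geq \widebar e_{\rm min}$, the unconstrained optimum violates the floor, so by concavity the constraint binds: $\widetilde e_1 = e_{\rm min}$, equivalently $e_1 = 0$. The budget then becomes $p_y y_1 \leq \underline\omega w_0$, the one-variable maximisation of $\ln y_1$ gives $y_1 = \underline\omega w_0/p_y$, and the positive part in the stated expression for $e_1^{\rm Q}$ kicks in to yield $0$. Plugging into $U$ yields $\alpha\ln(e_{\rm min}) + \ln(\underline\omega w_0) - \ln(p_y)$, the second branch of $V_1$. The only delicate point is justifying the binding-floor argument cleanly; I would do so by observing that on $\widetilde e_1 \geq e_{\rm min}$ the unconstrained objective in $\widetilde e_1$ is strictly concave with unique maximiser to the left of $e_{\rm min}$, so its restriction to $[e_{\rm min},\infty)$ is decreasing and maximised at the left endpoint, at which point the remaining budget for $y_1$ is $\underline\omega w_0$. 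This is the main, though routine, obstacle.
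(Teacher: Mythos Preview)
Your argument is correct and is precisely the natural route: split on $\omega$, and in the nontrivial case $\omega=\underline\omega$ substitute $\widetilde e_1=e_1+e_{\rm min}$ to obtain a Cobb--Douglas problem with effective wealth $W=\underline\omega w_0+p_e e_{\rm min}$ and floor $\widetilde e_1\geq e_{\rm min}$, then read off whether the floor binds via the comparison $e_{\rm min}\lessgtr \widebar e_{\rm min}$. The paper does not spell out a proof of this lemma (it is stated in \Cref{app:with_insurance} as a routine preliminary, with only the interpretive discussion following it), so there is nothing to compare against; your write-up would serve perfectly well as the omitted proof.
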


The case separation in the previous proposition is needed to ensure that the consumption $e^{\rm Q}_1$ at time $t=1$ is non-negative. Indeed, the consumer should not be allowed to sell back the staple good.
In the first case, \textit{i.e.}, when $e_{\rm min} \mathds{1}_{\omega = \underline\omega} < \widebar e_{\rm min}$, the agent's utility at time $t=1$ depends only on his effective income $\omega w_0 + p_e e_{\rm min} \mathds{1}_{\omega = \underline\omega}$. Assuming that the choice of $e_{\rm min}$ is restricted to the interval $[0, \widebar e_{\rm min}]$ is equivalent to assuming that the quantity offered is smaller than the optimal quantity consumed in the event of an income loss. Therefore, in this case, the insurance acts as an earmarked fund or a liquid asset: the agent would have spent at least the quantity $e_{\rm min}$, so he reacts as if his income were increased by this value. In contrast, in the second case, the agent consumes only the amount $e_{\rm min}$ of the staple good: the optimal $e_1^{\rm Q}$ is equal to zero. In this case, the insurance is not interpreted as a liquid asset, and the agent consumes the full quantity $e_{\rm min}$ offered to him, leading to a utility of $ \alpha \ln (e_{\rm min})$, and does not consider it an increase in income. He then spends all his income $\underline\omega w_0$ on the other good.

\medskip

To simplify the notations, we can assume without loss of generality that $e_{\rm min}$ is of the form $e_{\rm min} = q \alpha \underline\omega w_0 /p_e$ for some $q \in \R_+$. The maximum utility obtained by the agent at time $t=1$ can thus be written as a function of $\omega$ and $q$ as follows:
\begin{align}\label{eq:max_utility_t1}
    V_1 (\omega, q) &=
    (1+\alpha) \ln (\omega w_0) + \widebar U(q) \mathds{1}_{\omega = \underline\omega} + C_{\alpha,p_e,p_y},
\end{align}
where $\widebar U$ is defined by \eqref{eq:overline_U}.
Combining \eqref{eq:max_utility_t0} and \eqref{eq:max_utility_t1}, we can explicitly compute the expected utility of an agent subscribing to an insurance contract, which allows us to state \Cref{prop:exp_utility}. Then, comparing the utility with and without insurance, we can determine when an agent of type $\varepsilon$ subscribes to the insurance plan (see \Cref{prop:PC}). 

\subsection{On the participation constraint}\label{app:tech_proof_CS}

The agent thus subscribes to the insurance plan as soon as the premium is below a specific level, given by \eqref{eq:max_price_QI}. Given the form of the maximum price, it can already be noted that some agents show a certain form of irrationality due to their unwillingness to save money from one period to the next.

\begin{remark}[Maximum price without uncertainty]\label{rk:without_uncertainty_QI}
One may note that, in our framework, the maximal price the consumer is willing to pay in the case without uncertainty is not equal to the actuarial price, $p_e e_{\rm min}$. Indeed, assuming that $q < 1$ and setting $\beta = 1$ for simplicity, we obtain $w_0 t_{max} (1, q) > p_e e_{\rm min}$ as soon as $w_0 > \underline\omega w_0 + p_e e_{\rm min}$. Therefore, if the income of the consumer at time $t=0$ is larger than the effective money he has at time $t=1$ with the insurance, he is willing to pay a certain amount of money at time $t=0$ to obtain less at time $t=1$. Conversely, if his income $w_0$ is lower than the money he will receive at time $t=1$, he is not willing to pay at time $t=0$ the real price of the energy he will get at time $t=1$. 
This result is slightly counterintuitive but is totally explained by the choice of concave utilities in a two-period model and the absence of saving.
This problem does not occur in single-period models.
However, a one-period model would not allow us to model a household willing to insure against a possible loss of future income. One solution could be to offer the agent the opportunity to have savings, but this is not consistent with the type of household being considered. Therefore, an alternative approach to address this issue is initiated in \textnormal{\Cref{sec:prepayment}}.
\end{remark}

\begin{proof}[\Cref{cor:comp_stat_tmax}]
To establish the comparative statics detailed in the corollary, it suffices to compute the relevant derivatives of $T_{max}$, given by \eqref{eq:max_price_QI_emin}.

\medskip

In the one hand, when $e_{\rm min} < \alpha \underline\omega w_0/p_e$, we have:
\begin{align*}
    \dfrac{\partial_{e_{\rm min}} T_{max} (\varepsilon, e_{\rm min})}{ w_0 - T_{max} (\varepsilon, e_{\rm min})} &=  
    \dfrac{\beta \varepsilon p_e}{\underline\omega w_0 + p_e e_{\rm min}}, \quad
    \dfrac{\partial_{p_e} T_{max} (\varepsilon, e_{\rm min})}{ w_0 - T_{max} (\varepsilon, e_{\rm min})}
     =
    \dfrac{\beta \varepsilon e_{\rm min} }{\underline\omega w_0 + p_e e_{\rm min} } \big( w_0 - T_{max} (\varepsilon, e_{\rm min}) \big), \\
    \dfrac{\partial_{\varepsilon} T_{max} (\varepsilon, e_{\rm min})}{ w_0 - T_{max} (\varepsilon, e_{\rm min})}
     &= 
    \beta \ln \bigg(1 + \dfrac{p_e e_{\rm min}}{\underline\omega w_0} \bigg), \quad
    \dfrac{\partial_{\beta} T_{max} (\varepsilon, e_{\rm min})}{ w_0 - T_{max} (\varepsilon, e_{\rm min})}
    = 
    \varepsilon \ln \bigg(1 + \dfrac{p_e e_{\rm min}}{\underline\omega w_0} \bigg), \\
    \text{and } \quad
    \dfrac{\partial_{\underline \omega} T_{max} (\varepsilon, e_{\rm min})}{ w_0 - T_{max} (\varepsilon, e_{\rm min})}
     &= 
    \dfrac{- \beta \varepsilon p_e e_{\rm min}}{\underline\omega \big( \underline\omega w_0 + p_e e_{\rm min} \big)}.
\end{align*}
It is easy to see that these derivatives are non-negative, except the last one which is non-positive, since $T_{max}$ is bounded by $w_0$, and all other parameters are non-negative. Furthermore, we have
\begin{align*}
    \partial_{w_0} T_{max} (\varepsilon, e_{\rm min}) &= 1 
    - \exp \bigg( - \beta \varepsilon \ln \bigg(1 + \dfrac{p_e e_{\rm min}}{\underline\omega w_0} \bigg) \bigg)
    \bigg( 1 + \dfrac{\beta \varepsilon  p_e e_{\rm min} }{ \underline\omega w_0 + p_e e_{\rm min} } \bigg).
\end{align*}
Further computations are needed to ensure that this derivative is non-negative. Let us introduce the following function, for all $x \in [0,1)$:
\begin{align*}
    f(x) &= 1 - \erm^{- \beta \varepsilon \ln (1 + x)}
    \bigg( 1 + \dfrac{\beta \varepsilon  x }{1 + x} \bigg).
\end{align*}
Noticing that the following equality holds:
\begin{align*}
    \partial_{w_0} T_{max} (\varepsilon, e_{\rm min}) = f \bigg( \frac{p_e e_{\rm min}}{\underline\omega w_0} \bigg),
\end{align*}
it is only necessary to prove that $f$ is non-negative for all $x \in [0,1)$ in order to obtain the desired result, namely that $T_{max}$ is increasing with respect to $w_0$. First, remark that $f(0) = 0$ and that the derivative of $f$ satisfies:
\begin{align*}
    f'(x) &=
    \dfrac{\beta \varepsilon x (1+ \beta \varepsilon)}{(1+x)^2}
    \erm^{- \beta \varepsilon \ln (1 + x)} \geq 0 \; \text{ for all } \; x \in [0,1).
\end{align*}
Therefore, the function $f$ starts at $0$ for $x = 0$ and is then increasing for $x \in [0,1)$, which implies that $f$ is non-negative on $[0,1)$. We thus conclude that $T_{max}$ is increasing with respect to $w_0$.

\medskip

In a similar way, we can show that, when $e_{\rm min} \geq \alpha \underline\omega w_0/p_{\rm e}$, the derivatives have the same sign as their counterparts on the previous interval, confirming that $T_{\rm max}$ is increasing with respect to $w_0$, $e_{\rm min}$, $p_{\rm e}$, $\varepsilon$ and $\beta$, but decreasing with respect to $\underline \omega$. 
\end{proof}

\section{Benchmark case: the first-best problem}\label{sec:first_best}

For the sake of completeness, we solve in this section the problem in the first-best case, \textit{i.e.}, without adverse selection. In particular, since the principal knows the type of the agent, she can offer him a specific contract, with which his participation constraint is binding. In other words, she may charge the insurance at the highest price the agent is willing to pay, to the point that he is, in fact, indifferent between subscribing or not subscribing to the contract. As already mentioned, this section gives some insights on what should happen under adverse selection. 

\subsection{Solving the principal's problem}\label{ss:principal_FB}

As detailed in \Cref{ss:principal}, the principal's problem in the first-best case is defined by \eqref{eq:def_principal_pb_FB}, under the participation constraint of the agent. Thanks to the reasoning developed in the previous subsection and denoting by $\Xi_{\varepsilon} := \{ (q,t_0) \in \R_+^2, \text{ s.t. } t_0 \leq t_{max} (\varepsilon, q) \}$, her problem is equivalent to:
\begin{align}\label{eq:principal_pb_FB_simple}
    \pi_{\varepsilon} := w_0 \; \sup_{(q, t_0) \in \Xi_{\varepsilon}} \big( t_0 - \varepsilon \alpha q \underline\omega \big).
\end{align}

\begin{proposition}\label{prop:pb_principal_FB}
If $\beta > \underline\omega$, the optimal contract $(e_{\rm min}, T)$ for an Agent of type $\varepsilon \in [0,1]$ is given by $e_{\rm min}^{FB} := \alpha q_\varepsilon \underline\omega w_0/p_e$ and $T^{FB} := w_0 t_{max} (\varepsilon, q_\varepsilon)$ where
\begin{align}\label{eq:q_optimal_FB}
q_\varepsilon := 
\left\{
\arraycolsep=1.4pt\def\arraystretch{1.8}
    \begin{array}{ll}
    \Big( \beta (1+\alpha)^{-\beta \varepsilon-1} /\underline\omega \Big)^{(1+\alpha)/(1+\alpha+\alpha \beta \varepsilon)} & \mbox{ if } \varepsilon \leq \varepsilon_1^{FB}, \\
    \dfrac{1}{\alpha} \Big( \big( \beta/\underline\omega \big)^{1/(1+\beta \varepsilon)} - 1 \Big) & \mbox{ if } \varepsilon > \varepsilon_1^{FB},
    \end{array}
\right.
\end{align}
for $\varepsilon_1^{FB} := \dfrac{1}{\beta} \bigg( \dfrac{\ln (\beta) - \ln (\underline\omega)}{\ln (1+ \alpha)} - 1 \bigg)$.
\end{proposition}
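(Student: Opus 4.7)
My plan is as follows. The objective $t_0 - \varepsilon \alpha q \underline\omega$ in \eqref{eq:principal_pb_FB_simple} is strictly increasing in $t_0$, so for any fixed $q \geq 0$ the optimum saturates the participation constraint, i.e., $t_0^\star = t_{max}(\varepsilon, q)$. The problem then reduces to the one-dimensional optimisation
\[
\pi_\varepsilon = w_0 \sup_{q \geq 0} f_\varepsilon(q), \qquad f_\varepsilon(q) := t_{max}(\varepsilon, q) - \varepsilon \alpha q \underline\omega,
\]
where $t_{max}$ is given by \eqref{eq:max_price_QI}. Because $t_{max}(\varepsilon, \cdot)$ is defined piecewise at $q=1$, I will treat the two regions separately, compute the first-order condition in each, and then stitch the two candidate maximisers together.

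On the region $q \in [0,1)$, $f_\varepsilon(q) = 1 - (1+q\alpha)^{-\beta\varepsilon} - \varepsilon\alpha q \underline\omega$, so
\[
f_\varepsilon'(q) = \varepsilon\alpha\bigl[\beta(1+q\alpha)^{-\beta\varepsilon-1} - \underline\omega\bigr].
\]
The bracket is strictly decreasing in $q$, which gives strict concavity of $f_\varepsilon$ on $[0,1)$; setting it to zero yields
\[
q = \tfrac{1}{\alpha}\bigl((\beta/\underline\omega)^{1/(1+\beta\varepsilon)}-1\bigr).
\]
This candidate is non-negative iff $\beta \geq \underline\omega$ (this is where the assumption $\beta > \underline\omega$ enters, to guarantee an interior maximiser rather than $q^\star = 0$), and it lies in $[0,1)$ precisely when $(\beta/\underline\omega)^{1/(1+\beta\varepsilon)} < 1+\alpha$, which rearranges to $\varepsilon > \varepsilon_1^{FB}$. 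On the region $q \geq 1$, a fully analogous computation gives
\[
f_\varepsilon'(q) = \varepsilon\alpha\Bigl[\tfrac{\beta}{1+\alpha}\, q^{-\beta\varepsilon\alpha/(1+\alpha)-1}(1+\alpha)^{-\beta\varepsilon} - \underline\omega\Bigr],
\]
again strictly decreasing in $q$, hence $f_\varepsilon$ is strictly concave on $[1,\infty)$. The first-order condition yields
\[
q = \bigl(\beta(1+\alpha)^{-\beta\varepsilon-1}/\underline\omega\bigr)^{(1+\alpha)/(1+\alpha+\alpha\beta\varepsilon)},
\]
which is $\geq 1$ iff $\beta(1+\alpha)^{-\beta\varepsilon-1}/\underline\omega \geq 1$, equivalent to $\varepsilon \leq \varepsilon_1^{FB}$.

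It remains to combine the two cases and verify global optimality. Observe first that $f_\varepsilon$ is $\Cc^1$ at $q=1$, since both one-sided derivatives reduce to $\varepsilon\alpha[\beta(1+\alpha)^{-\beta\varepsilon-1} - \underline\omega]$, and that the two critical-point formulas both evaluate to $q=1$ at $\varepsilon = \varepsilon_1^{FB}$, so the candidate $q_\varepsilon$ as defined by \eqref{eq:q_optimal_FB} is continuous in $\varepsilon$. The main delicate point I anticipate is establishing global, not merely local, optimality, because $f_\varepsilon$ is only piecewise concave. However, concavity on each piece together with the $\Cc^1$ matching at $q=1$ implies that $f_\varepsilon'$ is continuous and strictly decreasing across the whole domain $\R_+$: the unique sign change of $f_\varepsilon'$ occurs on whichever side of $q=1$ corresponds to the position of $\varepsilon$ relative to $\varepsilon_1^{FB}$. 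This yields a unique global maximiser, given by \eqref{eq:q_optimal_FB}. The associated premium is obtained by plugging back into $t_{max}$, which gives $T^{FB} = w_0 t_{max}(\varepsilon, q_\varepsilon)$, completing the proof.
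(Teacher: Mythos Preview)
Your proof is correct and follows essentially the same approach as the paper: saturate the participation constraint, split the one-dimensional problem at $q=1$, compute the FOC on each piece, and match at the boundary. Your observation that $f_\varepsilon$ is $\Cc^1$ at $q=1$ with a globally strictly decreasing derivative is a slightly cleaner way to establish global (rather than piecewise) optimality than the paper's argument, which instead clips each candidate to $q=1$ and notes that the two suprema agree there.
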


The profit of the principal in the first-best case, induced by the optimal contract detailed in the previous proposition, is given by the following result.
\begin{corollary}\label{cor:pb_principal_FB}
Let us assume $\beta > \underline \omega$. The optimal contract $(e_{\rm min}^{FB}, T^{FB})$ for an agent of type $\varepsilon \in [0,1]$ induced the following profit for the principal:
\begin{align*}
\pi_\varepsilon^{FB} := 
\left\{
\arraycolsep=1.4pt\def\arraystretch{1.6}
    \begin{array}{ll}
    w_0 \Big( 1 - \big( \beta (1+\alpha) / \underline\omega \big)^{- \beta \varepsilon/(1+\alpha + \alpha \beta \varepsilon)}  - \varepsilon \underline\omega \big( \beta (1+\alpha)^{-\beta \varepsilon} / \underline\omega \big)^{(1+\alpha)/(1+\alpha+\alpha \beta \varepsilon)} \Big) 
    & \mbox{ if } \varepsilon \leq \varepsilon_1^{FB}, \\
    w_0 \Big( 1 + \varepsilon \underline \omega - \big(\beta / \underline\omega \big)^{-\beta \varepsilon /(1+\beta \varepsilon)} - \varepsilon \underline\omega \big(\beta /\underline\omega \big)^{1/(1+\beta \varepsilon)} \Big) 
    & \mbox{ if } \varepsilon > \varepsilon_1^{FB}.
    \end{array}
\right.
\end{align*}
\end{corollary}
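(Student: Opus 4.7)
The plan is to reduce the corollary to a direct algebraic computation built on \Cref{prop:pb_principal_FB}. In the first-best case the principal knows $\varepsilon$ and therefore sets the premium so that the participation constraint binds, i.e., $T^{FB} = w_0\, t_{max}(\varepsilon, q_\varepsilon)$ with $t_{max}$ given by \eqref{eq:max_price_QI} and $q_\varepsilon$ given by \eqref{eq:q_optimal_FB}. Substituting into the objective \eqref{eq:principal_pb_FB_simple} then yields
\begin{align*}
\pi_\varepsilon^{FB} \;=\; w_0 \bigl( t_{max}(\varepsilon, q_\varepsilon) \;-\; \varepsilon \alpha q_\varepsilon \underline\omega \bigr),
\end{align*}
and all that remains is to simplify the right-hand side according to the two regimes $\varepsilon \leq \varepsilon_1^{FB}$ and $\varepsilon > \varepsilon_1^{FB}$.

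Before branching, I would quickly verify that $\varepsilon_1^{FB}$ is indeed the threshold at which $q_\varepsilon = 1$, so that the case split in \Cref{prop:pb_principal_FB} coincides with the case split in the definition of $t_{max}$. Setting $A := \beta(1+\alpha)^{-\beta\varepsilon-1}/\underline\omega$, the definition $\varepsilon_1^{FB} = \tfrac{1}{\beta}\bigl(\tfrac{\ln(\beta/\underline\omega)}{\ln(1+\alpha)} - 1\bigr)$ rearranges to $(1+\alpha)^{1+\beta\varepsilon_1^{FB}} = \beta/\underline\omega$, i.e.\ $A=1$ at $\varepsilon_1^{FB}$; hence $q_\varepsilon \geq 1$ iff $A \geq 1$ iff $\varepsilon \leq \varepsilon_1^{FB}$, and the proper branch of $t_{max}$ is unambiguous in each regime.

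In the easier regime $\varepsilon > \varepsilon_1^{FB}$ we have $q_\varepsilon < 1$ and the defining relation $1 + \alpha q_\varepsilon = (\beta/\underline\omega)^{1/(1+\beta\varepsilon)}$ immediately gives $t_{max}(\varepsilon, q_\varepsilon) = 1 - (\beta/\underline\omega)^{-\beta\varepsilon/(1+\beta\varepsilon)}$; plugging this and $\varepsilon \alpha q_\varepsilon \underline\omega = \varepsilon\underline\omega\bigl((\beta/\underline\omega)^{1/(1+\beta\varepsilon)} - 1\bigr)$ into the profit expression and collecting the constant $\varepsilon\underline\omega$ reproduces the second branch of the stated formula. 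For the more delicate regime $\varepsilon \leq \varepsilon_1^{FB}$ we have $q_\varepsilon \geq 1$, so $t_{max}(\varepsilon, q_\varepsilon) = 1 - q_\varepsilon^{-\beta\varepsilon\alpha/(1+\alpha)}(1+\alpha)^{-\beta\varepsilon}$; writing $q_\varepsilon = A^{(1+\alpha)/s}$ with $s := 1+\alpha+\alpha\beta\varepsilon$, the composed power reduces, after factoring $A = \beta(1+\alpha)^{-\beta\varepsilon-1}/\underline\omega$, to a product of powers of $\beta/\underline\omega$ and $1+\alpha$; the key simplification is the identity
\begin{align*}
(\beta\varepsilon+1)\alpha \;-\; s \;=\; -1,
\end{align*}
which collapses the combined exponent of $1+\alpha$ to $-\beta\varepsilon/s$ and yields the first displayed term of the corollary. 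The additive term $\varepsilon \alpha q_\varepsilon \underline\omega$ is handled separately by re-expressing $\alpha A^{(1+\alpha)/s}$ in the form required by the statement.

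The main obstacle, as is typical for such closed-form statements, is purely the bookkeeping of nested exponents in the second regime; all manipulations are elementary once the identity $(\beta\varepsilon+1)\alpha - s = -1$ is spotted. No additional optimisation argument or variational step is needed beyond the already-established \Cref{prop:pb_principal_FB} and the binding of the participation constraint.
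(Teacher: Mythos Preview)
Your proposal is correct and follows exactly the route the paper intends: the paper's proof is the single sentence ``the proof of the previous corollary results from the proof of the associated proposition,'' and you have simply carried out that substitution of $q_\varepsilon$ from \Cref{prop:pb_principal_FB} and the binding $t_{max}$ into the objective \eqref{eq:principal_pb_FB_simple}, spelling out the exponent bookkeeping (in particular the identity $(\beta\varepsilon+1)\alpha - s = -1$) that the paper leaves implicit.
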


The proof of the previous corollary results from the proof of the associated proposition, detailed below.
\begin{proof}[\Cref{prop:pb_principal_FB}]
We fix the agent's type $\varepsilon \in [0,1]$. Since the profit of the principal is increasing in $T$, she has an incentive to set the price of the insurance equal to the maximum price the agent is willing to pay, \textit{i.e.}, $t_0 = t_{max} (\varepsilon, q)$, for $q \in \R_+$. %
The participation constraint of the agent is thus binding, and the maximisation problem of the principal \eqref{eq:principal_pb_FB_simple} becomes:
\begin{align*}
    \pi_{\varepsilon} = w_0 \max \bigg\{ 
    \sup_{q \in [0,1)} \Big\{ 1 - (1 + \alpha q)^{- \beta \varepsilon} - \varepsilon \alpha q \underline\omega \Big\}, \;
    \sup_{q \geq 1} \Big\{ 1 - q^{- \beta \varepsilon \frac{\alpha}{1+\alpha}} (1+ \alpha)^{- \beta \varepsilon} - \varepsilon \alpha q \underline\omega \Big\} \bigg\}.
\end{align*}
Computing the first- and second-order conditions (FOC and SOC) for each supremum, and since $\beta > \underline \omega$, we obtain that the two suprema are attained for $q_1$ and $q_2$, respectively, where:
\begin{align*}
    q_1 = \min \bigg\{ \dfrac{1}{\alpha} \Big( \big( \beta/\underline\omega \big)^{1/(1+\beta \varepsilon)} - 1 \Big), 1 \bigg\}, \; \text{ and } \; 
    q_2 = \max \bigg\{
    \Big( \beta (1+\alpha)^{-\beta \varepsilon} /\underline\omega \Big)^{(1+\alpha)/(1+\alpha+\alpha \beta \varepsilon)}, 1 \bigg\}.
\end{align*}
If $\varepsilon > \varepsilon_1^{FB}$, then $q_1 < 1$ and $q_2 = 1$ and, conversely, if $\varepsilon < \varepsilon_1^{FB}$, then $q_1 = 1$ and $q_2 > 1$. Since the two suprema have the same value for $q=1$, we conclude that $q_\varepsilon$ defined by \eqref{eq:q_optimal_FB} is optimal.
\end{proof}

\begin{remark}\label{rk:beta_omegal}
We assume in \textnormal{\Cref{prop:pb_principal_FB}} and \textnormal{\Cref{cor:pb_principal_FB}} that $\beta > \underline\omega$ because it is the most interesting case. Otherwise, we would have $\varepsilon_1^{FB} < 0$, and the maximum would be reached for $q_1$ defined in the previous proof. However, in this particular case, $q_1$ is negative for all $\varepsilon \in [0,1]$. Therefore, the optimal $q_\varepsilon$ is zero in this case for all $\varepsilon \in [0,1]$. This means that the principal has no interest in offering insurance. Indeed, when $\beta$ is too small, the agent has very little concern for his future, so he is not willing to pay for insurance to protect himself.
\end{remark}

\subsection{Some comparative statics}\label{ss:comparative_FB}

We first study the optimal insured quantity as well as the optimal insurance premium in the first-best case, using Equation \eqref{eq:q_optimal_FB}.
\begin{corollary}
\begin{enumerate}[label=$(\roman*)$]
\item The quantity $e_{\rm min}^{FB}$ is increasing with respect to $w_0$ and decreasing with respect to $p_e$ and $\varepsilon$. 
\item The price $T^{FB}$ is increasing with respect to $w_0$, $\varepsilon$ and $\beta$, but decreasing with respect to $\underline \omega$. Moreover, $T^{FB}$ is independent of $p_e$.
\item When $\varepsilon > \varepsilon_1^{FB}$, both $e_{\rm min}^{FB}$ and $T^{FB}$ are independent of $\alpha$.
\end{enumerate}
\end{corollary}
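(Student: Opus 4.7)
My plan is to work directly from the explicit expressions for $q_\varepsilon$ given in \Cref{prop:pb_principal_FB} and to exploit the fact that, at the first-best optimum, the agent's participation constraint binds, so that $T^{FB} = w_0\, t_{max}(\varepsilon,q_\varepsilon)$. Throughout, I treat the two cases $\varepsilon > \varepsilon_1^{FB}$ and $\varepsilon \leq \varepsilon_1^{FB}$ separately, and use the standing assumption $\beta > \underline\omega$ (i.e.\ $\ln(\beta/\underline\omega) > 0$).

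For (i), I write $e_{\rm min}^{FB} = \alpha q_\varepsilon \underline\omega w_0/p_e$ and observe that $q_\varepsilon$ depends only on $\alpha,\beta,\varepsilon,\underline\omega$. The monotonicities in $w_0$ (increasing) and $p_e$ (decreasing) are then immediate. For $\varepsilon$-monotonicity, on $\{\varepsilon > \varepsilon_1^{FB}\}$ I note that $\alpha q_\varepsilon = (\beta/\underline\omega)^{1/(1+\beta\varepsilon)} - 1$, where the exponent $1/(1+\beta\varepsilon)$ is decreasing in $\varepsilon$ and the base is $>1$, so $\alpha q_\varepsilon$ is decreasing. On $\{\varepsilon \leq \varepsilon_1^{FB}\}$, I take logs and compute
\begin{align*}
\ln q_\varepsilon = \frac{(1+\alpha)\bigl[\ln(\beta/\underline\omega) - (\beta\varepsilon+1)\ln(1+\alpha)\bigr]}{1+\alpha+\alpha\beta\varepsilon},
\end{align*}
and after the quotient-rule simplification the $\varepsilon$-derivative reduces to $-(1+\alpha)\beta\bigl(\ln(1+\alpha) + \alpha\ln(\beta/\underline\omega)\bigr)/(1+\alpha+\alpha\beta\varepsilon)^2$, which is strictly negative under $\beta > \underline\omega$. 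This gives the decreasing behaviour in $\varepsilon$.

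For (ii), the crucial observation is the binding-PC identity $(1+\alpha)\ln(1 - T^{FB}/w_0) + \beta\varepsilon\,\widebar U(q_\varepsilon) = 0$, obtained from \Cref{prop:exp_utility}, which yields
\begin{align*}
T^{FB} = w_0\biggl(1 - \exp\Bigl(-\tfrac{\beta\varepsilon\,\widebar U(q_\varepsilon)}{1+\alpha}\Bigr)\biggr).
\end{align*}
Independence of $p_e$ and linearity in $w_0$ are immediate from this formula, since neither $q_\varepsilon$ nor $\widebar U$ involves $w_0$ or $p_e$. For the other three monotonicities I plug in the two explicit forms of $q_\varepsilon$ and simplify. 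In the upper case $\varepsilon > \varepsilon_1^{FB}$, one gets $T^{FB}/w_0 = 1 - (\beta/\underline\omega)^{-\beta\varepsilon/(1+\beta\varepsilon)}$, from which monotonicity in $\underline\omega$ (decreasing), in $\varepsilon$ (use $-\beta\varepsilon/(1+\beta\varepsilon) = -1 + 1/(1+\beta\varepsilon)$), and in $\beta$ (the product $\tfrac{\beta\varepsilon}{1+\beta\varepsilon}\ln(\beta/\underline\omega)$ is a product of two positive and increasing functions of $\beta$) all follow by inspection. In the lower case $\varepsilon \leq \varepsilon_1^{FB}$, after substituting $\widebar U(q_\varepsilon) = \alpha\ln q_\varepsilon + (1+\alpha)\ln(1+\alpha)$ and simplifying, the exponent collapses to
\begin{align*}
\tfrac{\beta\varepsilon\,\widebar U(q_\varepsilon)}{1+\alpha} = \tfrac{\beta\varepsilon\,\bigl(\alpha\ln(\beta/\underline\omega) + \ln(1+\alpha)\bigr)}{1+\alpha+\alpha\beta\varepsilon},
\end{align*}
and again the three monotonicities are read off by elementary differentiation.

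For (iii), I use the fact that on $\{\varepsilon > \varepsilon_1^{FB}\}$ one has $\alpha q_\varepsilon = (\beta/\underline\omega)^{1/(1+\beta\varepsilon)} - 1$, so
\begin{align*}
e_{\rm min}^{FB} = \tfrac{\underline\omega w_0}{p_e}\Bigl((\beta/\underline\omega)^{1/(1+\beta\varepsilon)} - 1\Bigr),
\end{align*}
and since $q_\varepsilon < 1$ in this region the formula derived above gives $T^{FB} = w_0\bigl(1 - (\beta/\underline\omega)^{-\beta\varepsilon/(1+\beta\varepsilon)}\bigr)$; both expressions are manifestly independent of $\alpha$.

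The main obstacle is the monotonicity-in-$\beta$ step in the lower case of (ii): $\beta$ enters through the base $\beta/\underline\omega$, through the exponent $\beta\varepsilon$, and through the denominator $1+\alpha+\alpha\beta\varepsilon$, so a careful sign analysis of the derivative of $\tfrac{\beta\varepsilon(\alpha\ln(\beta/\underline\omega)+\ln(1+\alpha))}{1+\alpha+\alpha\beta\varepsilon}$ is required; the other monotonicities reduce to one-line observations once the binding-PC formula for $T^{FB}$ is in hand.
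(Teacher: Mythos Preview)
Your proof is correct and follows the route the paper itself indicates: the paper omits the argument, saying only that it is ``highly similar to that of \Cref{cor:comp_stat_tmax}'', i.e.\ direct computation of the relevant partial derivatives from the explicit first-best formulas. Your use of the binding participation constraint to obtain the compact expression $T^{FB}=w_0\bigl(1-\exp(-\beta\varepsilon\,\widebar U(q_\varepsilon)/(1+\alpha))\bigr)$, and then substituting the two branches of $q_\varepsilon$ to reduce everything to elementary one-variable monotonicity checks, is exactly in this spirit and arguably cleaner than differentiating $T_{max}(\varepsilon,e_{\rm min}^{FB})$ via the chain rule; in particular your simplification of the exponent in the $\varepsilon\le\varepsilon_1^{FB}$ branch to $\beta\varepsilon\bigl(\alpha\ln(\beta/\underline\omega)+\ln(1+\alpha)\bigr)/(1+\alpha+\alpha\beta\varepsilon)$ makes the sign analyses (including the $\beta$-derivative you flag as the main obstacle) entirely transparent.
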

The proof of this result is highly similar to that of \Cref{cor:comp_stat_tmax} in \Cref{ss:comparative_FB} and is therefore omitted. In our opinion, the main feature to notice is that the quantity $e_{\rm min}^{FB}$ is decreasing with respect to $\varepsilon$, while the price is increasing. This means that in the first-best case, the insurer can force high-risk households to subscribe a small insured quantity at a higher price. Indeed, by knowing the type of the agents, the principal can design very specific contract for each type of agents. In particular, since the reservation utility ${\rm EU}^\varnothing$ is decreasing with respect to $\varepsilon$, a high-risk type agent is willing to pay more than a low-risk type agent for the same quantity, and the principal therefore exploits this fact.

\medskip

Given the optimal contract $(e_{\rm min}^{FB}, T^{FB})$, the maximum profit obtained by the principal for each risk type $\varepsilon$ of agent is computed explicitly in \Cref{cor:pb_principal_FB}. In particular, this profit is surprisingly independent of the price $p_e$. In other words, the profit of the insurer is not impacted by the variations of the staple good's price. The most important fact is that the profit is increasing with respect to $\varepsilon$. In other words, a risky household is much more profitable for the insurer, which is consistent with the previous remarks. Indeed, in the first-best case, the riskier the household, the smaller the insured quantity but the higher the price. One may note that the profit is also increasing in $w_0$.

\subsection{First-best insurance against fuel poverty}\label{ss:fuel_poverty_FB}

With the motivation described in \Cref{ss:fuel_poverty}, we apply the results of the first-best case in the context of insurance against fuel poverty in France. We consider the same household as the one studied in \Cref{sec:fuel_poverty_TB}. Thanks to \Cref{prop:pb_principal_FB}, we can compute the optimal insured quantity and the insurance premium, as well as the principal's profit (blue curves on \Cref{fig:FB_QI}). In this case, an agent of a higher risk type pays a higher insurance premium for a smaller insured quantity than does an agent of a lower risk type, as we can see by combining the two upper graphs of \Cref{fig:FB_QI}. More precisely, the insured quantity varies from approximately $e_{\rm min}^{FB}=14,403$ to $12,653$ kWh, while the price, $T^{FB}$ increases from $0$ to $4,252$ \euro{}. Moreover, the middle graph of \Cref{fig:FB_QI} shows that the riskier the agent's type, the greater the difference between the insurance premium and the actuarial price is. Since the actuarial price also corresponds to the principal's cost, the greater this difference, the greater her profit is. Therefore, from the principal's point of view, the `more efficient' agents are those who are willing to pay more than the actuarial price (bottom graph of \Cref{fig:FB_QI}). Her average profit $\Pi^{FB}  = 1,035$ \euro{} is given by the integral of her profit per agent, assuming that the distribution of risk type is uniform.

\medskip

Therefore, in our framework, the efficient agents are those who are at risk of losing their income, which may seem counterintuitive.
Nevertheless, this result can be explained by the reservation utility we have chosen.
More precisely, one can compute the information rent, which is the difference between ${\rm EU}(\varepsilon,q,t_0)$ and ${\rm EU}^\varnothing (\varepsilon)$ for an agent of type $\varepsilon$. This information rent increases with $\varepsilon$ for every $q$ and $t_0$, which means exactly that the riskier the agent's type, the more interesting it is for him to buy the insurance.
In the first-best case, the principal knows the agent's type and can thus reduce his information rent to zero. This explains why an agent of a higher risk type is ready to pay more for a lower insured quantity.

\medskip

This section gives some intuitions for the third-best case, \textit{i.e.}, when the principal cannot differentiate among the agents by their type since they are unknown to her. Indeed, under adverse selection, if the principal offers the optimal first-best contract, an agent with a positive probability of losing income may lie and pretend to be of a lower risk type to pay less for a higher insured quantity. Therefore, the agents of a higher risk type would receive an information rent generated by the informational advantage they have over the principal.
Conversely, the agent of the lower risk type should have no information rent since he has no incentive to lie: if he pretends he is of a higher risk type, he pays a higher premium for a lower insured quantity. 
As in classical adverse selection problems, here the \textit{efficient} agents, which are those who have an incentive to lie, receive an \textit{information rent}, generated by the informational advantage they have over the principal.
In line with this reasoning, it is clear that under adverse selection, the first-best contract should not be optimal anymore, and the principal thus has to find a new optimal menu of contracts.

\begin{figure}[!ht]
\includegraphics[scale=0.5]{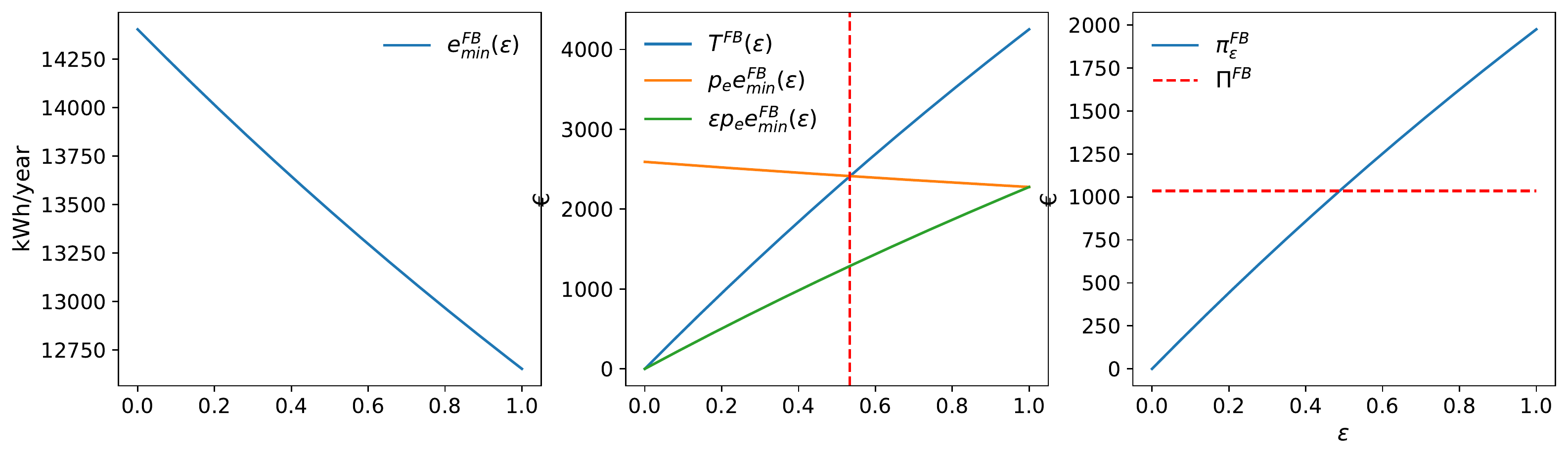}
\centering
\caption{Optimal insurance in the first-best case.}
{\footnotesize The blue curves represent, from left to right, the insured quantity, the insurance premium and the principal's profit, with respect to the type $\varepsilon$ of the agent. 
Middle: insurance price is compared to the future price of the quantity (orange curve) and to its actuarial price (green curve), which also corresponds to the principal's cost.
Right: red dotted line is the principal's average profit.}
\label{fig:FB_QI}
\end{figure}

\section{Technical results and proofs...}\label{app:tech_proof_SB}

\subsection{... to find a risk-revealing menu of contracts}\label{app:tech_proof_revealing} 

Before seeking a risk-revealing menu of contracts, we first prove that the Spence-Mirrlees condition is satisfied in our framework (see \Cref{lem:sm_condition}). This condition is important since it makes the incentive problem well behaved in the sense that only local incentive constraints need to be considered. Together with \Cref{lem:constant}, this allows us to establish \Cref{thm:IC_SB}, whose proof is reported below, after the two lemmas.
\begin{lemma}\label{lem:sm_condition}
The marginal rates of substitution between the in-kind support and the insurance price can be ranked in a monotonic way. More precisely,
\begin{align*}
        \dfrac{\partial}{\partial \varepsilon} \Bigg( \dfrac{\partial_{q} {\rm EU^{\rm Q}} \big(\varepsilon, q, t_0 \big)}{\partial_{t_0} {\rm EU^{\rm Q}} \big(\varepsilon, q, t_0 \big) } \Bigg) \leq 0.
    \end{align*}
\end{lemma}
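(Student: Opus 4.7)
The plan is to exploit the explicit form of $\textnormal{EU}^{\rm Q}$ given by \Cref{prop:exp_utility}, compute the two partial derivatives appearing in the marginal rate of substitution, and observe that the dependence on $\varepsilon$ factorises out multiplicatively, so that the sign of the $\varepsilon$-derivative of the ratio is immediate.

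First, using \eqref{eq:exp_utility}, I would differentiate with respect to $q$ on each piece of $\widebar U$ (defined in \eqref{eq:overline_U}):
\begin{align*}
\partial_q \textnormal{EU}^{\rm Q} (\varepsilon, q, t_0)
= \beta \varepsilon \, \widebar U'(q)
=
\left\{
\arraycolsep=1.4pt\def\arraystretch{1.6}
\begin{array}{ll}
\beta \varepsilon \dfrac{\alpha (1+\alpha)}{1 + \alpha q}, & \mbox{ if } q < 1, \\
\beta \varepsilon \dfrac{\alpha}{q}, & \mbox{ if } q > 1,
\end{array}
\right.
\end{align*}
which in both cases is non-negative. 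Similarly, since only the middle term in \eqref{eq:exp_utility} depends on $t_0$,
\begin{align*}
\partial_{t_0} \textnormal{EU}^{\rm Q} (\varepsilon, q, t_0) = -\dfrac{1+\alpha}{1 - t_0} < 0,
\end{align*}
which is independent of $\varepsilon$ and strictly negative on the admissible domain $t_0 \in [0,1)$.

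Taking the ratio gives
\begin{align*}
\dfrac{\partial_{q} \textnormal{EU}^{\rm Q}(\varepsilon, q, t_0)}{\partial_{t_0} \textnormal{EU}^{\rm Q}(\varepsilon, q, t_0)}
= - \dfrac{\beta \varepsilon (1-t_0)}{1+\alpha} \, \widebar U'(q),
\end{align*}
in which $\varepsilon$ appears only as a multiplicative factor. Differentiating with respect to $\varepsilon$ therefore yields
\begin{align*}
\dfrac{\partial}{\partial \varepsilon} \Bigg( \dfrac{\partial_{q} \textnormal{EU}^{\rm Q}(\varepsilon, q, t_0)}{\partial_{t_0} \textnormal{EU}^{\rm Q}(\varepsilon, q, t_0)} \Bigg)
= - \dfrac{\beta (1-t_0)}{1+\alpha} \, \widebar U'(q) \leq 0,
\end{align*}
since $\beta, 1-t_0, 1+\alpha > 0$ and $\widebar U'(q) \geq 0$ on each piece.

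There is no real obstacle here; the only point requiring a brief comment is the non-differentiability of $\widebar U$ at $q=1$, but since the inequality holds on both $\{q<1\}$ and $\{q>1\}$ with the same expression's sign, and $\widebar U$ is continuous, the statement holds on the whole admissible range of $q$ (interpreting the derivative one-sidedly at $q=1$ if needed, which is consistent with the piecewise regularity imposed in \Cref{def:regularity}).
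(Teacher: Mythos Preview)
Your proof is correct and follows essentially the same approach as the paper: compute $\partial_q \textnormal{EU}^{\rm Q}$ and $\partial_{t_0} \textnormal{EU}^{\rm Q}$ from \eqref{eq:exp_utility}, form the ratio, and differentiate in $\varepsilon$. The only cosmetic difference is that you package the two cases through $\widebar U'(q)$, whereas the paper writes out the piecewise expressions explicitly; the resulting formulas coincide in both regimes.
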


\begin{proof}
Indeed, recalling that the expected utility of an agent of type $\varepsilon$ is given by \eqref{eq:exp_utility}, we have:
\begin{align*}
    \partial_{q} \textnormal{EU}^{\rm Q} \big(\varepsilon, q, t_0 \big) = &\
    \left\{
        \begin{array}{ll}
        \beta \varepsilon \alpha (1+\alpha) / (1 + q \alpha) & \mbox{ if } q < 1, \\
        \beta \varepsilon \alpha/q & \mbox{ if } q \geq 1,
        \end{array}
    \right.
    \text{ and } \; \partial_{t_0} \textnormal{EU}^{\rm Q} \big(\varepsilon, q, t_0 \big) = - \dfrac{1+\alpha}{1 - t_0},
\end{align*}
which leads to
\begin{align*}
        \dfrac{\partial}{\partial \varepsilon} \Bigg( \dfrac{\partial_{q} {\rm EU^{\rm Q}} \big(\varepsilon, q, t_0 \big)}{\partial_{t_0} {\rm EU^{\rm Q}} \big(\varepsilon, q, t_0 \big) } \Bigg)
        = 
    \left\{
        \begin{array}{ll}
        - \dfrac{\beta \alpha (1 - t_0)}{1 + q \alpha} & \mbox{ if } q < 1, \\
        - \dfrac{\beta \alpha (1 - t_0)}{q (1+\alpha)} & \mbox{ if } q \geq 1.
        \end{array}
    \right.
    \end{align*}
Both quotients are indeed non-positive since the price of the insurance should be at least smaller than the agents' income, which implies $t_0 < 1$, and all other quantities and prices are positive.
\end{proof}

\begin{lemma}\label{lem:constant}
Let $(q,t_0)$ be an admissible mechanism such that $q$ is non-decreasing and $t_0$ is given by $\eqref{eq:TQ_emin_eps}$ for some $c_q \geq 0$. If the function $q$ is constant on some interval contained in $[0,1]$, then the price $t_0$ is also constant on this interval.
\end{lemma}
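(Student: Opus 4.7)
The plan is to show directly that when $q$ is constant on some interval $[a,b] \subset [0,1]$, the derivative $t_0'(\varepsilon)$ vanishes pointwise there, using the specific form given in \eqref{eq:TQ_emin_eps} together with the definition \eqref{eq:Q_0} of $Q_0$. The key observation is that $Q_0$ is designed so that $Q_0'(\varepsilon)$ exactly matches the logarithm that appears in the $\varepsilon$-dependent exponent, and so the two contributions to $t_0'(\varepsilon)$ cancel.

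I would split into cases according to the position of $[a,b]$ relative to $\varepsilon_1$. In the first case, $[a,b] \subset [0,\varepsilon_1)$, write $q(\varepsilon) = \bar q$ on $[a,b]$, so from \eqref{eq:Q_0} we get $Q_0'(\varepsilon) = \ln(1+\alpha \bar q)$ on this interval. Then
\[
t_0(\varepsilon) = 1 - c_q \, \erm^{\beta Q_0(\varepsilon)} (1+\alpha \bar q)^{-\beta \varepsilon},
\]
and a direct differentiation gives
\[
t_0'(\varepsilon) = -c_q \, \erm^{\beta Q_0(\varepsilon)} (1+\alpha \bar q)^{-\beta \varepsilon} \bigl[\beta Q_0'(\varepsilon) - \beta \ln(1+\alpha \bar q)\bigr] = 0.
\]
In the second case, $[a,b] \subset [\varepsilon_1, 1]$, with $q \equiv \bar q$, we instead have $Q_0'(\varepsilon) = \tfrac{\alpha}{1+\alpha}\ln(\bar q)$ on $[a,b]$, and the same calculation on
\[
t_0(\varepsilon) = 1 - c_q \, \erm^{\beta Q_0(\varepsilon)} (1+\alpha)^{-\beta \varepsilon_1}\, \bar q^{-\beta \varepsilon \alpha/(1+\alpha)}
\]
yields $t_0'(\varepsilon) = -c_q \, \erm^{\beta Q_0(\varepsilon)} (1+\alpha)^{-\beta \varepsilon_1}\, \bar q^{-\beta \varepsilon \alpha/(1+\alpha)} \bigl[\beta Q_0'(\varepsilon) - \beta \tfrac{\alpha}{1+\alpha}\ln \bar q\bigr] = 0$.

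The only delicate case is when the interval $[a,b]$ straddles $\varepsilon_1$. By the very definition of $\varepsilon_1$ (the first point at which $q$ reaches $1$) combined with the monotonicity and continuity of $q$, a constant value on $[a,b]$ containing $\varepsilon_1$ forces $\bar q = 1$. One then applies the two calculations above on $[a,\varepsilon_1)$ and on $[\varepsilon_1, b]$ respectively; on the right-hand side, $\bar q = 1$ makes $Q_0$ itself constant and the power $\bar q^{-\beta\varepsilon\alpha/(1+\alpha)}$ equal to $1$, so $t_0$ is trivially constant, while continuity at $\varepsilon_1$ is checked by matching the two one-sided limits (both equal $1 - c_q \erm^{\beta Q_0(\varepsilon_1)}(1+\alpha)^{-\beta\varepsilon_1}$). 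I expect no serious obstacle here; the whole statement is essentially an algebraic cancellation built into the construction of $Q_0$, and the main point to handle carefully is simply the bookkeeping around the breakpoint $\varepsilon_1$.
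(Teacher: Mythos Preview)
Your argument is correct and rests on the same algebraic cancellation as the paper's proof: where you differentiate $t_0$ and observe that $Q_0'(\varepsilon)$ exactly kills the $\varepsilon$-dependent exponent, the paper instead splits the integral defining $Q_0$ at the left endpoint and simplifies directly, but the content is identical. One minor simplification in your straddling case: once you have $\bar q = 1$, the minimality in the definition of $\varepsilon_1$ forces $a \geq \varepsilon_1$, so $[a,\varepsilon_1)$ is actually empty and the case collapses to your second case without any matching of one-sided limits --- this is how the paper dispatches it.
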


\begin{proof}
Let us first assume that $q$ is constant on some interval $[x, y]$, where $y < \varepsilon_1$. For all $\varepsilon$ in this interval, we have in particular $q(\varepsilon) = q(x)$ and thus:
\begin{align*}
        t_0(\varepsilon) = &\ 1 - c_q \big( 1 + \alpha q (x) \big)^{- \beta x - \beta (\varepsilon - x)} \exp \bigg( \beta \int_0^{x} \ln \big( 1 + \alpha q(\epsilon) \big) \drm \epsilon
        + \beta \int_{x}^\varepsilon \ln \big( 1 + \alpha q(x) \big) \drm \epsilon \bigg) \\
        = &\ 1 - c_q \big( 1 + \alpha q (x) \big)^{- \beta x} \exp \bigg( \beta \int_0^{x} \ln \big( 1 + \alpha q(\epsilon) \big) \drm \epsilon \bigg).
    \end{align*}
Therefore, for all $\varepsilon \in [x, y]$, $t_0(\varepsilon) = t_0(x)$, \textit{i.e.}, $t_0$ is constant on this interval. The proof is highly similar for an interval $[x, y]$ such that $x \geq \varepsilon_1$.
Finally, if the interval $[x, y]$ contains $\varepsilon_1$, we necessarily have $q(\varepsilon) = 1$ for all $\varepsilon$ in the interval. By definition of $\varepsilon_1$, we actually have $x \geq \varepsilon_1$, and the problem is reduced to the previous case.
\end{proof}

\begin{proof}[\Cref{thm:IC_SB}]
$(i)$ We first prove that $q$ being non-decreasing with respect to $\varepsilon \in [0,1]$ and $t_0$ satisfying \eqref{eq:TQ_emin_eps} are necessary conditions for the admissible mechanism $(q, t_0)$ to satisfy the IC constraint on $[0,1]$. To prove this, we first fix $\varepsilon \in (0,1)$, such that $q(\varepsilon) \neq 1$, and focus the study on an agent of type $\varepsilon$. From \eqref{eq:exp_utility}, his expected utility if he chooses a contract $(q(\varepsilon'), t_0(\varepsilon'))$, for some $\varepsilon' \in [0,1]$, is as follows:
\begin{align}\label{eq:expected_utility_eps'}
    \textnormal{EU}^{\rm Q} \big(\varepsilon, q(\varepsilon'), t_0(\varepsilon') \big) = &\ \textnormal{EU}^\varnothing (\varepsilon) + (1+\alpha) \ln \big(1 - t_0(\varepsilon') \big) + \beta \varepsilon \widebar U \big( q(\varepsilon') \big).
\end{align}

The mechanism $(q, t_0)$ is incentive compatible if the agent chooses the contract designed for him to maximise his utility.
Therefore, the utility computed above must attain its maximum on $\varepsilon' = \varepsilon$. Since the mechanism is assumed to be regular enough, we can compute the first and second derivatives of the previous utility with respect to $\varepsilon'$. The first-order condition (FOC) states that the first derivative has to be equal to zero for $\varepsilon' = \varepsilon$. Since the derivative with respect to $\varepsilon' \in (0,1)$ of $\textnormal{EU}^{\rm Q} (\varepsilon, q(\varepsilon'), t_0(\varepsilon') )$ is given by:
\begin{align}\label{eq:der_expected_utility_eps'}
    \partial_{\varepsilon'} \textnormal{EU}^{\rm Q} \big(\varepsilon, q(\varepsilon'), t_0(\varepsilon') \big) = &\ - (1+\alpha) \dfrac{\partial_{\varepsilon'} t_0(\varepsilon')}{1 - t_0(\varepsilon')} + (1+ \alpha) \beta \varepsilon \times 
    \left\{
    \arraycolsep=1.4pt\def\arraystretch{1.8}
        \begin{array}{ll}
        \dfrac{\alpha \partial_{\varepsilon'} q(\varepsilon')}{1 + \alpha q(\varepsilon')} & \mbox{ if } q(\varepsilon') < 1, \\
        \dfrac{\alpha \partial_{\varepsilon'} q(\varepsilon')}{(1+ \alpha) q(\varepsilon')} & \mbox{ if } q(\varepsilon') > 1,
        \end{array}
    \right.
\end{align}
the FOC for the agent of type $\varepsilon$ is as follows:
\begin{align}\label{eq:FOC_revelation_principle}
    \partial_{\varepsilon} t_0(\varepsilon) = 
    \left\{
    \arraycolsep=1.4pt\def\arraystretch{2.}
        \begin{array}{ll}
        \beta \varepsilon \dfrac{\alpha \partial_{\varepsilon} q(\varepsilon)}{1 + \alpha q(\varepsilon)} \big( 1 - t_0(\varepsilon) \big) & \mbox{ if } q(\varepsilon) < 1, \\
        \beta \varepsilon \dfrac{\alpha \partial_{\varepsilon} q(\varepsilon)}{(1 + \alpha) q(\varepsilon)} \big( 1 - t_0(\varepsilon) \big) & \mbox{ if } q(\varepsilon) > 1.
        \end{array}
    \right.
\end{align}
Moreover, to check that $\varepsilon' = \varepsilon$ attains a local maximum, the second-order derivative has to be negative for $\varepsilon' = \varepsilon$ (second-order condition -- SOC), which gives:

\begin{align}\label{eq:SOC_revelation_principle}
    0 \geq - \dfrac{\partial^2_{\varepsilon} t_0(\varepsilon)}{1 - t_0(\varepsilon)}
    - \bigg( \dfrac{\partial_{\varepsilon} t_0(\varepsilon)}{1 - t_0(\varepsilon)} \bigg)^2 + \beta \varepsilon \times 
    \left\{
    \arraycolsep=1.4pt\def\arraystretch{2.}
        \begin{array}{ll}
        \dfrac{\alpha \partial^2_{\varepsilon} q(\varepsilon)}{1 + \alpha q(\varepsilon)}
        - \dfrac{\alpha^2 \big( \partial_{\varepsilon} q(\varepsilon) \big)^2}{\big( 1 + \alpha q(\varepsilon) \big)^2} & \mbox{ if } q(\varepsilon) < 1, \\
        \dfrac{\alpha \partial^2_{\varepsilon} q(\varepsilon)}{ (1 + \alpha) q(\varepsilon)}
        - \dfrac{\alpha  \big( \partial_{\varepsilon} q(\varepsilon) \big)^2}{ (1 + \alpha) \big( q(\varepsilon) \big)^2} & \mbox{ if } q(\varepsilon) > 1.
        \end{array}
    \right.
\end{align}
The mechanism $(q,t_0)$ must be revealing for every type of agent, which implies that the previous FOC and SOC have to be true at least for all $\varepsilon \in (0,1)$ such that $q(\varepsilon) \neq 1$.

\medskip

On the one hand, by differentiating \eqref{eq:FOC_revelation_principle}, we prove that $t_0$ should satisfy:
\begin{align*}
    \partial^2_{\varepsilon} t_0(\varepsilon) = 
    \left\{
    \arraycolsep=1.4pt\def\arraystretch{2.}
        \begin{array}{ll}
        \dfrac{\beta \alpha \big( 1 - t_0(\varepsilon) \big)}{\big( 1 + \alpha q(\varepsilon) \big)^2} \Big(
        \big( \partial_{\varepsilon} q(\varepsilon) + \varepsilon \partial^2_{\varepsilon} q(\varepsilon) \big) (1 + \alpha q(\varepsilon))
        - \alpha \varepsilon \big(\partial_{\varepsilon} q(\varepsilon) \big)^2 (1+\beta \varepsilon)
        \Big)
        & \mbox{ if } q(\varepsilon) < 1, \\
        \dfrac{\beta \alpha \big( 1 - t_0(\varepsilon) \big)}{(1 + \alpha) \big( q(\varepsilon) \big)^2} \bigg(
        \big( \partial_{\varepsilon} q(\varepsilon) + \varepsilon \partial^2_{\varepsilon} q(\varepsilon) \big) q(\varepsilon) 
        - \varepsilon \big( \partial_{\varepsilon} q(\varepsilon) \big)^2
        \bigg( 1 + \dfrac{\alpha }{1 + \alpha} \beta \varepsilon \bigg)  
        \bigg) 
        & \mbox{ if } q(\varepsilon) > 1.
        \end{array}
    \right.
\end{align*}
Replacing the first and second derivatives of $t_0$ with their values computed above, we obtain that the SOC \eqref{eq:SOC_revelation_principle} is equivalent in both cases to $\partial_{\varepsilon} q (\varepsilon) \geq 0$. Therefore, $q$ is non-decreasing before attaining $1$ and non-decreasing thereafter. By continuity of the function $q$, it can either cross the constant line equal to $1$ only once or be equal to $1$ over an interval. In both cases, it implies that the function $q$ is non-decreasing on $(0,1)$.
We can thus denote by $[\varepsilon_1, \varepsilon_2]$ the interval on which $q$ is constant equal to 1, with the convention that this interval is reduced to $\{ \varepsilon_1\}$ if there exists only one point where $q$ is equal to $1$, $\varepsilon_1 = \varepsilon_2 = 1$ if $q$ is always strictly less than $1$ and $\varepsilon_1 = \varepsilon_2 = 0$ if $q$ is always strictly greater than $1$.

\medskip

On the other hand, by solving \eqref{eq:FOC_revelation_principle} when $q(\varepsilon) < 1$, \textit{i.e.}, $\varepsilon \in (0, \varepsilon_1)$, we obtain
\begin{align*}
    t_0(\varepsilon)
    = &\ 1 - c_q \big(1 + \alpha q(\varepsilon)\big)^{- \beta \varepsilon}
    \erm^{\beta Q_0 ( \varepsilon )},
\end{align*}
for some constant $c_q \in \R$, using the notation defined in \eqref{eq:Q_0}. This proves the first form in \eqref{eq:TQ_emin_eps}. Moreover, solving the second part of \eqref{eq:FOC_revelation_principle}, \textit{i.e.}, for $\varepsilon \in (\varepsilon_2,1)$, leads to:
\begin{align*}
    t_0(\varepsilon)
    = &\ 1 - \widetilde c_q
    (q(\varepsilon))^{- \beta \varepsilon \alpha/(1 + \alpha) }
     \exp \bigg( \dfrac{\beta \alpha}{1 + \alpha} \int_{\varepsilon_2}^\varepsilon \ln (q(\epsilon))  \drm \epsilon \bigg),
\end{align*}
for some $\widetilde c_q \in \R$. The two previous forms are valid on $[0, \varepsilon_1]$ and $[\varepsilon_2, 1]$ by the assumed continuity of $t_0$. If $\varepsilon_1 = \varepsilon_2 \in (0,1)$, the price is continuous at this point if and only if
\begin{align}\label{eq:cqtilde}
    \widetilde c_q = c_q \big(1 + \alpha \big)^{- \beta \varepsilon_1}
    \erm^{ \beta Q_0(\varepsilon_1)},
\end{align}
and we thus obtain the second form in \eqref{eq:TQ_emin_eps}. With this setting, if $\varepsilon_1 = \varepsilon_2 = 0$, we obtain $\widetilde c_q = c_q$, and the price is given by \eqref{eq:TQ_emin_eps} for all $\varepsilon \in (0,1)$, to within the constant $c_q$. Similar reasoning applies if $\varepsilon_1 = \varepsilon_2 = 1$. It remains to us to deal with the case where $q$ is constant on the interval $[\varepsilon_1, \varepsilon_2]$, not reduced to a singleton. To address this case, let us consider an agent of type $\varepsilon \in (\varepsilon_1, \varepsilon_2)$. For the mechanism to be revealing for him, his expected utility $\textnormal{EU}^{\rm Q} (\varepsilon, q(\varepsilon'), t_0(\varepsilon') )$ must at least reach a local maximum in $\varepsilon' = \varepsilon$. His utility for any $\varepsilon' \in (\varepsilon_1, \varepsilon_2)$ is as follows, since $q$ is constant equal to $1$ on this interval:
\begin{align*}
    \textnormal{EU}^{\rm Q} \big(\varepsilon, q(\varepsilon'), t_0(\varepsilon') \big) = &\ \textnormal{EU}^\varnothing (\varepsilon) + (1+\alpha) \ln \big(1 - t_0(\varepsilon') \big) + \beta \varepsilon (1+\alpha) \ln \big(1 + \alpha \big).
\end{align*}
Therefore, $\varepsilon$ is a maximum point on $(\varepsilon_1, \varepsilon_2)$ if for any $\varepsilon'$ on this interval, $t_0(\varepsilon) \leq t_0(\varepsilon')$. Conversely, the mechanism is revealing for the agent of type $\varepsilon'$ at least if $t_0(\varepsilon) \geq t_0(\varepsilon')$. This naturally implies that $t_0$ is also constant on $(\varepsilon_1, \varepsilon_2)$. In particular, by continuity of the price, we should have $t_0(\varepsilon_1) = t_0(\varepsilon_2)$, which also implies \eqref{eq:cqtilde}. Finally, we obtain for all $\varepsilon \in [\varepsilon_2, 1]$:
\begin{align*}
    t_0(\varepsilon)
    = &\ 1 - c_q \big(1 + \alpha \big)^{- \beta \varepsilon_1}
    (q(\varepsilon))^{- \beta \varepsilon \alpha/(1 + \alpha) }
    \erm^{\beta Q_0(\varepsilon_1)}
     \exp \bigg( \dfrac{\beta \alpha}{1 + \alpha} \int_{\varepsilon_2}^\varepsilon \ln (q(\epsilon))  \drm \epsilon \bigg) \\
     = &\ 1 - c_q \big(1 + \alpha \big)^{- \beta \varepsilon_1}
    (q(\varepsilon))^{- \beta \varepsilon \alpha/(1 + \alpha) }
    \erm^{\beta Q_0(\varepsilon)},
    \end{align*}
where the second equality is implied by the fact that $q$ is constant equal to $1$ on $[\varepsilon_1, \varepsilon_2]$.
Therefore, the form \eqref{eq:TQ_emin_eps} is proven to be true in any case.
Finally, since the mechanism has to be admissible in the sense of \Cref{def:regularity}, we should have $t_0(\varepsilon) < 1$ for all $\varepsilon \in [0,1]$, which implies $c_q > 0$.
We therefore have shown that $q$ being non-decreasing with respect to $\varepsilon \in [0,1]$ and $t_0$ satisfying \eqref{eq:TQ_emin_eps} are \textbf{necessary conditions} for the menu of contracts to satisfy the IC constraint on $[0,1]$.

\medskip

$(ii)$ It remains to us to prove that these conditions are \textbf{sufficient}. To this end, we recall that the expected utility of an agent of type $\varepsilon$ who chooses a contract $(q(\varepsilon'), t_0(\varepsilon'))$ is given by \eqref{eq:expected_utility_eps'}. In particular, its derivative with respect to $\varepsilon'$ for $\varepsilon' \in (0,1)$ such that $q(\varepsilon') \neq 1$ is given by \eqref{eq:der_expected_utility_eps'}.
Since $t_0$ satisfies \eqref{eq:FOC_revelation_principle} in particular in $\varepsilon'$, we obtain
\begin{align*}
    \partial_{\varepsilon'} \textnormal{EU}^{\rm Q} \big(\varepsilon, q(\varepsilon'), t_0(\varepsilon') \big) = &\ 
    \left\{
    \arraycolsep=1.4pt\def\arraystretch{2.}
        \begin{array}{ll}
        (1+\alpha) \dfrac{\beta \alpha \partial_{\varepsilon'} q(\varepsilon')}{1 + \alpha q(\varepsilon')} \big(\varepsilon - \varepsilon' \big)  & \mbox{ if } q(\varepsilon') < 1, \\
        \beta \dfrac{\alpha \partial_{\varepsilon'} q(\varepsilon')}{q(\varepsilon')} \big( \varepsilon - \varepsilon' \big) & \mbox{ if } q(\varepsilon') > 1.
        \end{array}
    \right.
\end{align*}
Moreover, if we consider without loss of generality that $q$ is constant equal to $1$ on some interval $[\varepsilon_1, \varepsilon_2]$ and take $\varepsilon \in (\varepsilon_1, \varepsilon_2)$, then, by \Cref{lem:constant}, for any neighbourhood of $\varepsilon$ contained in $[\varepsilon_1, \varepsilon_2]$, the price given by \eqref{eq:TQ_emin_eps} is also constant. Therefore, the expected utility ${\rm EU}^{\rm Q} (\varepsilon, q(\varepsilon'), t_0(\varepsilon'))$ is, in fact, also differentiable on this neighbourhood, and its derivative is equal to zero. In summary, the following values are obtained for the derivative of the expected utility:
\begin{align*}
    \partial_{\varepsilon'} \textnormal{EU}^{\rm Q} \big(\varepsilon, q(\varepsilon'), t_0(\varepsilon') \big) = &\ 
    \left\{
    \arraycolsep=1.4pt\def\arraystretch{1.2}
        \begin{array}{ll}
        \beta (1+\alpha) \dfrac{\alpha \partial_{\varepsilon'} q(\varepsilon')}{1 + \alpha q(\varepsilon')} \big(\varepsilon - \varepsilon' \big)  & \mbox{ if } 0 < \varepsilon' < \varepsilon_1, \\
        0  & \mbox{ if } \varepsilon' \in (\varepsilon_1, \varepsilon_2), \\
        \beta  \alpha \dfrac{\partial_{\varepsilon'} q(\varepsilon')}{q(\varepsilon')} \big( \varepsilon - \varepsilon' \big) & \mbox{ if } \varepsilon_2 < \varepsilon' < 1.
        \end{array}
    \right.
\end{align*}
We first check that the contract is revealing for the interior types of agents, \textit{i.e.}, where the previous derivative is defined. It suffices to remark that the expected utility of an agent of type $\varepsilon$ is non-decreasing for $\varepsilon' \leq \varepsilon$ and non-increasing after, which proves, by continuity of the utility, that $\varepsilon' = \varepsilon$ is a maximiser.
If the agent's type is $\varepsilon = 0$, his continuous utility is non-increasing with $\varepsilon' \in (0,1)$, and a maximum is attained for $\varepsilon' = 0$. Similar reasoning can be applied if the agent's type is $\varepsilon = 1$, and therefore, the contract is revealing for the extreme types. For $\varepsilon = \varepsilon_1$ (resp. $\varepsilon = \varepsilon_2$), the utility is non-decreasing before $\varepsilon$, constant on $(\varepsilon_1, \varepsilon_2)$, and non-increasing after. Therefore, the (continuous) utility is constant and maximal on the interval $[\varepsilon_1, \varepsilon_2]$; in particular, the maximum is also attained at $\varepsilon_1$ (resp. $\varepsilon_2$). Therefore, the conditions stated in the proposition are \textbf{sufficient} for the mechanism to satisfy the IC constraint for all $\varepsilon \in [0,1]$.
\end{proof}

\Cref{thm:IC_SB} thus provides a characterisation of a mechanism $(q, t_0)$ satisfying the IC constraint for all types of agents. However, the real menu of contracts offered by the principal must be composed of quantities $e_{\rm min}$ and a price $T$ associated with each quantity, independent of the type of agent, which is not observed by the principal. Therefore, in the end, we have to obtain a price $T$ that is only a function of $e_{\rm min}$, not also a function of $\varepsilon$.
Nevertheless, \Cref{lem:constant} states that when the function $q$ is constant, the associated price $t_0$ is necessarily constant as well. Together with the fact that the function $q$ is non-decreasing, this naturally implies that if two different types of agents choose the same quantity, they pay the same price. This result therefore prevents the contract resulting from a revealing mechanism from depending on the type of agent.

\medskip

To precisely define the menu of contracts associated with an admissible revealing mechanism, let us fix an interval $I \subset \R_+$ and define $\widetilde I := \{k \in \R_+ \; \text{s.t.} \; \alpha k \underline \omega w_0/p_e \in I \}$. For a function $f$ that is non decreasing on $[0,1]$, taking values in $\widetilde I$, its generalised inverse for all $k \in \widetilde I$ is defined by:
\begin{align}\label{eq:gen_inverse}
    f^{-1} (k) = \inf \{ \varepsilon \in [0,1] \; \text{ such that } \; f(\varepsilon) = k \}.
\end{align}
The following corollary allows us to characterise a sufficiently smooth admissible menu of revealing contracts. The proof of this result is highly similar to that of \Cref{thm:IC_SB}.

\begin{corollary}\label{cor:contract_emin}
An admissible menu of contracts $(e_{\rm min}, T)$ for $e_{\rm min} \in I$ is associated with an admissible revealing mechanism if and only if there exists a non-decreasing continuous function $q$ with values in $\widetilde I$ and continuous second derivatives except where it is equal to $1$, such that the price $T$ for a quantity $e_{\rm min} = \alpha k \underline\omega w_0/p_e$ is given for some $c_q \geq 0$ by:
\begin{align}\label{eq:TQ_emin}
T(k) = w_0 - c_q w_0 \erm^{\beta Q_0( q^{-1} (k))} \times 
\left\{
\arraycolsep=1.6pt\def\arraystretch{1.3}
    \begin{array}{ll}
    \big( 1 + \alpha k \big)^{- \beta q^{-1} (k)},
    & \mbox{ if } k < 1, \\
    \big( 1 + \alpha \big)^{- \beta q^{-1} (1)} k^{- \beta q^{-1} (k) \alpha/(1+\alpha)},
    & \mbox{ if } k \geq 1
    \end{array}
\right.
\end{align}
for $k \in \widetilde I$ and where $q^{-1}$ is the generalised inverse of $q$, as defined in \eqref{eq:gen_inverse}.
\end{corollary}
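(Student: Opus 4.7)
The plan is to derive this corollary as a direct translation of \Cref{thm:IC_SB}, together with \Cref{lem:constant}, by reparametrising the menu of contracts from the type variable $\varepsilon$ to the normalised quantity $k$ via the generalised inverse $q^{-1}$ introduced in \eqref{eq:gen_inverse}. The two sides of the equivalence simply correspond to reading Theorem \ref{thm:IC_SB} in the two natural parametrisations, so the crux of the argument is ensuring that the change of variable $\varepsilon \mapsto q(\varepsilon)=k$ is consistent when $q$ fails to be strictly increasing.

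For the sufficiency direction, I would start from a given non-decreasing continuous function $q\colon[0,1]\to\widetilde I$ with the prescribed regularity and a price function $T$ of the form \eqref{eq:TQ_emin}. Set $t_0(\varepsilon):=T(\alpha q(\varepsilon)\underline\omega w_0/p_e)/w_0$; this pair $(q,t_0)$ is admissible in the sense of \Cref{def:regularity}, and a direct substitution shows that $t_0$ satisfies \eqref{eq:TQ_emin_eps} with the same constant $c_q$. On any interval where $q$ is strictly increasing this is immediate since $q^{-1}(q(\varepsilon))=\varepsilon$; on any plateau of $q$ (in particular where $q\equiv 1$, which is precisely where the two branches of \eqref{eq:TQ_emin} are glued), \Cref{lem:constant} ensures that the right-hand side of \eqref{eq:TQ_emin_eps} is also constant, so the two expressions agree. \Cref{thm:IC_SB} then yields incentive compatibility, and by construction the menu is associated with $(q,t_0)$ in the sense of \Cref{def:regularity}.

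For the necessity direction, suppose $(e_{\rm min},T)$ is an admissible menu associated with some admissible revealing mechanism $(q,t_0)$. By \Cref{thm:IC_SB}, $q$ is non-decreasing, $q$ inherits the required continuity and piecewise smoothness from \Cref{def:regularity}, and $t_0$ satisfies \eqref{eq:TQ_emin_eps} for some $c_q\geq 0$. For $k\in\widetilde I$, set $\varepsilon_k:=q^{-1}(k)$; then $q(\varepsilon_k)=k$ by continuity, so $T(\alpha k\underline\omega w_0/p_e)=w_0\,t_0(\varepsilon_k)$, and substituting $\varepsilon=\varepsilon_k$ into \eqref{eq:TQ_emin_eps} gives \eqref{eq:TQ_emin} exactly. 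The identification of $\varepsilon_1$ with $q^{-1}(1)$ matches the split at $k=1$ in \eqref{eq:TQ_emin} with the split at $\varepsilon=\varepsilon_1$ in \eqref{eq:TQ_emin_eps}.

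The main subtlety, and the step I expect to require the most care, is making sure that $T$ can legitimately be regarded as a function of the quantity $k$ alone and not of the agent's type: if $q$ fails to be strictly increasing, several values of $\varepsilon$ are mapped to the same $k$ and must therefore be assigned the same price, otherwise the menu is not type-independent as required by \Cref{def:adm_contract_QI}. This is precisely the content of \Cref{lem:constant}: on any plateau of $q$, the price prescribed by \eqref{eq:TQ_emin_eps} is also constant, so selecting the infimum in the definition of $q^{-1}$ introduces no ambiguity in $T(k)$. Once this well-definedness is established, the rest of the proof is essentially bookkeeping, the non-trivial content already being carried by \Cref{thm:IC_SB} and \Cref{lem:constant}.
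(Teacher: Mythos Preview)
Your proposal is correct and, for the necessity direction, matches the paper's argument essentially word for word: apply \Cref{thm:IC_SB} to obtain that $q$ is non-decreasing and $t_0$ has the form \eqref{eq:TQ_emin_eps}, invoke \Cref{lem:constant} to guarantee that $T$ is well defined as a function of $k$ alone, and then substitute $\varepsilon=q^{-1}(k)$.

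For the sufficiency direction, however, you take a shorter route than the paper. You define $t_0(\varepsilon):=T(\alpha q(\varepsilon)\underline\omega w_0/p_e)/w_0$, check by substitution (using \Cref{lem:constant} on plateaus) that this $t_0$ coincides with \eqref{eq:TQ_emin_eps}, and then simply invoke \Cref{thm:IC_SB}. The paper instead carries out a direct verification in the quantity variable: it computes $\partial_k t_0(k)$ and $\partial^2_k t_0(k)$ from \eqref{eq:TQ_emin}, derives the first- and second-order conditions for the agent's problem as a function of $k$, and shows that the FOC reduces to $k=q(\varepsilon)$ and the SOC to $\partial_k q^{-1}(k)\geq 0$. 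Only after this does the paper recover $t_0$ and appeal to \Cref{thm:IC_SB}. Your approach is more economical and logically sufficient, since \Cref{thm:IC_SB} already establishes global incentive compatibility once \eqref{eq:TQ_emin_eps} is verified; the paper's FOC/SOC computation in $k$ is an independent, more hands-on check that the concrete menu, viewed as a price schedule over quantities, indeed induces the agent of type $\varepsilon$ to select $k=q(\varepsilon)$. The trade-off is that your argument leans entirely on the machinery of \Cref{thm:IC_SB}, whereas the paper's direct calculation makes the revelation property of the menu transparent without unwinding the proof of that theorem.
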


\begin{proof}
$(i)$ To prove that it is a necessary condition, let us fix an admissible menu of contracts $(e_{\rm min}, T)$ and an associated admissible revealing mechanism $(q,t_0)$. Since the mechanism $(q,t_0)$ is admissible and satisfies the IC constraint, by \Cref{thm:IC_SB}, we obtain that $q$ is non-decreasing, and the price function $t_0$ is given by \eqref{eq:TQ_emin_eps} with a constant $ c_q > 0$.
Moreover, by the previous discussion on admissible contracts, $T$ should be independent of $\varepsilon$ and thus constant when $q$ is constant, which is true by \Cref{lem:constant}. Hence, we can write the price $t_0$ given by \eqref{eq:TQ_emin_eps} in $\varepsilon = q^{-1} (k)$, where $k := p_e e_{\rm min} / (\alpha \underline\omega w_0)$ and $q^{-1}$ is the generalised inverse of $q$. Moreover, noticing that $\varepsilon < \varepsilon_1$ is equivalent to $k < q^{-1}(\varepsilon_1) = 1$, and conversely, if $\varepsilon \geq \varepsilon_1$ then $k \geq 1$, we obtain that the price of a quantity $e_{\rm min} := \alpha k \underline\omega w_0$ is given by $T(k) = w_0 t_0 (q^{-1} (k))$, which is \eqref{eq:TQ_emin}. 

\medskip

$(ii)$ To prove the equivalence, let us consider an admissible menu of contracts $(e_{\rm min}, T)$, where $T$ is given by \eqref{eq:TQ_emin}, and assume that the function $q$ has the right properties. First, we can show that given this menu of contracts, the optimal quantity chosen by an agent of type $\varepsilon$ is $e_{\rm min} = \alpha k \underline\omega w_0/p_e$, where $k = q(\varepsilon)$. Indeed, by computing the derivative of his utility given by \eqref{eq:exp_utility} with respect to the normalised quantity $k$,
we obtain the following FOC:
\begin{align*}
0 = -  \dfrac{\partial_k t_0 (k)}{1-t_0 (k)} + \beta \varepsilon \times
    \left\{
    \arraycolsep=1.4pt\def\arraystretch{1.5}
        \begin{array}{ll}
        \dfrac{\alpha}{1 + \alpha k} & \mbox{ if } k < 1, \\
        \dfrac{\alpha}{(1+\alpha)k}  & \mbox{ if } k > 1.
        \end{array}
    \right.
\end{align*}
Since the derivative of $t_0$ with respect to $k$ satisfies:
\begin{align*}
\partial_k t_0 (k) = 
\left\{
\arraycolsep=1.4pt\def\arraystretch{2.}
    \begin{array}{ll}
    \dfrac{\beta  \alpha q^{-1}(k) }{1 + \alpha k} \big( 1-t_0(k) \big) & \mbox{ if } k < 1, \\
    \dfrac{\beta  \alpha q^{-1}(k)}{(1+\alpha) k} \big( 1-t_0(k) \big)  & \mbox{ if } k > 1,
    \end{array}
\right.
\end{align*}
the previous FOC is equivalent to $k = q(\varepsilon)$. By continuity of $q$, the result is extendable to $k=1$. It remains to check the following SOC:
\begin{align*}
0 \geq -  \dfrac{\partial^2_k t_0 (k) \big( 1-t_0 (k) \big) + \big(\partial_k t_0 (k)\big)^2 }{\big( 1-t_0 (k) \big)^2} - \beta \varepsilon \times
    \left\{
    \arraycolsep=1.4pt\def\arraystretch{2.}
        \begin{array}{ll}
        \dfrac{\alpha^2}{(1 + \alpha k)^2} & \mbox{ if } k < 1, \\
        \dfrac{\alpha}{(1+\alpha)k^2}  & \mbox{ if } k > 1.
        \end{array}
    \right.
\end{align*}
The second-order derivative of $T$ satisfies:
\begin{align*}
\partial^2_k t_0 (k) = 
\left\{
\arraycolsep=1.4pt\def\arraystretch{2.}
    \begin{array}{ll}
    \dfrac{\beta  \alpha \big( 1-t_0(k) \big)}{(1 + \alpha k)^2} \Big( \partial_k q^{-1}(k) (1 + \alpha k) - \alpha q^{-1}(k)
    - \beta  \alpha \big(q^{-1}(k)\big)^2 \Big) & \mbox{ if } k < 1, \\
    \dfrac{\beta  \alpha \big( 1-t_0(k) \big) }{(1+\alpha)^2 k^2} \Big(
    \big( \partial_k q^{-1}(k) k - q^{-1}(k) \big) (1+\alpha)
    - \beta \alpha  \big(q^{-1}(k) \big)^2
    \Big)  & \mbox{ if } k > 1.
    \end{array}
\right.
\end{align*}
Thus, the SOC is equivalent to:
\begin{align*}
\left\{
\arraycolsep=1.4pt\def\arraystretch{1.5}
\begin{array}{ll}
    \partial_k q^{-1}(k) (1 + \alpha k) + \alpha \big( \varepsilon  - q^{-1}(k) \big)  \geq 0  & \mbox{ if } k < 1, \\
    \partial_k q^{-1}(k) k - q^{-1}(k) + \varepsilon \geq 0  & \mbox{ if } k > 1.
\end{array}
\right.
\end{align*}
In $k = q(\varepsilon) \neq 1$, the SOC becomes in both cases $\partial_k q^{-1}(k) \geq 0$, which is true since $q$ is non-decreasing. By continuity of the utility, this result is also true for $k = 1$. Therefore, an agent of type $\varepsilon$ chooses the quantity $e_{\rm min} = \alpha q(\varepsilon) \underline\omega w_0/p_e$, which is an available quantity because $q$ takes values in $\widetilde I$ and thus $e_{\rm min} \in I$.
By computing the function $T(k)$ for $k = q(\varepsilon)$ and dividing it by $w_0$, we recover the function $t_0$ defined by \eqref{eq:TQ_emin_eps}, which associates with any $\varepsilon$ the price $w_0 t_0(\varepsilon)$ of the normalised quantity $k = q(\varepsilon)$.
Moreover, since the mechanism $(q,t_0)$ satisfies the assumptions to be admissible in the sense of \Cref{def:regularity}, by \Cref{thm:IC_SB}, the mechanism associated with the menu $(e_{\rm min}, T)$ is admissible and satisfies the IC constraint.
\end{proof}

\subsection{...to select the agents}\label{app:tech_proof_participation}

Given a mechanism $(q, t_0)$, we can write the informational rent of an agent of type $\varepsilon$ as a function of $\varepsilon$:
\begin{align*}
    \Delta \textnormal{EU}^{\rm Q} (\varepsilon) = &\ (1+\alpha) \ln \big(1 - t_0(\varepsilon) \big) + \beta \varepsilon \widebar U \big( q(\varepsilon) \big).
  \end{align*}
If $(q, t_0)$ is a revealing mechanism, we can use the FOC \eqref{eq:FOC_revelation_principle} to compute the derivative of the previous quantity:
\begin{align*}
    \partial_{\varepsilon} \Delta \textnormal{EU}^{\rm Q} (\varepsilon) = &\ \beta \widebar U \big( q (\varepsilon) \big) =
     \left\{
    \arraycolsep=1.4pt\def\arraystretch{1.5}
        \begin{array}{ll}
        \beta (1+\alpha) \ln \big(1 + \alpha q(\varepsilon) \big)
        & \mbox{ if } q(\varepsilon) < 1, \\
        \beta \alpha \ln \big( q(\varepsilon) \big) 
        + \beta (1+\alpha) \ln ( 1+ \alpha)
        & \mbox{ if } q(\varepsilon) > 1.
        \end{array}
    \right.
\end{align*}
This derivative is non-negative in both cases and implies that the information rent is non-decreasing. Therefore, if there exists $\underline \varepsilon \in [0,1]$ such that $\Delta \textnormal{EU}^{\rm Q} (\underline \varepsilon) \geq 0$, then for all $\varepsilon \in [\underline \varepsilon, 1]$, $\Delta \textnormal{EU}^{\rm Q} (\varepsilon) \geq 0$, which means that the participation constraint of agents of type $\varepsilon \in [\underline \varepsilon, 1]$ is satisfied. A more precise result is established in \Cref{prop:PC_for_RevContracts}, and its proof is reported below.

\begin{proof}[\Cref{prop:PC_for_RevContracts}]
We consider an admissible and incentive compatible mechanism $(q,t_0)$.
Applying \Cref{thm:IC_SB}, the price $t_0$ satisfies \eqref{eq:TQ_emin_eps}.
In the one hand, if $\varepsilon \in [0,1]$ is such that $q(\varepsilon) < 1$, \textit{i.e.}, $\varepsilon < \varepsilon_1$, the participation constraint of the agent of type $\varepsilon$ becomes:
\begin{align*}
    c_q \geq \exp \bigg(- \beta \int_0^\varepsilon \ln \big( 1 + \alpha q(\epsilon) \big) \drm \epsilon \bigg) = \underline c(\varepsilon).
\end{align*}
On the other hand, if $\varepsilon \geq \varepsilon_1$, the participation constraint is equivalent to $c_q 
    \geq (1+ \alpha)^{- \beta (\varepsilon - \varepsilon_1)} 
    \erm^{\beta Q_0(\varepsilon)}
      = \underline c(\varepsilon)$.
The participation constraint for an agent of type $\varepsilon \in [0,1]$ is thus equivalent in both cases to $c_q \geq \underline c (\varepsilon)$. We can then compute the derivative of $\underline c$ with respect to $\varepsilon$:
\begin{align*}
\underline c' (\varepsilon)
= - \beta \underline c(\varepsilon) \times 
\left\{
\arraycolsep=1.5pt\def\arraystretch{1.3}
    \begin{array}{ll}
    \ln ( 1 + \alpha q(\varepsilon)) 
    & \mbox{ if } \varepsilon < \varepsilon_1, \\
    \ln (1+ \alpha)
    + \dfrac{\alpha}{1+\alpha} \ln \big( q(\varepsilon) \big)
    & \mbox{ if } \varepsilon > \varepsilon_1.
    \end{array}
\right.
\end{align*}
Since $q(\varepsilon) \geq 0$ for all $\varepsilon \in [0,\varepsilon_1)$ and $q(\varepsilon) \geq 1$ for all $\varepsilon \in (\varepsilon_1,1]$, we obtain that the derivative of $\underline c$ is negative in both cases. Since the function $\underline c$ is continuous on $[0,1]$ (in particular in $\varepsilon_1$), the function is non-increasing on $[0,1]$. Moreover, by definition of $\underline \varepsilon$ and continuity of $\underline c$, $\underline c (\underline \varepsilon) = c_q$. Thus, for any $\varepsilon \in [\underline \varepsilon, 1]$, we have $\underline c(\varepsilon) \leq \underline c(\underline \varepsilon) = c_q$, and thus the participation constraint of the agent of type $\varepsilon$ is satisfied. Conversely, for any $\varepsilon \in [0, \underline \varepsilon)$, we have $\underline c(\varepsilon) > \underline c(\underline \varepsilon) = c_q$, which means that the participation constraint is not satisfied.
\end{proof}

\Cref{prop:PC_for_RevContracts} thus states that only the agents of a sufficiently risky type are selected by the principal. This result is entirely implied by the fact that the reservation utility of an agent depends on his type and only occurs in principal-agent problems with countervailing incentives. Indeed, the following remark shows that if a constant reservation utility had been chosen, the selected agents would have been those of a sufficiently low risk type. Nevertheless, in our framework, it makes little sense to consider that the reservation utility is constant, regardless of the agent's type.

\begin{remark}\label{rk:constant_reservation}
If the agents' reservation utility is assumed to be a constant $R_0$, the participation constraint for an agent of type $\varepsilon$ becomes
$ \textnormal{EU}^{\rm Q} (\varepsilon) \geq R_0$, where $\textnormal{EU}^{\rm Q} (\varepsilon)$ is defined by \eqref{eq:exp_utility} for a revealing contract $(q(\varepsilon), t_0(\varepsilon))$.
By computing the derivative of $\textnormal{EU}^{\rm Q} (\varepsilon)$ with respect to $\varepsilon$ for a menu of revealing contracts, using \textnormal{FOC \eqref{eq:FOC_revelation_principle}}, we obtain:
\begin{align*}
\partial_{\varepsilon} \textnormal{EU}^{\rm Q} (\varepsilon) 
= &\ (1+\alpha) \beta \ln \big( \underline \omega / \widebar \omega \big) + \beta \widebar U \big( q(\varepsilon) \big).
\end{align*}
Under the assumption\footnote{This is the case in the application considered throughout this paper, since $\alpha = 0.08$, $\underline\omega = 0.4$ and $\widebar\omega= 1$.} that $(1+\alpha) \underline\omega \leq \widebar\omega$, the information rent $\textnormal{EU}^{\rm Q} (\varepsilon) - R_0$ is decreasing for all $\varepsilon \in [0,1]$ such that $q(\varepsilon) \in \big(1, ( \widebar\omega/(\underline\omega ( 1+ \alpha)))^{(1+\alpha)/\alpha} \big]$. Thus, in this case, if there exists $\widebar \varepsilon \in [0,1]$ such that $ \textnormal{EU}^{\rm Q} (\widebar \varepsilon) \geq R_0$, then the participation constraint of agents of type $\varepsilon \in [0, \widebar \varepsilon]$ is satisfied.
\end{remark}

\subsection{...to solve the principal's problem}\label{app:tech_proof_principal}

\begin{corollary}\label{cor:pb_principal_SB}
Let $\underline \varepsilon \in [0,1]$ and $\underline q \in \R_+$. If there is a solution $Q \in \Qc(\underline \varepsilon)$ to \textnormal{ODE \eqref{eq:ODE}}, the principal's profit defined by \eqref{eq:principal_pb_simplify} is equal to:
\begin{align*}
    \Pi (\underline \varepsilon) = &\ w_0 (1 - \underline \varepsilon) + \dfrac{1}{2} \underline \omega w_0 \big( ( \varepsilon_1 \vee \underline \varepsilon)^2 - \underline \varepsilon^2 \big) 
    - w_0 F_1(Q) - w_0 F_2(Q),
\end{align*}
where $F_1$ and $F_2$ are, respectively, defined as follows:
\begin{subequations}\label{eq:F}
\begin{align}
F_1(Q) &:= \int_{\underline \varepsilon}^{\varepsilon_1 \vee \underline \varepsilon} \Big(
    \erm^{\beta (Q(\varepsilon) - \varepsilon Q'(\varepsilon))}
    + \varepsilon \underline\omega \erm^{Q' ( \varepsilon )} 
    \Big) \drm \varepsilon \label{eq:F_1} \\
F_2 (Q) &:= \int_{\varepsilon_1 \vee \underline \varepsilon}^1 \Big(
    \big( 1 + \alpha \big)^{- \beta (\varepsilon_1 \vee \underline \varepsilon)} 
    \erm^{\beta (Q(\varepsilon)- \varepsilon  Q' ( \varepsilon ))} 
    + \varepsilon \alpha \erm^{\frac{1+\alpha}{\alpha} Q' ( \varepsilon )} \underline\omega \Big) \drm \varepsilon. \label{eq:F_2}
\end{align}
\end{subequations}
\end{corollary}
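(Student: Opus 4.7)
The plan is to view \Cref{cor:pb_principal_SB} as a direct substitution identity: once \Cref{thm:sol_principal_ODE} provides the explicit form of the optimal admissible mechanism $(q^\star,t_0^\star)$ as a functional of the ODE solution $Q$, the profit $\Pi(\underline\varepsilon)$ is obtained by integrating the per-type profit $\pi(\varepsilon)=w_0 t_0^\star(\varepsilon)-\varepsilon\alpha q^\star(\varepsilon)\underline\omega w_0$ over $[\underline\varepsilon,1]$, splitting at the threshold $\varepsilon_1\vee\underline\varepsilon$ and carefully collecting terms. There is no optimisation left to perform here: \Cref{thm:sol_principal_ODE} already states that $(q^\star,t_0^\star)$ is the optimal admissible mechanism in $\Cc^{\rm Q}(\underline\varepsilon)$, so $\Pi(\underline\varepsilon)=\int_{\underline\varepsilon}^1\pi^\star(\varepsilon)\drm\varepsilon$ with $\pi^\star$ evaluated along $(q^\star,t_0^\star)$.

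The first step is to substitute the two branches given in \Cref{thm:sol_principal_ODE}. On $[\underline\varepsilon,\varepsilon_1\vee\underline\varepsilon)$, plugging $q^\star(\varepsilon)=(\erm^{Q'(\varepsilon)}-1)/\alpha$ and $t_0^\star(\varepsilon)=1-\erm^{\beta(Q(\varepsilon)-\varepsilon Q'(\varepsilon))}$ yields
\[
\pi^\star(\varepsilon)=w_0-w_0\erm^{\beta(Q(\varepsilon)-\varepsilon Q'(\varepsilon))}+\varepsilon\underline\omega w_0-\varepsilon\underline\omega w_0\erm^{Q'(\varepsilon)}.
\]
On $[\varepsilon_1\vee\underline\varepsilon,1]$, substituting $q^\star(\varepsilon)=\erm^{\frac{1+\alpha}{\alpha}Q'(\varepsilon)}$ and the corresponding $t_0^\star(\varepsilon)$ gives
\[
\pi^\star(\varepsilon)=w_0-w_0(1+\alpha)^{-\beta(\varepsilon_1\vee\underline\varepsilon)}\erm^{\beta(Q(\varepsilon)-\varepsilon Q'(\varepsilon))}-\varepsilon\alpha\underline\omega w_0\erm^{\frac{1+\alpha}{\alpha}Q'(\varepsilon)}.
\]

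The second step is to regroup. Integrating $w_0$ over the whole of $[\underline\varepsilon,1]$ produces the term $w_0(1-\underline\varepsilon)$. The $+\varepsilon\underline\omega w_0$ term, which appears only on the first piece $[\underline\varepsilon,\varepsilon_1\vee\underline\varepsilon)$, integrates to $\tfrac{1}{2}\underline\omega w_0\big((\varepsilon_1\vee\underline\varepsilon)^2-\underline\varepsilon^2\big)$. The remaining two negative terms on the first piece combine into exactly $-w_0 F_1(Q)$ by the definition \eqref{eq:F_1}, and the two negative terms on the second piece combine into $-w_0 F_2(Q)$ by \eqref{eq:F_2}. Summing these four contributions reproduces the announced formula.

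The only subtlety is that the splitting is well posed: the integrals defining $F_1$ and $F_2$ are degenerate (over an empty interval) exactly in the borderline cases $\varepsilon_1\leq\underline\varepsilon$ or $\varepsilon_1\geq 1$, and in those regimes one of the two branches of the optimal mechanism is never used, so the substitution is still consistent. Because $Q\in\Qc(\underline\varepsilon)$ is of class $\Cc^3$ on each subinterval and satisfies $Q'(\varepsilon_1^-)=\ln(1+\alpha)$, $Q'(\varepsilon_1^+)=0$, one verifies continuity of $\pi^\star$ at $\varepsilon_1\vee\underline\varepsilon$ (both branches equal $w_0-w_0\erm^{\beta Q(\varepsilon_1)-\beta\varepsilon_1\ln(1+\alpha)}-\varepsilon_1\underline\omega w_0$ from the left and $w_0-w_0(1+\alpha)^{-\beta\varepsilon_1}\erm^{\beta Q(\varepsilon_1)}$ from the right, and these coincide), so there are no boundary contributions to worry about. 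No step of the argument is an obstacle; this is essentially a bookkeeping computation whose only purpose is to package the per-type profit \eqref{eq:principal_pb_simplify} in a form suitable for the numerical optimisation over $\underline\varepsilon$ and $\underline q=Q'(\underline\varepsilon)$ described just after \Cref{rk:ODE_necessary}.
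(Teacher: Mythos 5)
Your proposal is correct and is essentially the computation the paper performs: the paper's joint proof of \Cref{thm:sol_principal_ODE} and \Cref{cor:pb_principal_SB} rewrites the per-type profit in terms of $Q$ via exactly the same substitutions $q = (\erm^{Q'}-1)/\alpha$ (resp.\ $\erm^{\frac{1+\alpha}{\alpha}Q'}$) and the price formula, arriving at the identical decomposition into $w_0(1-\underline\varepsilon)$, the quadratic term, and $F_1$, $F_2$ (the paper's equation \eqref{eq:profit_moche}), the only difference being that the paper derives this expression for a general mechanism in $\Cc^{\rm Q}(\underline\varepsilon)$ before optimising while you substitute the optimal mechanism afterwards. (Your displayed one-sided values of $\pi^\star$ at $\varepsilon_1$ both drop the $-\varepsilon_1\alpha\underline\omega w_0$ contribution, but continuity does hold there and is in any case irrelevant to the value of the integral.)
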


\begin{proof}[\Cref{thm:sol_principal_ODE} and \Cref{cor:pb_principal_SB}]

Let us fix a mechanism $(q,t_0) \in \Cc^{\rm Q}(\underline \varepsilon)$. This mechanism satisfies the assumption of \Cref{thm:IC_SB}, and the price $t_0$ is therefore given by \eqref{eq:TQ_emin_eps}. Moreover, since this mechanism is assumed to be in $\Cc^{\rm Q}(\underline \varepsilon)$, the participation constraint has to be satisfied only for all $\varepsilon \in [\underline \varepsilon, 1]$, which implies by \Cref{prop:PC_for_RevContracts} that the constant $c_q$ in the price is given by $c_q = \underline c (\underline \varepsilon)$. We thus obtain that, if $\underline \varepsilon \in [0, \varepsilon_1)$,
the price is given by:
\begin{align*}
t_0(\varepsilon) = 1 - \erm^{\beta (Q_0(\varepsilon) - Q_0(\underline \varepsilon))} \times 
\left\{
\arraycolsep=1.6pt\def\arraystretch{1.4}
    \begin{array}{ll}
    \big( 1 + \alpha q (\varepsilon) \big)^{- \beta \varepsilon},
    & \mbox{ if } \varepsilon \in [\underline \varepsilon, \varepsilon_1), \\
    \big( 1 + \alpha \big)^{- \beta \varepsilon_1} \big( q (\varepsilon) \big)^{- \beta \varepsilon \alpha/(1+\alpha)},
    & \mbox{ if } \varepsilon \in [\varepsilon_1, 1].
    \end{array}
\right.
\text{ for all } \; \varepsilon \in [\underline \varepsilon, 1].
\end{align*}
Similarly, if $\underline \varepsilon \in [\varepsilon_1, 1]$, the price is given by:
\begin{align*}
t_0(\varepsilon) = 1 - \erm^{\beta (Q_0(\varepsilon) - Q_0(\underline \varepsilon))}
(1+ \alpha)^{- \beta \underline \varepsilon}  \big( q (\varepsilon) \big)^{- \beta \varepsilon \alpha/(1+\alpha)}, \; \text{ for all } \; \varepsilon \in [\underline \varepsilon, 1].
\end{align*}
To reconcile the two cases, we denote by $Q$ the following function for all $\varepsilon \in [\underline \varepsilon, 1]$:
\begin{align}
Q ( \varepsilon )
:= 
\left\{
\arraycolsep=1.5pt\def\arraystretch{1.8}
    \begin{array}{ll}
    \displaystyle\int_{\underline \varepsilon}^\varepsilon \ln \big( 1 + \alpha q(\epsilon) \big) \drm \epsilon
    & \mbox{ if } \varepsilon \in [\underline \varepsilon, \varepsilon_1 \vee \underline \varepsilon) \\
    \displaystyle\int_{\underline \varepsilon}^{\varepsilon_1 \vee \underline \varepsilon} \ln \big( 1 + \alpha q(\epsilon) \big) \drm \epsilon
    + \dfrac{\alpha}{1+\alpha} \displaystyle \int_{\varepsilon_1 \vee \underline \varepsilon}^\varepsilon \ln \big( q(\epsilon) \big) \drm \epsilon
    & \mbox{ if } \varepsilon \in [\varepsilon_1 \vee \underline \varepsilon, 1].
    \end{array}
\right.
\end{align}
Since $(q,t_0)$ is an admissible revealing mechanism, $q$ is continuous on $[0,1]$ and $\Cc^2$ on $(0,1)$ except where it is equal to $1$, and by \Cref{thm:IC_SB}, $q$ is a non-decreasing function. This naturally implies that the function $Q$ satisfies the right properties to be in $\Qc(\underline \varepsilon)$. Moreover, thanks to the definition of the function $Q$, we can write $q$ as a function of $Q'$:
\begin{align*}
q(\varepsilon) = 
\left\{
\arraycolsep=1.5pt\def\arraystretch{1.4}
    \begin{array}{ll}
    \dfrac{1}{\alpha} \big( \erm^{Q' ( \varepsilon )} -1 \big)
    & \mbox{ if } \varepsilon \in [\underline \varepsilon,  \varepsilon_1 \vee \underline \varepsilon) \\
    \erm^{\frac{1+\alpha}{\alpha} Q' ( \varepsilon )}
    & \mbox{ if } \varepsilon \in [\varepsilon_1 \vee \underline \varepsilon, 1].
    \end{array}
\right.
\end{align*}
Therefore, the price $t_0$ can be written as follows for all $\varepsilon \in [\underline \varepsilon, 1]$:
\begin{align*}
t_0(\varepsilon) = 
\left\{
\arraycolsep=1.4pt\def\arraystretch{1.2}
\begin{array}{ll} 
    1 - \erm^{\beta (Q(\varepsilon) - \varepsilon Q'(\varepsilon))}
    & \mbox{ if } \varepsilon \in [\underline \varepsilon, \varepsilon_1 \vee \underline \varepsilon), \\
    1 - \big( 1 + \alpha \big)^{- \beta (\varepsilon_1 \vee \underline \varepsilon)} 
    \erm^{\beta (Q(\varepsilon)- \varepsilon  Q' ( \varepsilon ))} 
    & \mbox{ if } \varepsilon \in [\varepsilon_1 \vee \underline \varepsilon, 1].
\end{array}
\right.
\end{align*}
Moreover, optimising on admissible revealing mechanisms $(q,t_0) \in \Cc^{\rm Q}(\underline \varepsilon)$ is thus equivalent to optimising on $Q \in \Qc(\underline \varepsilon)$, and the principal's problem for $\underline \varepsilon \in [0,1]$ fixed is thus given by:
\begin{align}\label{eq:profit_moche}
    \Pi (\underline \varepsilon) 
    = &\ w_0(1 - \underline \varepsilon) + \dfrac{1}{2} \underline\omega w_0 \big( ( \varepsilon_1 \vee \underline \varepsilon)^2 - \underline \varepsilon^2 \big) 
    - w_0 \inf_{Q \in \Qc(\underline \varepsilon)} \Bigg\{ 
    \int_{\underline \varepsilon}^{\varepsilon_1 \vee \underline \varepsilon} \Big(
    \erm^{\beta (Q(\varepsilon) - \varepsilon Q'(\varepsilon))} 
    + \varepsilon \underline\omega \erm^{Q' ( \varepsilon )} 
    \Big) \drm \varepsilon \nonumber \\
    &+ w_0 \int_{\varepsilon_1 \vee \underline \varepsilon}^1 \Big(
    \big( 1 + \alpha \big)^{- \beta (\varepsilon_1 \vee \underline \varepsilon)} 
    \erm^{\beta (Q(\varepsilon)- \varepsilon  Q' ( \varepsilon ))} 
    + \varepsilon \alpha \erm^{\frac{1+\alpha}{\alpha} Q' ( \varepsilon )} \underline\omega \Big) \drm \varepsilon 
    \Bigg\}.
\end{align}
This problem is relatively standard in the field of calculus of variation. Given the form of the previous optimisation, we study the optimal function $Q$ separately on $(\underline \varepsilon, \varepsilon_1 \vee \underline \varepsilon)$ and on $(\varepsilon_1 \vee \underline \varepsilon,1)$.

\medskip

With this in mind, we first study the problem on $(\underline \varepsilon, \varepsilon_1 \vee \underline \varepsilon)$.
Let $R$ be an arbitrary function that has at least one derivative and vanishes at the endpoints $\underline \varepsilon$ and $\varepsilon_1 \vee \underline \varepsilon$. For any $\eta \in \R$, we denote $g_1(\eta) := F_1(Q + \eta R)$, where $F_1$ is defined by \eqref{eq:F_1}. We can compute the derivative of $g_1$ with respect to $\eta$:
\begin{align*}
    g_1 '(\eta) = \int_{\underline \varepsilon}^{\varepsilon_1 \vee \underline \varepsilon} \Big(
    \beta \big( R(\varepsilon)  - \varepsilon R'(\varepsilon) \big)
    \erm^{\beta (Q(\varepsilon) + \eta R(\varepsilon) - \varepsilon Q'(\varepsilon) - \varepsilon \eta R'(\varepsilon))} 
    + \varepsilon \underline \omega R'(\varepsilon) \erm^{Q'(\varepsilon) + \eta R'(\varepsilon)} 
    \Big) \drm \varepsilon.    
\end{align*}
The Gâteaux differential of $F_1$ with respect to $Q$ in the direction $R$ denoted by ${\rm D} F_1(Q) (R)$ is given by $g_1 '(0)$:
\begin{align*}
    {\rm D} F_1(Q) (R) &= \int_{\underline \varepsilon}^{\varepsilon_1 \vee \underline \varepsilon} \Big(
    \beta \big( R(\varepsilon) - \varepsilon R'(\varepsilon) \big)
    \erm^{\beta (Q(\varepsilon) - \varepsilon Q'(\varepsilon))} 
    + \varepsilon \underline \omega R'(\varepsilon) \erm^{Q'(\varepsilon)} 
    \Big) \drm \varepsilon.
  \end{align*}
By computing an integration by parts and since $R(\underline \varepsilon) = R(\varepsilon_1 \vee \underline \varepsilon) = 0$ by assumption, we obtain:
\begin{align*}
    {\rm D} F_1(Q) (R)
    = &- \int_{\underline \varepsilon}^{\varepsilon_1 \vee \underline \varepsilon} R(\varepsilon)
    \Big(  \beta \big(\beta \varepsilon^2 Q''(\varepsilon) - 2 \big) \erm^{\beta (Q(\varepsilon) - \varepsilon Q'(\varepsilon))}
    + \underline\omega \big( 1 + \varepsilon Q''( \varepsilon ) \big)
    \erm^{Q'( \varepsilon )} \Big) \drm \varepsilon.
\end{align*}
Therefore, the Euler-Lagrange equation associated with the optimisation problem on $[\underline \varepsilon, \varepsilon_1 \vee \underline \varepsilon]$ is equivalent to the following non-linear second-order ODE:
\begin{align}\label{eq:ODE_1}
    0 = \underline\omega \big( 1 + \varepsilon Q''( \varepsilon ) \big) \erm^{Q'( \varepsilon )}
    +  \beta \big( \beta  \varepsilon^2 Q''(\varepsilon) - 2 \big)
    \erm^{\beta (Q(\varepsilon) - \varepsilon Q'(\varepsilon))}.
\end{align}
Moreover, we can compute the second derivative of $g_1$ with respect to $\eta$:
\begin{align*}
    g_1 ''(\eta) = \int_{\underline \varepsilon}^{\varepsilon_1 \vee \underline \varepsilon} \Big(
    \beta^2 \big( R(\varepsilon)  - \varepsilon R'(\varepsilon) \big)^2
    \erm^{\beta (Q(\varepsilon) + \eta R(\varepsilon) - \varepsilon Q'(\varepsilon) - \varepsilon \eta R'(\varepsilon))} 
    + \varepsilon \underline \omega \big(R'(\varepsilon)\big)^2 \erm^{Q'(\varepsilon) + \eta R'(\varepsilon)} 
    \Big) \drm \varepsilon.    
\end{align*}
This second derivative is therefore positive for any $\eta \in \R$ and implies that $F_1$ attains a minimum for $Q$ the solution on $[\underline \varepsilon, \varepsilon_1 \vee \underline \varepsilon]$ of ODE \eqref{eq:ODE_1}, if it exists.

\medskip

Similarly, to study the problem on $(\varepsilon_1 \vee \underline \varepsilon, 1)$, we consider $F_2$ defined by \eqref{eq:F_2}. Applying the same reasoning than for $F_1$, 
we obtain that the Euler-Lagrange equation associated with the optimisation problem on $[\underline \varepsilon, \varepsilon_1 \vee \underline \varepsilon]$ is equivalent to the following non-linear first-order ODE:
\begin{align}\label{eq:ODE_2}
    0 = \underline\omega (1+\alpha)^{\beta (\varepsilon_1 \vee \underline \varepsilon)+1} 
    \bigg( 1 + \varepsilon \frac{1+\alpha}{\alpha} Q''(\varepsilon) \bigg)
    \erm^{\frac{1+\alpha}{\alpha} Q' ( \varepsilon )} 
    + \beta \big( \beta  \varepsilon^2  Q''(\varepsilon)  - 2 \big)
    \erm^{\beta (Q(\varepsilon)- \varepsilon  Q' ( \varepsilon ))}.
\end{align}
Moreover, the second derivative of $g_2$ with respect to $\eta$ is positive for any $\eta \in \R$, which implies that $F_2$ attains a minimum for $Q$, the solution on $[\varepsilon_1 \vee \underline \varepsilon, 1]$ of ODE \eqref{eq:ODE_2}, if it exists.

\medskip

We can thus conclude that if there is a function $Q \in \Qc(\underline \varepsilon)$ solution to ODE \eqref{eq:ODE_1} on $[\underline \varepsilon, \varepsilon_1 \vee \underline \varepsilon]$ and to ODE \eqref{eq:ODE_2} on $[\varepsilon_1 \vee \underline \varepsilon, 1]$, it maximises the principal's profit for some $\underline \varepsilon \in [0,1]$ and $\underline q$ fixed. 
Combining both ODEs leads to ODE \eqref{eq:ODE}, which proves the theorem. Moreover, using \eqref{eq:profit_moche}, we obtain the form of the principal's profit given in \Cref{cor:pb_principal_SB}.
\end{proof}

\begin{remark}\label{rk:ODE_constraint}
As explained in \textnormal{\Cref{rk:ODE_necessary}}, \textnormal{\Cref{thm:sol_principal_ODE}} only gives a sufficient condition for the principal's optimisation problem. To obtain a necessary condition, one should adapt the previous proof by writing the Euler-Lagrange equation for the problem with constraints. A new \textnormal{ODE} would then be obtained, and the existence of a solution to this \textnormal{ODE} would be equivalent to the existence of an optimal contract. Nevertheless, in the application developed in \textnormal{\Cref{sec:fuel_poverty_TB}}, solving the \textnormal{ODE \eqref{eq:ODE}} is sufficient since its solution has the required regularity.
Moreover, one can prove that the \textnormal{ODE} has a unique solution for $\underline \varepsilon$ bounded away from $0$, which confirms that the numerical scheme converges to the solution of the principal's problem in our application. More precisely, on $[\underline \varepsilon,  \varepsilon_1 \vee \underline \varepsilon)$, the \textnormal{ODE \eqref{eq:ODE}} can be written as a system of two first-order ODEs as follows:
\begin{align*}
\left\{
    \arraycolsep=1.5pt\def\arraystretch{1.8}
    \begin{array}{ll}
    Q'(\varepsilon) = \dfrac{\beta Q(\varepsilon) - R(\varepsilon)}{1+\beta \varepsilon}, \\
    R'(\varepsilon) = \dfrac{1+ \beta \varepsilon}{\varepsilon} \dfrac{1 - 2 \beta \erm^{R(\varepsilon)} / \underline \omega}{1+ \beta^2 \varepsilon \erm^{R(\varepsilon)} / \underline \omega}.
    \end{array}
\right.
\end{align*}
By the Cauchy-Lipschitz theorem, the second \textnormal{ODE} has a unique solution if $\underline \varepsilon \geq c > 0$. This solution is, in particular, bounded with bounded derivatives on the interval considered, which implies that the first \textnormal{ODE} also has a unique solution. The same reasoning can be applied on the interval $[\varepsilon_1 \vee \underline \varepsilon, 1]$.
\end{remark}

\end{appendices}

\end{document}